\setlist{nosep}
\setlist[itemize]{itemsep=2pt,topsep=2pt}
\setlist[enumerate]{itemsep=2pt,topsep=2pt}
\newcounter{IDsec}
\newcommand{\aAssump}{A\arabic{IDsec}}
\newcounter{estsec}
\newcommand{\bAssump}{B\arabic{estsec}}
\newtheoremstyle{theoremSuppressedNumber}{}{}{}{}{\bfseries}{.}{ }{\thmname{#1}\thmnote{ (\mdseries #3)}}
\renewcommand{\theequation}{%
\thesection.\arabic{equation}}
\theoremstyle{definition}
\newtheorem{theorem}{Theorem}[section]
\newtheorem{lemma}{Lemma}[section]
\newtheorem{proposition}{Proposition}[section]
\theoremstyle{definition}	
\newtheorem{definition}{Definition}[section]
\newtheorem{remark}{Remark}[section]
\newtheorem{algorithm}{Algorithm}[section]
\newtheorem{assumption}{Assumption}[section]
\newcommand{\indep}{\perp \! \! \! \perp}
\newcommand{\avg}{\frac{1}{n}\sum_{i=1}^n}
\newcommand{\pconv}{\xrightarrow{p}}
\newcommand{\dconv}{\xrightarrow{d}}
\newcommand{\R}{\mathbb{R}}
\newcommand{\E}{\mathbb{E}}
\newcommand{\amax}{a_{\max}}
\newcommand{\alphamax}{\alpha_{\max}}
\newcommand{\Prob}{\mathbb{P}}
\renewcommand{\P}{\mathbb{P}}
\def\avgt{\frac{1}{T} \sum_{t=1}^T}
\DeclareMathOperator{\var}{var}
\DeclareMathOperator{\cov}{cov}
\newcommand{\1}{\mathbbm{1}}
\newcommand{\supp}{\text{supp}}
\DeclareMathOperator*{\argmax}{arg\,max}
\newcommand{\hMIV}{\widehat{\mathrm{MIV}}}
\newcommand{\hMR}{\widehat{\mathrm{MR}}}
\titlespacing*{\section}{0pt}{0.7\baselineskip}{0.5\baselineskip}
\titlespacing*{\subsection}{0pt}{0.65\baselineskip}{0.45\baselineskip}
\titlespacing*{\subsubsection}{0pt}{0.6\baselineskip}{0.4\baselineskip}
\begin{document}

\title{\textbf{Quantifying the Internal Validity of Weighted Estimands}\footnote{First arXiv draft:~April 22, 2024. This version:~July 28, 2026. This paper was presented at LMU Munich, University of Bonn, Brown University, Brandeis University, University of Pittsburgh, McMaster University, Boston College, University of Oslo, Norwegian School of Economics, Universidad Carlos III de Madrid, University of North Carolina at Chapel Hill, Syracuse University, University of Warwick, University of Oxford, University of Notre Dame, the 2024 Annual Congress of the European Economic Association, the 2024 DC-MD-VA Econometrics Workshop, the 2024 Midwest Econometrics Group Meeting, the 2024 Southern Economic Association Meeting, the 2024 EC$^2$ Conference, the 2024 BU/BC Greenline Econometrics Workshop, the 2025 Econometrics Mini-Conference at the University of Iowa, the 2025 Annual Conference of the International Association for Applied Econometrics, the 2025 Aarhus Workshop in Econometrics, the 2026 North American Winter Meeting of the Econometric Society, the 2026 Annual Meetings of the Society of Labor Economists, and the 2026 Interactions Conference at the University of Chicago. We thank audiences at those seminars and conferences, as well as Young Ahn, St\'ephane Bonhomme, Greg Caetano, Brant Callaway, Kevin Chen, Cl\'ement de Chaisemartin, Juan Carlos Escanciano, Joachim Freyberger, Paul Goldsmith-Pinkham, Christian Hansen, Peter Hull, Shakeeb Khan, Toru Kitagawa, Matt Masten, Tomasz Olma, Guillaume Pouliot, Jonathan Roth, Pedro Sant'Anna, Andres Santos, Alex Torgovitsky, and Daniel Wilhelm for helpful conversations and comments.}}

\author{Alexandre Poirier\thanks{Department of Economics, Georgetown University, \texttt{alexandre.poirier@georgetown.edu}}
\qquad
Tymon S{\l}oczy\'nski\thanks{Department of Economics, Brandeis University, \texttt{tslocz@brandeis.edu}}
}

\date{}

\maketitle

\singlespacing

\newsavebox{\tablebox} \newlength{\tableboxwidth}

\vspace{-0.5cm}

\begin{abstract}
\begin{spacing}{1}
\noindent In this paper we study a class of weighted estimands, which we define as parameters that can be expressed as weighted averages of the underlying heterogeneous treatment effects. The popular ordinary least squares (OLS), two-stage least squares (2SLS), and two-way fixed effects (TWFE) estimands are all special cases within our framework. Our focus is on answering two questions concerning weighted estimands. First, under what conditions can they be interpreted as the average treatment effect for some (possibly latent) subpopulation? Second, when these conditions are satisfied, what is the upper bound on the size of that subpopulation, either in absolute terms or relative to a target population of interest? We argue that this upper bound provides a valuable diagnostic for empirical research. When a given weighted estimand corresponds to the average treatment effect for a small subset of the population of interest, we say its internal validity is low. Our paper develops practical tools to quantify the internal validity of weighted estimands. We also implement these tools in our companion Stata package \texttt{causalrep} and use them to revisit four prominent empirical studies.
\end{spacing}
\end{abstract}

\vspace{-0.6cm}

\begin{small}
\textbf{Keywords}: internal validity, representativeness, treatment effects, linear regression, two-stage least squares, two-way fixed effects, weakly causal estimands, weighted estimands \noindent

\textbf{JEL classification}: C20, C21, C23, C26
\end{small}

\thispagestyle{empty}

\onehalfspacing

\pagebreak

\section{Introduction}\label{sec:intro}

Estimating average treatment effects is a central goal in many areas of empirical research. Even though applied researchers usually believe that treatment effects are heterogeneous, they often favor well-established estimation methods that were not designed with treatment effect heterogeneity in mind. In many cases, these methods lead to estimands that can be represented as weighted averages of the underlying treatment effects of interest. The econometrics literature has highlighted the problems that arise when some of these weights are negative. In this paper, we build a framework to study a different problem that is often overlooked: What happens when these weights are nonuniform, even if they are nonnegative? Uniform, or constant, weights imply that the resulting estimand is equal to the average treatment effect for all units under consideration. By contrast, we show that nonuniform weights can sharply limit the internal validity of an estimand, defined as the size of the largest subpopulation whose average treatment effect is equal to the estimand. We develop diagnostic tools to quantify this loss of internal validity in a given application.

To illustrate this problem, consider a scenario in which unconfoundedness holds given covariates $X$. Let treatment $D$ be binary, let $(Y(1),Y(0))$ be potential outcomes, and let $\tau_0(X) \coloneqq \E[Y(1) - Y(0)\mid X]$ be the conditional average treatment effect (CATE)\@. Suppose the researcher is interested in the average treatment effect (ATE), defined as $\E[Y(1) - Y(0)]$. Many estimators are consistent for the ATE\@. Yet, in practice, it is common to estimate a linear regression of $Y$ on $D$ and $X$\@. Following \cite{Angrist1998}, if we also assume that $\P(D=1 \mid X)$ is linear in $X$, the resulting ordinary least squares (OLS) estimand is equal to
\begin{align}\label{eq:OLS_estimand}
	\beta_\text{OLS} = \E\left[\frac{\var(D\mid X)}{\E[\var(D\mid X)]} \cdot \tau_0(X)\right],
\end{align}
a weighted average of CATEs that generally differs from the ATE\@. The underlying weights, $\var(D \mid X)/\E[\var(D \mid X)]$, are nonnegative. This is encouraging because negative weights may lead the resulting estimand to have the opposite sign from all the CATEs.\footnote{Such an estimand is not \textit{weakly causal}, following the terminology of \cite{BlandholBonneyMogstadTorgovitsky2026}.} We further show that, under unrestricted treatment effect heterogeneity, nonnegativity is necessary and sufficient for a weighted estimand to correspond to the average treatment effect for some subpopulation. At the same time, the weights in equation \eqref{eq:OLS_estimand} need not be uniform, and the largest such subpopulation may be small and nonrepresentative. Whether it is small, however, depends on the data-generating process: it may instead be large or even coincide with the entire population. Our primary aim is thus to quantify the size of this implicit subpopulation, so that applied researchers can distinguish between these possibilities.

More generally, in this paper we study a broad class of \emph{weighted estimands} that can be expressed as follows:
\begin{align}
	\mu(a,\tau_0) &\coloneqq \E[a(X)\tau_0(X) \mid W_0 = 1],\label{eq:weighted_est_def}
\end{align}
where $W_0 \in \{0,1\}$ is an indicator for a subpopulation, $\tau_0(X) = \E[Y(1) - Y(0)\mid W_0 = 1, X]$ is the CATE function given $X$ in the same subpopulation $W_0$, and $a(X)$ is an identified weight function that integrates to 1, i.e., $\E[a(X) \mid W_0 = 1] = 1$. The OLS estimand in equation \eqref{eq:OLS_estimand} belongs to this class, which can be seen by letting $W_0 = 1$ with probability 1, and letting the weight function $a(X)$ be the normalized variance of $D$ given $X$\@. Under standard assumptions, this class also includes the two-stage least squares (2SLS) and two-way fixed effects (TWFE) estimands in instrumental variables and difference-in-differences settings. Here, the leading cases of $W_0$ are compliers in the case of 2SLS and treated units in the case of TWFE\@.

We build our framework by asking under what conditions the weighted estimand in \eqref{eq:weighted_est_def} corresponds to an average treatment effect of the form $\E[Y(1)-Y(0)\mid W^*=1]$, where $W^*\in \{0,1\}$ is an indicator for a (possibly latent) subpopulation of $W_0$. An affirmative answer to this question endows a particular estimand with some degree of validity as a causal parameter. It also allows us to address the primary question that we seek to answer: How can we \emph{quantify} the degree of validity of $\mu(a,\tau_0)$ as a causal parameter?

To do this, we characterize the size, and the size relative to $W_0$, of subpopulations $W^*$ associated with the estimand in \eqref{eq:weighted_est_def}. In other words, we ask how large $\Prob(W^*=1)$ and $\Prob(W^*=1\mid W_0=1)$ can be in the representation $\mu(a,\tau_0) = \E[Y(1) - Y(0)\mid W^* = 1]$. If $\E[Y(1) - Y(0)\mid W_0 = 1]$ is the parameter of interest, we interpret a large value of $\Prob(W^*=1\mid W_0=1)$ as evidence of high \emph{internal validity} of $\mu(a,\tau_0)$ with respect to the target. For brevity, we refer to the largest possible value of $\Prob(W^*=1\mid W_0=1)$, our measure of internal validity, as the MIV\@. Similarly, if the corresponding marginal probability, $\Prob(W^*=1)$, is large, we say that $\mu(a,\tau_0)$ is highly \emph{representative} of the underlying population. We refer to the largest possible value of $\Prob(W^*=1)$, our measure of representativeness, as the MR\@.

The answer to our questions about subpopulation existence and size depends on the information we have about the CATE function, $\tau_0$. When this function is unrestricted, we formally show that $\mu(a,\tau_0)$ can be written as the average treatment effect for a subpopulation of $W_0$ \textit{if and only if} the weights $a(X)$ are nonnegative. The contrapositive of this statement is that the presence of negative weights implies that $\mu(a,\tau_0)$ cannot be represented as an average treatment effect for a subpopulation uniformly in $\tau_0$. This result provides a novel justification for the common requirement that the weights underlying a suitable estimand must be nonnegative. Our main contribution, however, is to quantify the size of the largest implicit subpopulation for which such a representation is possible. In particular, when $\tau_0$ is unrestricted and the weights are nonnegative, we show that the MIV is simply $\amax^{-1}$, where $\amax$ denotes the largest value of the normalized weight $a(X)$.

A low value of $\amax^{-1}$ is therefore a cause for concern. If $\E[Y(1) - Y(0)\mid W_0 = 1]$ is the parameter of interest and $\mu(a,\tau_0)$ corresponds to the average effect for at most a fraction $\amax^{-1}$ of $W_0$, then $\mu(a,\tau_0)$ is directly informative only about that fraction of the target population. The average effect for the remaining fraction, $1-\amax^{-1}$, can drive the gap between the weighted estimand and the target parameter. Our paper derives bounds that formalize this intuition: when CATEs are known only to lie in $[\underline{\tau},\overline{\tau}]$, the width of the sharp bounds on $\E[Y(1) - Y(0)\mid W_0 = 1]$ is equal to $(1-\amax^{-1})(\overline{\tau}-\underline{\tau})$, and these bounds are valid even in the presence of negative weights. Thus, regardless of whether the weights are nonnegative, a low value of $\amax^{-1}$, reflecting highly nonuniform weights, reduces the informativeness of the weighted estimand about the target parameter. At the other extreme, when the weights are uniform, $\amax^{-1}=1$, and the weighted estimand equals the parameter of interest, $\E[Y(1) - Y(0)\mid W_0 = 1]$, for any $\tau_0$. Relatedly, a high value of $\amax^{-1}$ limits how much the covariate distribution can differ between the target population, $W_0$, and the implicit subpopulation, $W^*$.

We also consider the possibility that the CATE function is known. If $\tau_0$ is known, we show that $\mu(a,\tau_0)$ can be written as an average treatment effect whenever it lies in the convex hull of CATE values, a weaker criterion than having nonnegative weights. The MIV and MR now depend on $\tau_0$, and can be obtained via linear programming when $X$ is discrete. We show that this linear program has a closed-form solution even when the support of $X$ includes continuous components, which would usually lead to an infinite-dimensional linear program. If the researcher estimates and compares the MIVs and MRs in the cases where $\tau_0$ is unrestricted and where it is known, they can evaluate the importance of treatment effect heterogeneity for the interpretation of $\mu(a,\tau_0)$ in a given application.

We further specialize our framework to several popular estimands in additive linear models: OLS under unconfoundedness, 2SLS under local average treatment effect (LATE) assumptions, and TWFE under parallel trends and variation in treatment timing. For each, we provide closed-form expressions for the MIV\@. We also propose an analog estimator of $\amax^{-1}$, establish its nonstandard limiting distribution, and provide inference procedures. We compute our diagnostics in four empirical applications, which draw on papers published in the \textit{American Economic Review} and \textit{Quarterly Journal of Economics}. These applications illustrate how our framework can diagnose problems with negative weights and quantify the loss of internal validity caused by highly nonuniform but nonnegative weights.

\subsection*{Implications for Empirical Practice}

Our results have a simple practical implication: we recommend that researchers using weighted estimands report an estimate of $\amax^{-1}$ alongside their main estimates. When weights are nonnegative, $\amax^{-1}$ is exactly the MIV, the largest fraction of the target population $W_0$ for which the estimand can be interpreted as an average treatment effect under unrestricted treatment effect heterogeneity. Thus, small values of $\amax^{-1}$ indicate that the estimand may directly represent only a small and potentially selected subset of the target population. By contrast, large values of $\amax^{-1}$ imply large implicit subpopulations, at least relative to $W_0$. At the same time, if some weights are negative and treatment effect heterogeneity remains unrestricted, the MIV is zero, whereas $\amax^{-1}$ still enters our sharp bounds on the target average effect and bounds the possible fraction of negative weights. For this reason, researchers should report an estimate of $\amax^{-1}$ itself, possibly alongside an estimate of the MIV\@. To facilitate practical use, we implement this diagnostic in our companion Stata package, \texttt{causalrep}.

This diagnostic is practical because it does not use outcome data. It depends only on the weighting structure induced by the research design and right-hand-side specification, so in empirical applications with multiple outcomes, it can be estimated once and reported as a single design-stage diagnostic. When weights are nonnegative, researchers can also characterize the implicit subpopulation $W^*$ by estimating and reporting means of $X$ for units with $W^*=1$ and comparing them to those in the target population $W_0$\@.

\subsection*{Literature Review}

This paper contributes to the literature on the causal interpretation of weighted estimands, including OLS, 2SLS, and TWFE in additive linear models.\footnote{Contributions to this literature include \cite{Angrist1998}, \cite{AronowSamii2016}, \cite{Sloczynski2022}, \cite{ZhaoDing2022}, \cite{Chen2024}, \cite{Goldsmith-PinkhamHullKolesar2024}, \cite{Humphreys2025}, and \cite{BugniCanayMcBride2026} for OLS; \cite{ImbensAngrist1994}, \cite{AngristImbens1995}, \cite{Kolesar2013}, \cite{BlandholBonneyMogstadTorgovitsky2026}, and \cite{Sloczynski2026} for 2SLS; and \cite{ChaisemartinDHaultfoeuille2020}, \cite{Goodman-Bacon2021}, \cite{SunAbraham2021}, \cite{AI2022}, \cite{BJS2024}, \cite{CaetanoCallaway2024}, and \cite{CallawayGoodman-BaconSantAnna2024} for TWFE\@.} A common view in this literature, which goes back at least to \cite{ImbensAngrist1994}, is that causal interpretability of weighted estimands requires all weights to be nonnegative. \cite{BlandholBonneyMogstadTorgovitsky2026} show that the absence of negative weights and level dependence is necessary and sufficient for an estimand to be ``weakly causal,'' that is, to guarantee sign preservation when all treatment effects have the same sign. In this paper, we study the related problem of whether a weighted estimand can be written as the average treatment effect over a subpopulation. While the absence of negative weights is essential for subpopulation existence, nonuniform weights may sharply reduce its size. This relates to the skeptical view of both negative and nonuniform weights in \cite{CallawayGoodman-BaconSantAnna2024}.

Some papers focus on weighted averages of heterogeneous treatment effects as legitimate targets rather than merely as probability limits of existing estimators. \cite{HIR2003} introduce the class of weighted average treatment effects, a subclass of the more general class in \eqref{eq:weighted_est_def}. \cite{LMZ2018} discuss the connection between weighted average treatment effects and implicit subpopulations. However, the internal validity and representativeness of weighted estimands have received little attention to date.

One exception is work by \cite{Chaisemartin2012,Chaisemartin2017} on the interpretation of the instrumental variables (IV) estimand. First, \citet[][Section 3]{Chaisemartin2012} studies whether the local average treatment effect (LATE) can be extrapolated beyond compliers and derives bounds on the size of the largest subpopulation to which it applies. The sharp lower bound equals the complier share, so the LATE may apply only to compliers. The sharp upper bound can be below one, which in some applications rules out extrapolation to the entire population. Unlike in our paper, however, the maximum size of this subpopulation is not point identified. Second, when monotonicity fails, \cite{Chaisemartin2012,Chaisemartin2017} uses a restriction on the CATE function to reinterpret the IV estimand as the average treatment effect for a subset of compliers. In our framework, this is an existence result in an intermediate case between the setting in which $\tau_0$ is unrestricted and the setting in which it is fixed.

Finally, there is precedent for prioritizing larger implicit subpopulations. \cite{MT2024} argue that ``[t]arget parameters that reflect larger subpopulations of the population of interest are more interesting than those that reflect smaller and more specific subpopulations.'' In a setting with multiple instrumental variables, \cite{HLM2025} suggest that the largest subpopulation of compliers is generally more interesting than other complier subpopulations.

\subsection*{Plan of the Paper and Notation}

We organize the paper as follows. Section \ref{sec:example1} briefly discusses our motivating example of the OLS estimand. Section \ref{sec:causal_represent} develops our theoretical framework and examines the conditions under which the estimand in \eqref{eq:weighted_est_def} has a causal representation as an average treatment effect over a subpopulation. Section \ref{sec:quantifying_int_valid} establishes our main results on subpopulation size. Section~\ref{sec:diagnostics_bounds} develops the corresponding diagnostics. Section \ref{sec:examples2} applies our results to OLS, 2SLS, and TWFE estimands. Sections \ref{sec:estimation} and \ref{sec:empirical} discuss estimation and inference and present our empirical applications. Section \ref{sec:conclusion} concludes. The appendices contain our proofs and additional results.

We use $\supp(\cdot)$ to denote the support of a random vector and $\text{conv}(\cdot)$ to denote the convex hull of a set. For brevity, we denote some expectations, probabilities, variances, and standard deviations under the conditional distribution given $W_0 = 1$ by $\E_{W_0}[\cdot]$, $\P_{W_0}(\cdot)$, $\var_{W_0}(\cdot)$, and $\text{SD}_{W_0}(\cdot)$, respectively.

\section{Motivating Example}\label{sec:example1}

Here we provide further discussion of the OLS estimand. We postpone the discussion of the 2SLS and TWFE estimands to Section \ref{sec:examples2}. In the initial example, we have a binary treatment $D \in \{0,1\}$, potential outcomes $(Y(1),Y(0))$, covariate vector $X$, and realized outcome $Y = Y(D)$. We make the following assumption.

\begin{assumption}[Unconfoundedness]\label{assn:CIA}\hfill
\begin{enumerate}
	\item Selection on observables: $(Y(1),Y(0)) \indep D\mid X$;
	\item Overlap: $p(X) \coloneqq \P(D = 1 \mid X) \in (0,1)$ almost surely.
\end{enumerate}
\end{assumption}

\noindent
Following \cite{Angrist1998}, we can establish that $\beta_\text{OLS}$, the coefficient on $D$ in the linear projection of $Y$ on $(1,D,X)$, satisfies the representation in \eqref{eq:weighted_est_def} under a restriction on the propensity score. The following proposition summarizes \citeauthor{Angrist1998}'s \citeyearpar{Angrist1998} result.

\begin{proposition}
\label{prop:angrist1998}
	Suppose Assumption \ref{assn:CIA} holds. Suppose $p(X)$ is linear in $X$\@. Then 
	\begin{align*}
		\beta_\text{OLS} &=  \E\left[\frac{\var(D \mid X)}{\E[\var(D \mid X)]} \cdot \E[Y(1) - Y(0)\mid X]\right]. 
	\end{align*}
\end{proposition}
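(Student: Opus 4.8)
The plan is to invoke the Frisch--Waugh--Lovell (FWL) partialling-out argument, which reduces $\beta_\text{OLS}$ to a bivariate projection coefficient, and then to exploit the linearity of the propensity score to obtain an exact expression for the treatment residual. First I would write $\beta_\text{OLS} = \cov(Y,\tilde{D})/\var(\tilde{D})$, where $\tilde{D}$ is the residual from the linear projection of $D$ on a constant and $X$. The crucial observation is that, because $p(X) = \E[D\mid X]$ is assumed linear in $X$, this linear projection coincides with the conditional expectation, so $\tilde{D} = D - p(X)$ exactly. This identity is the one place where the linearity hypothesis does essential work; without it, $\tilde{D}$ would only be a linear approximation to $D - p(X)$.

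For the denominator, I would use $\E[\tilde{D}] = 0$ and the law of iterated expectations to write $\var(\tilde{D}) = \E[(D - p(X))^2] = \E[\var(D\mid X)]$, and then use that $D$ is binary to conclude $\var(D\mid X) = p(X)(1 - p(X))$. This immediately delivers the denominator $\E[p(X)(1 - p(X))]$.

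For the numerator, I would write $\cov(Y,\tilde{D}) = \E[Y(D - p(X))]$ and substitute $Y = D\cdot Y(1) + (1-D)\cdot Y(0)$. Expanding the product and using $D^2 = D$ together with $(1-D)D = 0$ collapses it to $D(1 - p(X))Y(1) - (1-D)p(X)Y(0)$. Conditioning on $X$ and applying conditional independence $(Y(1),Y(0)) \indep D\mid X$ factors each term into a product of conditional means, e.g.\ $\E[D Y(1)\mid X] = p(X)\E[Y(1)\mid X]$. Both resulting terms then share the common factor $p(X)(1 - p(X))$ and combine into $p(X)(1 - p(X))\E[Y(1) - Y(0)\mid X]$. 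Taking expectations over $X$ yields the numerator and completes the derivation.

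The main obstacle is the numerator bookkeeping: one must apply the binary structure of $D$ (to eliminate the cross terms) \emph{before} invoking conditional independence, and then verify that the factor $p(X)(1 - p(X))$ emerges symmetrically from both the $Y(1)$ and the $Y(0)$ pieces so that the CATE $\E[Y(1) - Y(0)\mid X]$ appears cleanly. Everything else is routine once the FWL reduction and the exact identity $\tilde{D} = D - p(X)$ are in place.
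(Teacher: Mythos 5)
Your derivation is correct: the FWL reduction $\beta_\text{OLS} = \E[\tilde{D}Y]/\E[\tilde{D}^2]$, the exact identity $\tilde{D} = D - p(X)$ delivered by linearity of the propensity score, the collapse of the numerator to $D(1-p(X))Y(1) - (1-D)p(X)Y(0)$ via $D^2 = D$, and the final factoring under conditional independence all go through, with overlap guaranteeing $\E[p(X)(1-p(X))] > 0$ so the ratio is well defined. The paper itself states this proposition by citation to \cite{Angrist1998} without reproducing a proof, and your argument is precisely the standard partialling-out derivation underlying that reference, so your proposal matches the intended proof.
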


\noindent
The linearity assumption can be removed by instead regressing $Y$ on $(1,D,h(X))$, where $h(X)$ is a vector of functions of $X$ whose linear span contains $p(X)$\@. Overlap is not required for this representation to hold: the assumption that $\P(p(X) \in (0,1)) > 0$ is sufficient.

Proposition \ref{prop:angrist1998} implies that $\beta_\text{OLS}$ is a weighted estimand that satisfies the representation in \eqref{eq:weighted_est_def}, with $a(X) = \var(D \mid X)/\E[\var(D \mid X)]$ and $\tau_0(X) = \E[Y(1) - Y(0)\mid X]$. Here we implicitly set $W_0 = 1$ with probability 1. Thus, the regression coefficient $\beta_\text{OLS}$ is a weighted average of CATEs with weights proportional to $p(X)(1-p(X))$. Note that $\beta_\text{OLS} = \text{ATE} \coloneqq \E[Y(1) - Y(0)]$ if and only if $a(X)$ and $\tau_0(X)$ are uncorrelated, which is the case, for example, when $p(X)$ or $\tau_0(X)$ are constant.

An alternative representation of $\beta_\text{OLS}$ follows by focusing on the subpopulation of treated units, $D=1$. Let $W_0 = D$, $a(X) = (1-p(X)) / (1 - \E[p(X) \mid D = 1])$, and $\tau_0(X) = \E[Y(1) - Y(0)\mid D=1, X] = \E[Y(1) - Y(0)\mid X]$, which is implied by unconfoundedness. Then, we can write
\begin{align*}
	\beta_\text{OLS} &= \E\left[\frac{1-p(X)}{1 - \E[p(X) \mid D = 1]} \cdot \E[Y(1) - Y(0)\mid D=1, X] \mid D = 1\right].
\end{align*}
Yet another representation can be obtained when focusing on the subpopulation of untreated units by letting $W_0 = 1-D$. We omit details for brevity.

We revisit this example in Section \ref{sec:examples2} after establishing conditions under which weighted estimands admit a causal representation (Section \ref{sec:causal_represent}), identifying the sizes of the resulting implicit subpopulations (Section \ref{sec:quantifying_int_valid}), and discussing the associated diagnostics (Section \ref{sec:diagnostics_bounds}).

\section{Causal Representation of Weighted Estimands}\label{sec:causal_represent}

In this section, we consider a general class of weighted estimands. We derive necessary and sufficient conditions for an estimand in this class to have a causal representation as an average treatment effect over a subpopulation. We provide these conditions under various restrictions on treatment effect heterogeneity, including the case of no restrictions.

\subsection{Preliminaries}

Recall the earlier setting where we let $D \in \{0,1\}$ denote a treatment variable, and let $(Y(0),Y(1))$ denote the corresponding potential outcomes. Let $X \in \supp(X) \subseteq \R^{d_X}$ denote a $d_X$-vector of covariates. We suppose that $(Y(1),Y(0),D,X)$ are drawn from a common population distribution $F_{Y(1),Y(0),D,X}$.

Let $W_0 \in \{0,1\}$ be an indicator variable used to denote a subpopulation $\{W_0 = 1\}$ and let $\tau_0(X) = \E_{W_0}[Y(1) - Y(0)\mid X]$ denote the CATE given $X$ in that subpopulation. For example, this subpopulation can be the entire population by setting $W_0 = 1$ almost surely, in which case $\tau_0$ denotes the usual CATE function. It can also denote the subpopulation of treated units by setting $W_0 = D$\@. In the presence of a binary instrument $Z$, the complier subpopulation is defined by setting $W_0 = \1(D(1) > D(0))$, where $D(1)$ and $D(0)$ are potential treatments. In this case, $\tau_0$ denotes the conditional local average treatment effect.

Note that $\tau_0$ is defined for all values of $X$ such that $w_0(X) = \Prob(W_0=1\mid X) > 0$.\footnote{While $\tau_0(X)$ is only defined when $w_0(X) > 0$, we set $\tau_0(X)w_0(X) = 0$ when $w_0(X) = 0$.} Throughout this paper, we assume $\P(W_0 = 1) > 0$, so that this subpopulation has a positive mass, which avoids technical issues associated with conditioning on zero-probability events.

Recall the weighted estimands of equation \eqref{eq:weighted_est_def}:
\begin{align*}
	\mu(a,\tau_0) &=\E_{W_0}[a(X)\tau_0(X)].
\end{align*}
The estimands we consider have the above representation and satisfy the following regularity condition to ensure they are well defined.
\begin{assumption}\label{assn:reg}
Suppose $\E[\tau_0(X)^2]<\infty$ and $\E[a(X)^2]<\infty$. Also suppose $\E_{W_0}[a(X)] = 1$.
\end{assumption}

\noindent
The estimands in \eqref{eq:weighted_est_def} can also be written as a weighted sum when $X$ is discrete or an integral when it is continuous. In the discrete case, let $\supp(X) = \{x_1,\ldots,x_K\}$ and $p_k \coloneqq \Prob(X=x_k) > 0$ for $k = 1,\ldots,K$, and assume $W_0 = 1$ almost surely for simplicity. Then, 
\begin{align}\label{eq:weighted_est_def2}
	\mu(a,\tau_0) &=  \sum_{k=1}^K \omega_k \tau_0(x_k) \qquad \text{ where } \qquad \omega_k = a(x_k)p_k \quad \text{and} \quad \sum_{k=1}^K \omega_k = 1.
\end{align}
Thus, the weights $\omega_k$ are the normalized weights $a(x_k)$ adjusted by the population shares $p_k$. Moreover, $\frac{a(x_k)}{a(x_{k'})} = \frac{\omega_k}{\omega_{k'}} \big/ \frac{p_k}{p_{k'}}$, so $a(x_k)>a(x_{k'})$ means that cell $\{X = x_k\}$ is overweighted by the estimand relative to $\{X = x_{k'}\}$, compared with their population shares. Similar algebra can express the estimand as an integral when $X$ is continuously distributed. We focus on the representation in \eqref{eq:weighted_est_def} since it accommodates discrete, continuous, and mixed covariates.

\subsection{Regular Subpopulations}\label{subsec:subpops}

The first question we address is whether an estimand defined by \eqref{eq:weighted_est_def} can be represented as $\E[Y(1) - Y(0) \mid W^* = 1]$, where $W^* \in \{0,1\}$ and $\{W^* = 1\}$ characterizes a subpopulation of $\{W_0 = 1\}$ in the sense that $\{W^*= 1\} \subseteq \{W_0 = 1\}$  almost surely. 

We impose some structure on this problem by restricting how these subpopulations may be formed. We only consider what we call ``regular subpopulations,'' defined here.
\begin{definition}\label{def:subpop}
	Let $W^* \in \{0,1\}$ such that $\Prob(W^* = 1) > 0$. Say $\{W^* = 1\}$ is a \textit{regular subpopulation} of $\{W_0 = 1\}$ if
\begin{enumerate}
	\item (Inclusion) $\Prob(W_0 = 1\mid W^* = 1) = 1$;
	\item (Conditional independence) $W^* \indep (Y(1),Y(0))\mid X,W_0 = 1$.
\end{enumerate}
\end{definition}
\noindent
For convenience, we will abbreviate this as ``$W^*$ is a regular subpopulation of $W_0$''. We denote the set of regular subpopulations of $W_0$ as 
\begin{align*}
	\text{SP}(W_0) &\coloneqq \{W^* \in \{0,1\}: W^* \text{ is a regular subpopulation of } W_0\}.
\end{align*}
These subpopulations have positive masses and are subsets of $\{W_0 = 1\}$. The other substantive requirement is that they are independent of potential outcomes when conditioning on $X$ and the original subpopulation $W_0 = 1$. While this may seem restrictive, it allows for rich and natural classes of subpopulations. For example, consider the unconfoundedness assumption of Section \ref{sec:example1} and let $W_0$ be the entire population, i.e., $\Prob(W_0 =1) = 1$. In this case, regular subpopulations must satisfy $W^* \indep (Y(1),Y(0))\mid X$, or be unconfounded. Regular subpopulations include the population of all treated (or untreated) individuals, i.e., $W^* = D$ (or $W^* = 1-D$), and any subpopulation characterized by a subset of $\supp(X)$. More generally, they include any subpopulation that can be described through a combination of $(D,X,U)$ where $U$ is independent from $(Y(1),Y(0),D,X)$. For example, a subpopulation described by ``a fraction $q(x)$ of units with covariate $X=x$ for all $x \in \supp(X)$'' can be constructed as $W^* = \1(U \leq q(X))$ where $U \sim \text{Unif}(0,1)$ is independent from $(Y(1),Y(0),X)$. 

The conditional independence requirement rules out subpopulations that directly depend on the potential outcomes such as $W^* = \1(Y(1) \geq Y(0))$, i.e., the subpopulation of those who benefit from treatment. Note that $\Prob(W^*=1\mid X) = \Prob(Y(1) \geq Y(0)\mid X)$ and $\Prob(W^*=1) = \Prob(Y(1) \geq Y(0))$ are not point identified under unconfoundedness. Another way to view this requirement is that regular subpopulations are policy relevant in the sense that we could design a policy that targets a regular subpopulation. Indeed, a policy maker may observe $X$ and can use $U$ to randomly target a fraction of units with specific values of $X$, but does not observe the pair $(Y(1),Y(0))$ for any unit.

Regular subpopulations enjoy a number of useful properties. Two of them are characterized in the following proposition. 

\begin{proposition}[Properties of Regular Subpopulations]\label{prop:subpops_properties}
	Suppose that $W^* \in \text{SP}(W_0)$. Suppose Assumption \ref{assn:reg} holds. Let $\underline{w}^*(x) \coloneqq \Prob_{W_0}(W^* = 1\mid X = x)$. Then,
	\begin{enumerate}
		\item $\E[Y(1) - Y(0)\mid W^* = 1, X] = \E_{W_0}[Y(1) - Y(0)\mid X]$ when $\underline{w}^*(X) > 0$;
		\item $\E[Y(1) - Y(0) \mid W^* = 1] = \mu(\underline{w}^*/\E_{W_0}[\underline{w}^*(X)],\tau_0)$.
	\end{enumerate}
\end{proposition}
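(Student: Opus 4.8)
The plan is to establish the two parts in sequence, with the first feeding directly into the second.

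For part 1, the key observation is that the Inclusion condition $\Prob(W_0 = 1 \mid W^* = 1) = 1$ means the event $\{W^* = 1\}$ coincides, up to a null set, with $\{W^* = 1, W_0 = 1\}$. Hence on the set where $\underline{w}^*(X) > 0$ --- which is exactly what guarantees that the conditioning event $\{W^* = 1, X = x\}$ carries positive mass, so that the conditional expectation is well-defined --- I would write $\E[Y(1) - Y(0)\mid W^* = 1, X] = \E[Y(1) - Y(0)\mid W^* = 1, W_0 = 1, X]$. The Conditional Independence condition $W^* \indep (Y(1),Y(0)) \mid X, W_0 = 1$ then lets me drop the conditioning on $W^* = 1$, yielding $\E[Y(1) - Y(0)\mid W_0 = 1, X] = \tau_0(X)$. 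This is the entire content of part 1.

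For part 2, I would start from the law of iterated expectations, $\E[Y(1) - Y(0)\mid W^* = 1] = \E\big[\E[Y(1) - Y(0)\mid W^* = 1, X]\,\big|\,W^* = 1\big]$, and substitute the identity from part 1 --- which applies because $\underline{w}^*(X) > 0$ almost surely on $\{W^* = 1\}$ --- to reduce this to $\E[\tau_0(X)\mid W^* = 1]$. The remaining work is a change-of-measure computation. I would write $\E[\tau_0(X)\mid W^* = 1] = \E[\tau_0(X)\,\Prob(W^* = 1\mid X)]/\Prob(W^* = 1)$ via iterated expectations, then decompose the propensity weight using Inclusion as $\Prob(W^* = 1\mid X) = \Prob(W^* = 1\mid W_0 = 1, X)\,\Prob(W_0 = 1\mid X) = \underline{w}^*(X)\,w_0(X)$. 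Substituting this into both numerator and denominator gives $\E[\tau_0(X)\underline{w}^*(X)w_0(X)]/\E[\underline{w}^*(X)w_0(X)]$, which is precisely $\mu(\underline{w}^*,\tau_0)$ by the definition in \eqref{eq:weighted_est_def}; the further equality to the $\mid W_0 = 1$ form follows from $\E[g(X)w_0(X)] = \Prob(W_0 = 1)\,\E[g(X)\mid W_0 = 1]$, where the common factor $\Prob(W_0 = 1) > 0$ cancels between numerator and denominator.

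The individual steps are routine, so the main thing to get right is the bookkeeping around conditioning events of potential measure zero. The one genuine subtlety I would flag explicitly is that part 1 holds only where $\underline{w}^*(X) > 0$, but this restriction is harmless for part 2: units with $\underline{w}^*(X) = 0$ contribute no mass to $\{W^* = 1\}$, so replacing the inner conditional expectation by $\tau_0(X)$ under the $\{W^* = 1\}$ average introduces no error. Throughout, Assumption \ref{assn:reg} together with $\Prob(W_0 = 1) > 0$ and $\Prob(W^* = 1) > 0$ ensures that every displayed ratio is finite and well-defined.
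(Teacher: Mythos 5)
Your proof is correct and follows essentially the same route as the paper's: part 1 combines Inclusion with the conditional-independence requirement to identify $\E[Y(1)-Y(0)\mid W^*=1,X]$ with $\tau_0(X)$, and part 2 is the same iterated-expectations computation, differing only in bookkeeping --- you work under the unconditional measure and factorize $\Prob(W^*=1\mid X)=\underline{w}^*(X)\,w_0(X)$, whereas the paper conditions on $W_0=1$ first via $W^*\leq W_0$; these are interchangeable. Your explicit remark that $\underline{w}^*(X)>0$ almost surely on $\{W^*=1\}$, so the part-1 restriction is harmless in part 2, is a detail the paper leaves implicit, and it is handled correctly.
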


\noindent
The first part of this proposition shows that average treatment effects within the original population $W_0$ and regular subpopulation $W^*$ are the same when conditioning on $X$. For example, this holds under unconfoundedness for the subpopulation of treated units, $W^* = D$. The second part of this proposition allows us to write the average treatment effect for $W^* = 1$ using the same functional $\mu(\cdot,\cdot)$ that was used to characterize the class of estimands we analyze. This property will be used when studying the mapping between weighted estimands and average treatment effects for regular subpopulations of $W_0$.

\subsection{Existence of a Causal Representation for Weighted Estimands}

We now consider necessary and sufficient conditions for the weighted estimand $\mu(a,\tau_0)$ to be written as the average treatment effect within a regular subpopulation of $W_0$. As we show, these depend on what is assumed about the function $\tau_0(\cdot) = \E_{W_0}[Y(1) - Y(0)\mid X = \cdot]$. 

For example, if $\tau_0$ is constant in $X$, then any weighted estimand satisfying \eqref{eq:weighted_est_def} equals $\E_{W_0}[Y(1) - Y(0)]$, the average treatment effect in the population $\{W_0 = 1\}$. This is the case even when the weight function $a(X)$ changes sign with $X$. However, if $\tau_0$ is nonconstant, the existence of causal representations will depend on the weight function $a(X)$. Among other cases, we will consider one in which no restrictions are placed on $\tau_0$. In this setting, the existence of a causal representation of $\mu(a,\tau_0)$ will require the sign of $a(X)$ to be constant.

To formalize this, let $\mathcal{T}$ denote a class of functions such that $\tau_0 \in \mathcal{T}$. Recall from Proposition \ref{prop:subpops_properties} that $\E[Y(1) - Y(0) \mid W^* = 1] = \mu(\underline{w}^*/\E_{W_0}[\underline{w}^*(X)],\tau_0)$ and define
\begin{align*}
	\mathcal{W}(a;W_0,\mathcal{T}) &\coloneqq \{W^* \in \text{SP}(W_0): \mu(a,\tau) = \mu(\underline{w}^*/\E_{W_0}[\underline{w}^*(X)],\tau) \text{ for all } \tau \in \mathcal{T}\}.
\end{align*}
This is the set of regular subpopulations of $W_0$ such that $\mu(a,\tau) = \mu(\underline{w}^*/\E_{W_0}[\underline{w}^*(X)],\tau)$ for every CATE function $\tau \in \mathcal{T}$\@. If the set $\mathcal{W}(a;W_0,\mathcal{T})$ is empty, then no regular subpopulation of $W_0$ provides such a representation uniformly in $\tau \in \mathcal{T}$\@. We use this set to formally define a notion of uniform causal representation.

\begin{definition} \label{def:caus_rep} An estimand $\mu(a,\tau)$ has a \textit{causal representation uniformly in} $\tau \in \mathcal{T}$ if
\begin{align*}
	\mathcal{W}(a;W_0,\mathcal{T}) \neq \emptyset.
\end{align*}
\end{definition}

\noindent
We now examine several cases for the set $\mathcal{T}$.

\subsubsection{Existence Uniformly in $\tau_0$}

We begin by considering the largest class of functions in which $\tau_0$ lies: the class of all functions, subject to the moment condition in Assumption \ref{assn:reg} that ensures the existence of $\mu(a,\tau_0)$. We denote this class by
\begin{align*}
	\mathcal{T}_\text{all} &\coloneqq \{\tau: \E[\tau(X)^2] < \infty\}.
\end{align*}
In this function class, we show the existence of a causal representation is equivalent to the weights $a(X)$ being nonnegative. In what follows, let $\amax \coloneqq \sup(\supp(a(X)\mid W_0=1))$ be the essential supremum of $a(X)$ given $W_0=1$.

\begin{theorem}\label{thm:existence1}
	Let $\mu(a,\tau_0)$ be an estimand satisfying equation \eqref{eq:weighted_est_def}. Suppose Assumption~\ref{assn:reg} holds and that $\amax < \infty$.\footnote{The condition $\amax < \infty$ restricts attention to subpopulations with positive mass. It holds in each of our theoretical examples in Sections \ref{sec:example1} and \ref{sec:examples2}, where we study the OLS, 2SLS, and TWFE estimands.} Then, $\mu(a,\tau_0)$ has a causal representation uniformly in $\tau_0 \in \mathcal{T}_\text{all}$ if and only if
\vspace*{-8pt}
\begin{align*}	
	\Prob_{W_0}(a(X) \geq 0) = 1.
\end{align*}
\end{theorem}

\noindent
A uniform (in $\mathcal{T}_\text{all}$) causal representation exists if and only if $a(X)$ is nonnegative when $W_0 = 1$. To see why $a(X)$ must be nonnegative with probability 1, suppose that $\P_{W_0}(a(X) < 0) > 0$ and consider the ``adversarial'' CATE function $\tau^-(X) \coloneqq \1(a(X) < 0)$. This CATE function is nonnegative for all $X$, and implies a positive average effect only for units with negative weights. Yet it yields a strictly negative weighted estimand, $\mu(a,\tau^-) = \E_{W_0}[a(X)\1(a(X) < 0)] < 0$. Thus, $\mu(a,\tau^-) $ cannot be the average treatment effect for any subpopulation of $W_0$, since averaging a nonnegative CATE over any subpopulation cannot yield a negative average.

Conversely, if $a(X) \geq 0$, our proof constructively defines a regular subpopulation $W^*$ whose average treatment effect equals the estimand $\mu(a,\tau_0)$ uniformly in $\tau_0 \in \mathcal{T}_\text{all}$. Let
\begin{align}
	W^* = \1\left(U \leq \frac{a(X)}{\amax}\right)\cdot W_0, \label{eq:rep_subpop_constr}
\end{align}
where $U \sim \text{Unif}(0,1) \indep (Y(1),Y(0),X,W_0)$. This is a regular subpopulation of $W_0$ for which the probability of inclusion, conditional on $X$ and $W_0=1$, is proportional to $a(X)$. We can also interpret $\mu(a,\tau_0)$ as the average effect of an intervention in which units with covariate value $X$ are treated with probability $a(X)/\amax$ given $W_0 = 1$. From this construction, we can see that $\underline{w}^*(X) = \Prob_{W_0}(W^*=1\mid X)$ is proportional to $a(X)$, which implies that $\E[Y(1) - Y(0)\mid W^*=1] = \mu(\underline{w}^*/\E_{W_0}[\underline{w}^*(X)],\tau_0) = \mu(a,\tau_0)$ uniformly in $\tau_0 \in \mathcal{T}_\text{all}$.

As noted earlier, the condition that the weights are nonnegative is well established. \cite{BlandholBonneyMogstadTorgovitsky2026} show that it is equivalent to an estimand being ``weakly causal,'' meaning that it matches the sign of $\tau_0$ whenever that sign is the same across all units. Thus, in the class of weighted estimands we consider, estimands have a causal representation uniformly in $\mathcal{T}_\text{all}$ if and only if they are weakly causal. Appendix \ref{appsec:weaklycausal} formally establishes this connection.

\subsubsection{Existence for a Given $\tau_0$}

We now provide an existence result that requires the causal representation to exist only for the \textit{given} $\tau_0$, rather than uniformly for $\tau_0$ in the larger set $\mathcal{T}_\text{all}$. The following result depends on the CATE function $\tau_0$ in the population, whereas the condition in Theorem \ref{thm:existence1} depended only on the weight function $a(X)$. Thus, the distribution of the potential outcomes will have an impact on the existence of a causal representation given $\tau_0$. Using the notation from Definition \ref{def:caus_rep}, a causal representation exists if and only if $\mathcal{W}(a;W_0,\{\tau_0\}) \neq \emptyset$.

\begin{theorem}\label{thm:existence2}
Let $\mu(a,\tau_0)$ be an estimand satisfying equation \eqref{eq:weighted_est_def}. Suppose Assumption~\ref{assn:reg} holds.  Then, $\mu(a,\tau_0)$ has a causal representation for $\tau_0$ \mbox{if and only if}
\begin{align*}
	\mu(a,\tau_0) \in \mathcal{S}(\tau_0;W_0)\coloneqq  \{t \in \R: \; &\Prob_{W_0}(\tau_0(X) \leq t) > 0 \text{ and } \Prob_{W_0}(\tau_0(X)\geq t) > 0\}.
\end{align*}
\end{theorem}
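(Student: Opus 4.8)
My plan is to recast the existence of a causal representation as a pure reweighting problem and then identify the set of attainable reweighted averages of $\tau_0$ with the set $\mathcal{S}(\tau_0;W_0)$. Write $t \coloneqq \mu(a,\tau_0)$. By Proposition \ref{prop:subpops_properties}, every $W^* \in \text{SP}(W_0)$ satisfies $\E[Y(1)-Y(0)\mid W^*=1] = \mu(\underline{w}^*,\tau_0)$ with $\underline{w}^*(x) = \Prob(W^*=1\mid X=x,W_0=1) \in [0,1]$ and $\E[\underline{w}^*(X)\mid W_0=1]>0$. Conversely, given any measurable $g:\supp(X)\to[0,1]$ with $\E[g(X)\mid W_0=1]>0$, the randomized rule $W^* = \1(U \leq g(X))\cdot W_0$ with $U\sim\text{Unif}(0,1) \indep (Y(1),Y(0),X,W_0)$ lies in $\text{SP}(W_0)$ and has $\underline{w}^* = g$. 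Hence $\mathcal{W}(a;W_0,\{\tau_0\})\neq\emptyset$ if and only if there is such a $g$ with $\mu(g,\tau_0)=t$. Since $\mu(g,\tau_0)$ is invariant to positive rescaling of $g$, the constraint $g\le 1$ is immaterial, and I may equivalently search over all nonnegative $g$ that are not almost surely zero given $W_0=1$. The theorem thus reduces to showing that the attainable set $\{\mu(g,\tau_0): g\ge 0,\ \E[g(X)\mid W_0=1]>0\}$ equals $\mathcal{S}(\tau_0;W_0)$.

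For necessity, suppose $t=\mu(g,\tau_0)$ for some admissible $g$, so that $\E[g(X)(\tau_0(X)-t)\mid W_0=1]=0$. If $\Prob(\tau_0(X)\ge t\mid W_0=1)=0$, then $\tau_0-t<0$ almost surely on $\{W_0=1\}$, so $g\cdot(\tau_0-t)\le 0$ everywhere and is strictly negative on $\{g>0\}$, a set of positive probability because $\E[g\mid W_0=1]>0$; this forces $\E[g(\tau_0-t)\mid W_0=1]<0$, a contradiction. Hence $\Prob(\tau_0(X)\ge t\mid W_0=1)>0$, and the symmetric argument gives $\Prob(\tau_0(X)\le t\mid W_0=1)>0$, so $t\in\mathcal{S}(\tau_0;W_0)$.

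For sufficiency, fix $t\in\mathcal{S}(\tau_0;W_0)$ and construct an admissible $g$ with $\mu(g,\tau_0)=t$. If $\Prob(\tau_0(X)=t\mid W_0=1)>0$, take $g=\1(\tau_0(X)=t)$, which is nonnegative, not almost surely zero, and yields $\mu(g,\tau_0)=t$ immediately. Otherwise $\Prob(\tau_0(X)=t\mid W_0=1)=0$, so membership in $\mathcal{S}$ gives $\Prob(\tau_0(X)<t\mid W_0=1)>0$ and $\Prob(\tau_0(X)>t\mid W_0=1)>0$. Set $m_-=\E[(\tau_0(X)-t)\1(\tau_0(X)<t)\mid W_0=1]<0$ and $m_+=\E[(\tau_0(X)-t)\1(\tau_0(X)>t)\mid W_0=1]>0$, both finite since $\E[\tau_0(X)^2]<\infty$, and take $g=m_+\1(\tau_0(X)<t)+(-m_-)\1(\tau_0(X)>t)$. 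Both coefficients are strictly positive, so $g\ge 0$ and $\E[g\mid W_0=1]>0$, while $\E[g(\tau_0-t)\mid W_0=1]=m_+m_-+(-m_-)m_+=0$, i.e., $\mu(g,\tau_0)=t$. Rescaling $g$ into $[0,1]$ and feeding it through the randomization above produces a regular subpopulation $W^*$ with $\E[Y(1)-Y(0)\mid W^*=1]=t$, completing this direction.

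The substantive content is the clean correspondence between the convex hull of the essential range of $\tau_0$ and the attainable weighted averages; the main thing to get right is the treatment of the endpoints, which is exactly why $\mathcal{S}$ is phrased with weak inequalities. The case split on whether $t$ is an atom of the conditional law of $\tau_0(X)$ handles this automatically: an endpoint of the range is attainable as a weighted average precisely when it carries positive mass, and the definition of $\mathcal{S}$ encodes this through $\Prob(\tau_0\le t\mid W_0=1)>0$ together with $\Prob(\tau_0\ge t\mid W_0=1)>0$. I would double-check only the routine measurability and finite-moment bookkeeping (using $\E[\tau_0(X)^2]<\infty$) and the scale-invariance step that lets me drop the upper bound on $g$.
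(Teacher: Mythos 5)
Your proof is correct, and while its necessity half matches the paper's argument, its sufficiency half takes a genuinely different constructive route. On necessity, the paper likewise reduces to $\mu_0 = \mu(\underline{w}^*,\tau_0)$ via Proposition \ref{prop:subpops_properties} and derives a contradiction from $\E[\underline{w}^*(X)(\tau_0(X)-\mu_0)\mid W_0=1]=0$ when $\tau_0-\mu_0$ has an almost-sure strict sign; your sign argument on $\{g>0\}$ is the same idea in slightly different clothing. On sufficiency, the paper builds the one-parameter family $W^*(u) = (\1(U>u,\,X\in\mathcal{X}^-) + \1(U\leq u,\,X\in\mathcal{X}^+))\cdot W_0$ with the weak-inequality sets $\mathcal{X}^- = \{x:\tau_0(x)\leq\mu_0\}$ and $\mathcal{X}^+ = \{x:\tau_0(x)\geq\mu_0\}$, shows the average effect over $W^*(u)$ runs continuously from weakly below $\mu_0$ at $u=0$ to weakly above it at $u=1$, and invokes the intermediate value theorem; you instead solve the balance equation in closed form with the two-valued weight $g = m_+\1(\tau_0<t) + (-m_-)\1(\tau_0>t)$, where $m_\pm$ are the one-sided conditional means of $\tau_0-t$, after splitting off the atom case $\Prob(\tau_0(X)=t\mid W_0=1)>0$ via $g=\1(\tau_0(X)=t)$. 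Your construction buys explicitness — it exhibits the inclusion probabilities directly and requires no continuity argument — whereas the paper's interpolation buys uniformity: because $\mathcal{X}^-$ and $\mathcal{X}^+$ overlap on $\{\tau_0=\mu_0\}$, atoms at $t$ are handled automatically and no case split is needed. One small gloss worth tightening: your claim that the constraint $g\leq 1$ is ``immaterial'' by positive rescaling is only literally valid for essentially bounded $g$ (an unbounded $g$ cannot be rescaled into $[0,1]$). This creates no gap here, since your necessity argument applies to any admissible $g$ without rescaling and your sufficiency construction produces a bounded two-valued $g$, but the reduction should be stated as an equivalence between $[0,1]$-valued weights and \emph{bounded} nonnegative weights rather than all nonnegative weights.
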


\noindent
The existence condition in Theorem \ref{thm:existence2} is weaker than the one in Theorem \ref{thm:existence1} since we no longer require this representation to be valid for any CATE function, but rather just for the one that is identified from the population. The necessary and sufficient condition in this theorem only requires that the estimand is in the convex hull of the support of the CATEs. This means $\mu(a,\tau_0)$ has a causal representation even with negative weights, as long as there are CATEs smaller and greater than $\mu(a,\tau_0)$. We can see this support condition holds for all $\tau_0 \in \mathcal{T}_\text{all}$ if and only if $\mu(a,\tau_0)$ is between the upper and lower support points of $\tau_0(X)$ for any $\tau_0 \in \mathcal{T}_\text{all}$. This is precisely the case when the weights $a(X)$ are nonnegative since it guarantees $\inf(\supp(\tau_0(X)\mid W_0=1)) \leq \mu(a,\tau_0) \leq \sup(\supp(\tau_0(X)\mid W_0=1))$ for any $\tau_0$.

\subsubsection{Existence in Intermediate Cases}
\label{remark:intermediate_cases}

Analyzing the causal representation of an estimand under no restrictions on $\tau_0$ could be viewed as unnecessarily conservative in some settings. At the other extreme, assuming knowledge of $\tau_0$ may be unrealistic, especially when $X$ has many components, making $\tau_0$ more difficult to estimate. For example, some shape constraints may be known to hold for $\tau_0$. In some economic applications one may posit that $\tau_0$ is monotonic or convex in some components of $X$, or positive/negative over a subset of $\supp(X\mid W_0=1)$. In these cases, a causal representation may exist under weaker conditions than those in Theorem \ref{thm:existence1}, but stronger than those in Theorem \ref{thm:existence2}. In particular, one may be able to relax the requirement that $a(X) \geq 0$ without assuming that $\tau_0$ be given or fixed. The following proposition shows that this is the case when $\tau_0(X)$ is assumed to be linear in $X$. 

\begin{proposition}\label{prop:linear_CATEs}
Let $\mu(a,\tau_0)$ be an estimand satisfying equation \eqref{eq:weighted_est_def}. Suppose $X$ has finite support. Suppose Assumption \ref{assn:reg} holds and define 
\begin{align*}
	\mathcal{T}_\text{lin} &\coloneqq \{\tau \in \mathcal{T}_\text{all}: \tau(X) = c + d'X, \; (c,d) \in \R^{1 + d_X}\}.
\end{align*} 
Then, $\mu(a,\tau_0)$ has a causal representation uniformly in $\tau_0 \in\mathcal{T}_\text{lin}$ if and only if 
\begin{align*}
	\E_{W_0}[a(X)X] \in \text{conv}(\supp(X\mid W_0=1)).
\end{align*}
\end{proposition}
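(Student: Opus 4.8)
The plan is to reduce the uniform causal representation condition to a statement about a single weighted mean of $X$, and then to characterize the set of such means that regular subpopulations can realize as exactly $\text{conv}(\supp(X\mid W_0=1))$. Throughout, write $\bar{x}_a \coloneqq \E[a(X)X\mid W_0=1]/\E[a(X)\mid W_0=1]$, which is well defined by Assumption \ref{assn:reg}, and for a regular subpopulation let $\bar{x}_{\underline{w}^*} \coloneqq \E[\underline{w}^*(X)X\mid W_0=1]/\E[\underline{w}^*(X)\mid W_0=1]$.

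First I would unpack Definition \ref{def:caus_rep} using Proposition \ref{prop:subpops_properties}. A causal representation uniformly in $\tau_0 \in \mathcal{T}_\text{lin}$ exists iff there is some $W^* \in \text{SP}(W_0)$ with $\mu(a,\tau_0) = \mu(\underline{w}^*,\tau_0)$ for every $\tau_0 \in \mathcal{T}_\text{lin}$, where $\underline{w}^*(X) = \Prob(W^*=1\mid W_0=1,X)$. Plugging $\tau_0(X) = c + d'X$ into the $\mu$ functional and using its linearity in $\tau_0$ gives $\mu(a,\tau_0) = c + d'\bar{x}_a$ and $\mu(\underline{w}^*,\tau_0) = c + d'\bar{x}_{\underline{w}^*}$. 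Since $c$ cancels, equality for all $(c,d) \in \R^{1+d_X}$ reduces to $d'(\bar{x}_a - \bar{x}_{\underline{w}^*}) = 0$ for all $d \in \R^{d_X}$, i.e., to $\bar{x}_a = \bar{x}_{\underline{w}^*}$. Hence uniform representation in $\mathcal{T}_\text{lin}$ holds iff $\bar{x}_a$ coincides with the $\underline{w}^*$-weighted mean of $X$ for some regular subpopulation; the uniformity over the linear class collapses cleanly into matching the single vector $\bar{x}_a$.

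Next I would characterize the attainable set $\{\bar{x}_{\underline{w}^*}: W^* \in \text{SP}(W_0)\}$. The key observation is that $\underline{w}^*$ ranges over precisely the functions $g:\supp(X\mid W_0=1)\to[0,1]$ with $\E[g(X)\mid W_0=1]>0$: any regular subpopulation yields such a $g$ since $\underline{w}^*$ is a conditional probability and $\Prob(W^*=1)>0$, and conversely any such $g$ is realized by $W^* = \1(U \le g(X))W_0$ with $U\sim\text{Unif}(0,1)$ independent of $(Y(1),Y(0),X,W_0)$, exactly as in the construction following Theorem \ref{thm:existence1}. Writing $\supp(X\mid W_0=1)=\{x_1,\ldots,x_K\}$ and $p_k = \Prob(X=x_k\mid W_0=1)>0$, the mean $\bar{x}_g = \sum_k g(x_k)p_k x_k \big/ \sum_k g(x_k)p_k$ is a convex combination of the $x_k$, so $\bar{x}_g \in \text{conv}(\supp(X\mid W_0=1))$, giving one inclusion. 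For the reverse inclusion, given $v = \sum_k \lambda_k x_k$ with $\lambda_k \ge 0$ and $\sum_k \lambda_k = 1$, I set $g(x_k) = c\lambda_k/p_k$ with $c = \min_{k:\lambda_k>0}(p_k/\lambda_k) > 0$; this keeps $g(x_k)\in[0,1]$, is not identically zero, and produces $\bar{x}_g = v$. Thus the attainable set equals $\text{conv}(\supp(X\mid W_0=1))$, and combining with the reduction above yields the stated equivalence.

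I expect the only delicate point to be the surjectivity half of the final step, namely verifying that every point of the convex hull is the weighted mean of an admissible $[0,1]$-valued weight function, which hinges on choosing the normalizing constant $c$ so that $g$ never exceeds $1$ yet stays nonzero. Everything else is either definitional or follows from the linearity of $\mu(\cdot,\tau_0)$ in $\tau_0$. The finite-support assumption does real work here: it makes the convex hull closed and lets me take $c$ as a finite minimum over support points, so the boundary of the hull is attained without any limiting argument.
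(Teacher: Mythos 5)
Your proof is correct, and its skeleton is the paper's: both directions run through Proposition \ref{prop:subpops_properties}, the intercept $c$ cancels, and the problem reduces to matching the $a$-weighted mean $\bar{x}_a \coloneqq \E[a(X)X\mid W_0=1]/\E[a(X)\mid W_0=1]$ with the $\underline{w}^*$-weighted mean of $X$ for some regular subpopulation; your forward inclusion is exactly the paper's, via $u(x) = \underline{w}^*(x)\P(X=x\mid W_0=1)/\E[\underline{w}^*(X)\mid W_0=1]$. Where you genuinely depart is the converse witness, and your version is the more careful one. Given convex weights $\lambda_k \geq 0$, $\sum_k \lambda_k = 1$, with $\sum_k \lambda_k x_k = \bar{x}_a$, you set $\underline{w}^*(x_k) = c\lambda_k/p_k$ with $c = \min_{k:\lambda_k>0} p_k/\lambda_k > 0$, so that $\E[\underline{w}^*(X)X\mid W_0=1]/\E[\underline{w}^*(X)\mid W_0=1] = \sum_k \lambda_k x_k$ holds by construction and $\underline{w}^*(x_k)\in[0,1]$ is immediate. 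The paper instead takes $\underline{w}^*(X) \propto u(X)$ normalized by $\max(\supp(u(X)\mid W_0=1))$, and its key equality $\E[u(X)X\mid W_0=1]/\E[u(X)\mid W_0=1] = \bar{x}_a$ holds only if the convex weights $u(x_k)$ are understood as already incorporating the cell probabilities (i.e., as proportional to $\lambda_k/p_k$, which is the correct inverse of the map used in the paper's own forward direction); read literally with $\sum_k u(x_k)=1$, the expectation introduces an extra factor of $p_k$ that your explicit division by $p_k$ removes. So your construction buys transparency at precisely the delicate point you flagged, while the paper's buys brevity at the cost of an implicit reinterpretation of $u$; your observations that finite support makes $c$ a positive attained minimum and that $W^* = \1(U \leq \underline{w}^*(X))\cdot W_0$ is regular (as in the construction after Theorem \ref{thm:existence1}) complete the argument correctly.
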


\noindent
This proposition illustrates that placing restrictions on $\mathcal{T}$ may obviate the requirement that $a(X) \geq 0$ for a uniform causal representation of an estimand. In particular, the requirement here is that $\E_{W_0}[a(X)X]$ lies in the convex hull of the support of $X$ given $W_0=1$. For scalar $X$, this is an interval. This condition does not require $a(X)$ to be nonnegative. For example, if $\supp(X) = \{0,1,2\}$ and $W_0 = 1$ almost surely, any values of $(a(0),a(1),a(2))$ such that $\E[a(X)X] \in [0,2]$ and $\E[a(X)] = 1$ imply a causal representation. Let $\Prob(X = x) = 1/3$ for $x \in \{0,1,2\}$ and $(a(0),a(1),a(2)) = (2,-1,2)$. Here units with $X=1$ have negative weights, but $\E[a(X)] = \E[X a(X)] = 1 \in [0,2]$, implying that the estimand has a causal representation uniformly in $\tau_0 \in\mathcal{T}_\text{lin}$. The result is stated for discrete $X$, but an estimand with negative weights can have a causal representation even when $X$ has continuous components.

We consider another class of CATE functions that restricts their heterogeneity. For $\Delta \geq 0$, let
\vspace*{-10pt}
\begin{align*}
	\mathcal{T}_\text{BD}(\Delta) \coloneqq \left\{\tau \in \mathcal{T}_\text{all} : \sup_{x,x' \in \supp(X \mid W_0=1)}|\tau(x) - \tau(x')| \leq \Delta\right\}.
\end{align*}
This function class uniformly bounds differences of the CATE function. When $\Delta=0$, the CATE function is constant, and thus equal to $\E_{W_0}[Y(1) - Y(0)]$. When $\Delta > 0$, CATEs may differ in value, but the maximum discrepancy between any two CATEs is bounded above by $\Delta$\@. We show that restricting the CATEs to satisfy this bounded difference assumption does not remove the requirement that $a(X)$ be nonnegative, unless $\Delta = 0$, in which case all $a(\cdot)$ functions yield a causal representation uniformly in $\mathcal{T}_\text{BD}(0)$.

\begin{proposition}\label{prop:BV_CATEs}
Let $\mu(a,\tau_0)$ be an estimand satisfying equation \eqref{eq:weighted_est_def}. Suppose Assumption~\ref{assn:reg} holds. Then, $\mu(a,\tau_0)$ has a causal representation uniformly in $\mathcal{T}_\text{BD}(\Delta)$ when $\Delta > 0$ if and only if 
\begin{align*}
	\Prob_{W_0}(a(X) \geq 0) = 1.
\end{align*}
The estimand $\mu(a,\tau_0)$ has a causal representation uniformly in $\mathcal{T}_\text{BD}(0)$ for any $a(\cdot)$.
\end{proposition}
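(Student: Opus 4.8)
The plan is to treat the boundary case $K=0$ separately and then prove the two directions of the equivalence for $K>0$, leaning on Theorems \ref{thm:existence1} and \ref{thm:existence2} so that almost no new machinery is needed. I would dispatch $K=0$ first: a function $\tau_0 \in \mathcal{T}_\text{BD}(0)$ has zero oscillation on $\supp(X\mid W_0=1)$, hence is a.s.\ constant there, say $\tau_0 \equiv c$. Then $\mu(a,\tau_0) = c\,\E[a(X)\mid W_0=1]/\E[a(X)\mid W_0=1] = c = \E[Y(1)-Y(0)\mid W_0=1]$, independently of $a$. Taking $W^* = W_0$ (trivially a regular subpopulation of itself) gives $\mu(a,\tau_0) = \E[Y(1)-Y(0)\mid W^*=1]$ for every $\tau_0 \in \mathcal{T}_\text{BD}(0)$, so the representation holds uniformly for any $a(\cdot)$. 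For the $K>0$ ``if'' direction I would simply invoke the inclusion $\mathcal{T}_\text{BD}(K) \subseteq \mathcal{T}_\text{all}$: if $\Prob(a(X)\geq 0\mid W_0=1)=1$, Theorem \ref{thm:existence1} supplies a single $W^* \in \text{SP}(W_0)$ representing $\mu(a,\tau_0)$ for all $\tau_0 \in \mathcal{T}_\text{all}$, a fortiori for all $\tau_0$ in the smaller class $\mathcal{T}_\text{BD}(K)$.

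The substantive content is the $K>0$ ``only if'' direction, which I would argue by contraposition: assuming $\Prob(a(X)<0\mid W_0=1)>0$, I exhibit a single adversarial CATE inside $\mathcal{T}_\text{BD}(K)$ for which no regular subpopulation works, so $\mathcal{W}(a;W_0,\mathcal{T}_\text{BD}(K))=\emptyset$. Mirroring the adversarial function from the discussion after Theorem \ref{thm:existence1}, but rescaled to respect the oscillation bound, I set $\tau^-(X) = K\cdot\1(a(X)<0)$. This function takes only the values $0$ and $K$, so its oscillation is at most $K$ and $\tau^- \in \mathcal{T}_\text{BD}(K)$. Since $a(X)\1(a(X)<0)\leq 0$ with strict inequality on the positive-mass event $\{a(X)<0\}$ given $W_0=1$, and since $\E[a(X)\mid W_0=1]>0$ by the sign normalization in Assumption \ref{assn:reg}, I obtain
\begin{align*}
\mu(a,\tau^-) = K\,\frac{\E[a(X)\1(a(X)<0)\mid W_0=1]}{\E[a(X)\mid W_0=1]} < 0.
\end{align*}
But $\tau^- \geq 0$ everywhere, so $\Prob(\tau^-(X) \leq \mu(a,\tau^-)\mid W_0=1)=0$, placing $\mu(a,\tau^-)$ strictly below the convex hull of the support of the CATEs, i.e.\ $\mu(a,\tau^-)\notin \mathcal{S}(\tau^-;W_0)$. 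By Theorem \ref{thm:existence2}, $\mu(a,\tau^-)$ has no causal representation for this $\tau^-$, hence no single $W^*$ can represent the estimand across all of $\mathcal{T}_\text{BD}(K)$.

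The one point that requires care---and the only place the argument is not purely mechanical---is ensuring the adversarial function stays inside the constrained class while still breaking the representation. The rescaling by $K$ is exactly what reconciles these two demands: for every $K>0$ the two-valued function $K\,\1(a(X)<0)$ has enough ``room'' to register the sign of the negative weights yet small enough oscillation to remain in $\mathcal{T}_\text{BD}(K)$, which is why nonnegativity cannot be relaxed for any positive $K$; only at $K=0$ does this room vanish and all weight functions become admissible. I would also flag that the ``if'' direction inherits the hypothesis $\amax<\infty$ of Theorem \ref{thm:existence1}, which holds in all the examples considered; indeed, one can check directly that uniform agreement of $\mu(\underline{w}^*,\cdot)$ and $\mu(a,\cdot)$ over $\mathcal{T}_\text{BD}(K)$ already forces $\underline{w}^*$ to be proportional to $a$, so the nonnegativity and boundedness of $a$ are genuinely needed to build the representing subpopulation.
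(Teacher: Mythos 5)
Your proof is correct and takes essentially the same approach as the paper: the same constant-function argument with $W^* = W_0$ for $K=0$, the same inclusion $\mathcal{T}_\text{BD}(K) \subseteq \mathcal{T}_\text{all}$ combined with Theorem \ref{thm:existence1} for sufficiency, and the same adversarial function $\tau^-(X) = K \cdot \1(a(X)<0)$ for necessity. The only cosmetic difference is that you close the necessity step by invoking Theorem \ref{thm:existence2}'s support condition, whereas the paper derives the contradiction directly from $\mu(\underline{w}^*,\tau^-) \geq 0 > \mu(a,\tau^-)$ via Proposition \ref{prop:subpops_properties}---the two are interchangeable, and your explicit flagging of the implicit reliance on $\amax < \infty$ in the sufficiency direction accurately reflects the paper's own appeal to Theorem \ref{thm:existence1}.
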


\noindent
For intuition, consider the adversarial CATE function $\tau^-(X) = \Delta \cdot \1(a(X) < 0)$, a member of $\mathcal{T}_\text{BD}(\Delta)$, and assume $\P_{W_0}(a(X) \geq 0) < 1$. Then we obtain the same contradiction we discussed after Theorem \ref{thm:existence1}, where $\tau^-$ is nonnegative everywhere but $\mu(a,\tau^-)$ is negative.

The last two propositions show that the impact of restrictions on $\tau_0$ on the requirement that $a(X)$ be nonnegative critically depends on the nature of these restrictions. Generalizations to additional or empirically motivated function classes are left for future work.

\section{Quantifying the Internal Validity of Weighted Estimands}\label{sec:quantifying_int_valid}

Many estimands will admit causal representations, but their implicit subpopulations $\{W^* = 1\}$ will differ. Also, a weighted estimand may not correspond to the \textit{target estimand} a researcher is interested in. For example, a researcher may take $\E_{W_0}[Y(1) - Y(0)]$, the average effect in the population $\{W_0 = 1\}$, as their target. In general, this parameter differs from $\mu(a,\tau_0)$.

However, the set of subpopulations corresponding to a weighted estimand can be used to understand how representative the weighted estimand is of the target. For example, we may seek estimands for which $\Prob_{W_0}(W^*=1)$ attains values closest to 1, since they have a higher degree of internal validity with respect to the target $\E_{W_0}[Y(1) - Y(0)]$. At one extreme, a weighted estimand for which $\Prob_{W_0}(W^*=1) =1$ would be deemed to have the highest degree of internal validity for the target estimand since it would be equal to it. We formalize this interpretation through a measure of internal validity (MIV), which we define here.

\begin{definition}[Measure of Internal Validity] \label{def:internal_validity}
Let
\begin{align*}
	\text{MIV}(a,W_0;\mathcal{T}) \coloneqq \sup_{W^* \in \mathcal{W}(a;W_0,\mathcal{T})} \Prob_{W_0}(W^* = 1)
\end{align*}
denote the MIV of the weighted estimand $\mu(a,\tau_0)$ over the function class $\mathcal{T}$.
\end{definition}

\noindent
Formally, $\text{MIV}(a,W_0;\mathcal{T})$ is the sharp upper bound on $\Prob_{W_0}(W^* = 1)$ for any regular subpopulation $W^*$ of $W_0$ such that the weighted estimand $\mu(a,\tau_0)$ has a causal representation as the average treatment effect over the subpopulation $W^*$. Note that we set $\text{MIV}(a,W_0;\mathcal{T}) = 0$ when $\mathcal{W}(a;W_0,\mathcal{T})$ is empty. This object depends on the chosen function class $\mathcal{T}$, as did Theorems \ref{thm:existence1} and \ref{thm:existence2} in the previous section. Given the above terminology and assuming that $\E_{W_0}[Y(1) - Y(0)]$ is the target, we call $\text{MIV}(a,W_0;\mathcal{T})$ the MIV of the estimand $\mu(a,\tau_0)$, and we use this definition in the remainder of the paper.

We can also compute the maximum value of $\Prob(W^* = 1)$ across $W^* \in  \mathcal{W}(a;W_0,\mathcal{T})$, which measures the largest share of the entire population for which the weighted estimand has a causal representation. We refer to this measure as a measure of representativeness (MR)\@.

\begin{definition}[Measure of Representativeness] \label{def:representativeness}
Let
\begin{align*}
	\text{MR}(a,W_0;\mathcal{T}) \coloneqq \sup_{W^* \in \mathcal{W}(a;W_0,\mathcal{T})} \Prob(W^* = 1)
\end{align*}	
denote the MR of the weighted estimand $\mu(a,\tau_0)$ over the function class $\mathcal{T}$.
\end{definition}

\noindent
The MR gives the internal validity of $\mu(a,\tau_0)$ with respect to the target estimand $\E[Y(1) - Y(0)]$, the average treatment effect in the population from which the sample is drawn. Our measures of internal validity and representativeness are closely linked and a subpopulation will maximize $\P(W^* = 1)$ if and only if it maximizes $\P_{W_0}(W^* = 1)$. We can also see that $\text{MR}(a,W_0;\mathcal{T}) = \text{MIV}(a,W_0;\mathcal{T}) \cdot \Prob(W_0=1)$, since $W^*$ is a subpopulation of $W_0$. The MIV and MR are identical when $W_0 = 1$ almost surely. In Section \ref{sec:diagnostics_bounds}, we use these measures to obtain simple bounds on target estimands.

We now derive explicit expressions for $\text{MIV}(a,W_0;\mathcal{T})$. We focus on two cases, the first being when $\tau_0$ is unrestricted.

\subsection{Quantifying Internal Validity Uniformly in $\tau_0$}

Without imposing any restrictions on the CATE function, except for the existence of second moments, the MIV of a weighted estimand is given by the following theorem.

\begin{theorem}\label{thm:unc_upperbounds1}
Let $\mu(a,\tau_0)$ be an estimand satisfying equation \eqref{eq:weighted_est_def}. Suppose Assumption~\ref{assn:reg} holds and that $\amax < \infty$. If $\Prob_{W_0}(a(X) \geq 0) = 1$, then 
\begin{align*}
	\text{MIV}(a,W_0;\mathcal{T}_\text{all}) = \amax^{-1}. 
\end{align*}
If $\Prob_{W_0}(a(X) \geq 0) < 1$, then $\text{MIV}(a,W_0;\mathcal{T}_\text{all}) = 0$.
\end{theorem}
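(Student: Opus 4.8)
The plan is to split into the two cases of the statement. The second case, $\Prob(a(X)\ge 0\mid W_0=1)<1$, is immediate: Theorem \ref{thm:existence1} shows that under this condition $\mu(a,\tau_0)$ admits no causal representation uniformly in $\tau_0\in\mathcal{T}_\text{all}$, so $\mathcal{W}(a;W_0,\mathcal{T}_\text{all})=\emptyset$ and hence $\overline{P}(a,W_0;\mathcal{T}_\text{all})=0$ by the convention recorded after Definition \ref{def:internal_validity}. The substance is therefore entirely in the case $\Prob(a(X)\ge 0\mid W_0=1)=1$, where I must pin down the exact value of the supremum.

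My approach to the main case is to first characterize membership in $\mathcal{W}(a;W_0,\mathcal{T}_\text{all})$ pointwise in terms of the inclusion probability $\underline{w}^*(X)=\Prob(W^*=1\mid X,W_0=1)$, and then optimize $\Prob(W^*=1\mid W_0=1)=\E[\underline{w}^*(X)\mid W_0=1]$ subject to that characterization. By Proposition \ref{prop:subpops_properties}(2), for any $W^*\in\text{SP}(W_0)$ we have $\E[Y(1)-Y(0)\mid W^*=1]=\mu(\underline{w}^*,\tau_0)$, so the defining requirement $W^*\in\mathcal{W}(a;W_0,\mathcal{T}_\text{all})$ reads $\mu(a,\tau_0)=\mu(\underline{w}^*,\tau_0)$ for all $\tau_0\in\mathcal{T}_\text{all}$. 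Writing $c_a=\E[a(X)\mid W_0=1]>0$ (positive by the sign normalization in Assumption \ref{assn:reg}) and $c_w=\E[\underline{w}^*(X)\mid W_0=1]=\Prob(W^*=1\mid W_0=1)>0$, this is equivalent to $\E\big[(a(X)/c_a-\underline{w}^*(X)/c_w)\,\tau_0(X)\mid W_0=1\big]=0$ for every $\tau_0\in\mathcal{T}_\text{all}$. The key step is then to exploit the richness of $\mathcal{T}_\text{all}$: substituting the particular choice $\tau_0=a/c_a-\underline{w}^*/c_w$, which lies in $\mathcal{T}_\text{all}$ because $\E[a(X)^2]<\infty$ by Assumption \ref{assn:reg} and $\underline{w}^*$ is bounded, forces the second moment of this function to vanish, so $\underline{w}^*(X)=(c_w/c_a)\,a(X)$ almost surely given $W_0=1$.

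Given this proportionality, I would read off the bound from the constraint that $\underline{w}^*$ is a conditional probability, hence $\underline{w}^*(X)\le 1$ almost surely. Since $(c_w/c_a)\,a(X)\le 1$ a.s.\ and $c_w/c_a>0$, taking the essential supremum of $a(X)$ gives $(c_w/c_a)\,\amax\le 1$, that is, $c_w\le c_a/\amax$. This yields the upper bound $\Prob(W^*=1\mid W_0=1)\le \E[a(X)\mid W_0=1]/\amax$ for every feasible $W^*$. To show the supremum is attained, I would use the explicit regular subpopulation constructed after Theorem \ref{thm:existence1}, namely $W^*=\1(U\le a(X)/\amax)\cdot W_0$ with $U\sim\text{Unif}(0,1)$ independent of $(Y(1),Y(0),X,W_0)$; since $\Prob(a(X)\ge0\mid W_0=1)=1$ and $a(X)\le\amax$, the ratio $a(X)/\amax\in[0,1]$ a.s., so $\underline{w}^*(X)=a(X)/\amax$ and $\Prob(W^*=1\mid W_0=1)=\E[a(X)\mid W_0=1]/\amax$, meeting the bound exactly.

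I expect the main obstacle to be the passage from ``the orthogonality holds for all $\tau_0\in\mathcal{T}_\text{all}$'' to ``$\underline{w}^*\propto a$ almost surely.'' The clean route is the self-substitution above, but I would need to confirm the chosen $\tau_0$ is admissible (finite second moment) and to handle the measure-zero set where $w_0(X)=0$ via the convention $\tau_0(X)w_0(X)=0$. The essential-supremum manipulation that converts the pointwise constraint $\underline{w}^*\le1$ into $c_w/c_a\le 1/\amax$ relies on $\amax<\infty$, which is precisely the maintained hypothesis of the theorem.
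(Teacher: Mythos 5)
Your proposal is correct and follows essentially the same route as the paper's proof: the self-substitution $\tau^* = a/c_a - \underline{w}^*/c_w \in \mathcal{T}_\text{all}$ forcing $\underline{w}^*(X) \propto a(X)$ almost surely given $W_0=1$ is exactly the argument in the paper (via equation \eqref{eq:prop_weights_estimands} in the proof of Theorem \ref{thm:existence1}), the conversion of $\underline{w}^*(X) \leq 1$ into $c_w \leq c_a/\amax$ matches the paper's bound on the proportionality constant, and attainment via $W^* = \1\left(U \leq a(X)/\amax\right)\cdot W_0$ is the paper's construction. The handling of the negative-weights case through Theorem \ref{thm:existence1} and the convention $\overline{P}(a,W_0;\mathcal{T}_\text{all}) = 0$ when $\mathcal{W}(a;W_0,\mathcal{T}_\text{all}) = \emptyset$ is likewise identical.
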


\noindent
The maximum size of a subpopulation characterizing the weighted estimand $\mu(a,\tau_0)$ depends on $a(X)$ only through its supremum $\amax$. This bound can be computed at what \cite{ImbensRubin2015} call the ``design stage'' of the study, that is, without using outcome data.

For intuition about the role of the supremum in this expression, recall that $\mu(a,\tau_0) = \E[Y(1) - Y(0)\mid W^* = 1]$ is equivalent to writing
\begin{align}
	\mu(a,\tau_0) &= \mu(\underline{w}^*/\E_{W_0}[\underline{w}^*(X)],\tau_0)\label{eq:mu(a)=mu(wstar)}
\end{align}
for all $\tau_0 \in \mathcal{T}_{\text{all}}$. For equation \eqref{eq:mu(a)=mu(wstar)} to hold for all $\tau_0$, $\underline{w}^*(X)$ must be proportional to $a(X)$. While the range of $a(X)$ is unconstrained, $\underline{w}^*(X)$ must lie in $[0,1]$ to be a conditional probability. Since we seek to maximize $\Prob_{W_0}(W^*=1) = \E_{W_0}[\underline{w}^*(X)]$, we let $\underline{w}^*(X) = a(X)/\amax$, which is the largest multiple of $a(X)$ that lies in $[0,1]$ with probability 1. 

We can define a corresponding regular subpopulation $W^*$ as in equation \eqref{eq:rep_subpop_constr}. The population $\{W^* = 1\}$ contains a random subset of $\{W_0 = 1\}$ where the probability of inclusion is proportional to $a(X)$. Thus, units with larger values of $a(X)$ are more likely to be included in $W^*$. All units in $\{W_0 = 1\}$ with $X$ such that $a(X) = \amax$ are included in $W^*$, whereas no units where $a(X) = 0$ are included. 

The construction of this subpopulation is illustrated in Figure \ref{fig:unif} for continuous $x$, omitting the conditioning on $W_0 = 1$ for simplicity. We seek to maximize $\P(W^* = 1) = \int \underline{w}^*(x) f_X(x) \, dx$ with the requirement that $\underline{w}^*(x) \leq 1$ (or, equivalently, $\underline{w}^*(x)f_X(x) \leq f_X(x)$) and that $\underline{w}^*(x)$ is a multiple of $a(x)$. In the figure, we see that $\amax > 1$ and thus the largest multiple of $a(x)$ that is weakly smaller than 1 is illustrated by the gray curve. The area under this curve is precisely $\P(W^* = 1)$. Note that the area under $f_X(x)$ is one, so closer alignment of the gray curve and the density $f_X(x)$ corresponds to more representative estimands.

\begin{figure}[!tb]
\begin{adjustwidth}{-1in}{-1in}
\centering
\caption{Characterizing an Implicit Subpopulation Uniformly in $\mathcal{T}_{\text{all}}$}\label{fig:unif}
\includegraphics[width=15cm]{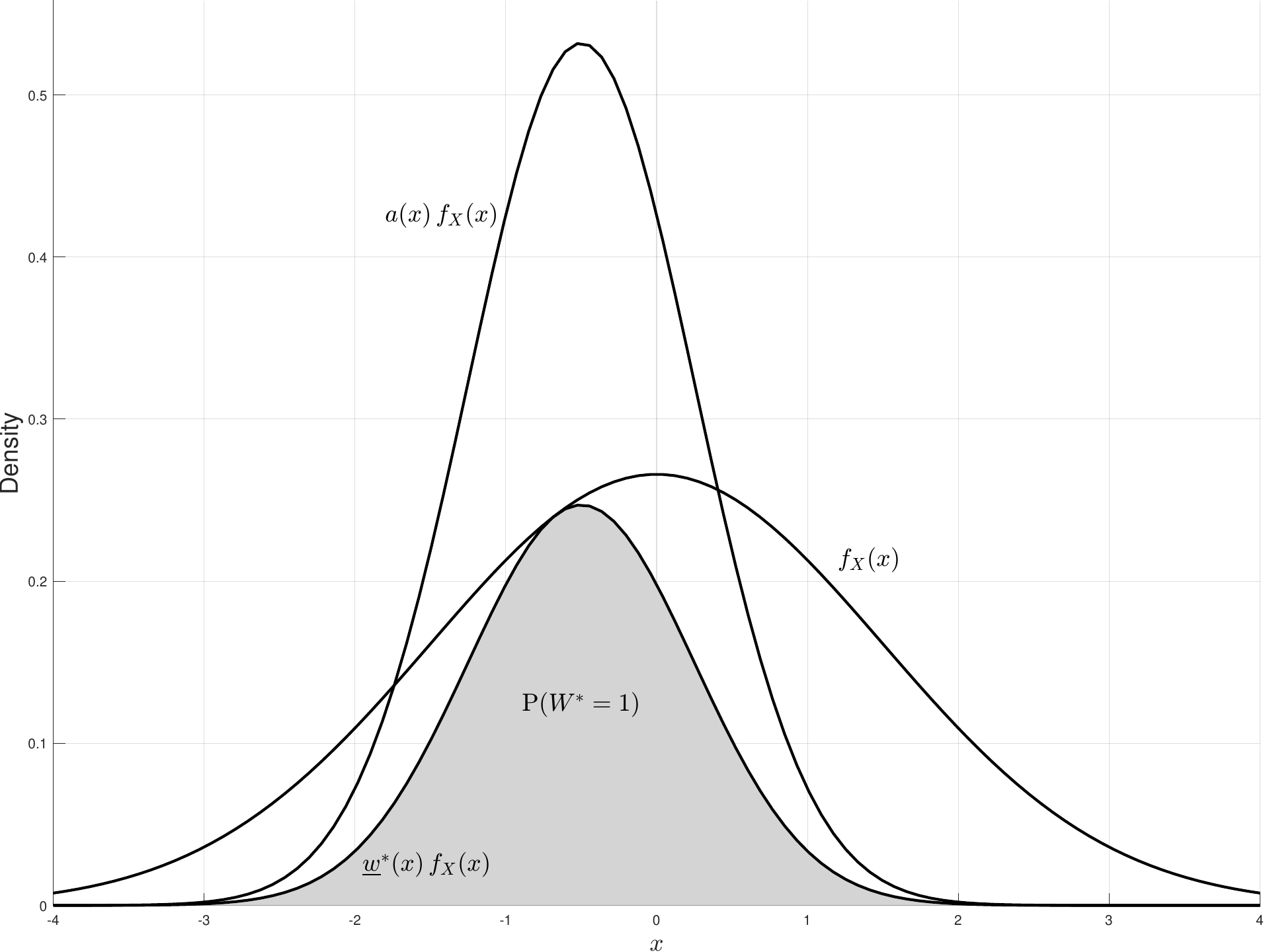}
\begin{footnotesize}
\begin{tabular}{p{15cm}}
\textit{Note:} $X$ is a single continuously distributed covariate with density $f_X$.
\end{tabular}
\end{footnotesize}
\end{adjustwidth}
\end{figure}

Several further comments about Theorem \ref{thm:unc_upperbounds1} are in order. 

\begin{remark}[Estimands and Their Corresponding Interventions]
We can link the estimand $\mu(a,\tau_0)$ to an intervention in which a fraction $a(X)/\amax \in [0,1]$ of units with covariate value $X$ and $W_0=1$ are treated. This link can be seen through
\begin{align*}
	\mu(a,\tau_0) = \E_{W_0}[a(X) \tau_0(X)] = \E_{W_0}\left[\frac{a(X)/\amax}{\E_{W_0}[a(X)]/\amax} \cdot \tau_0(X)\right].
\end{align*}
For example, if $W_0=1$ and $a(X) = \amax$ almost surely, then $\mu(a,\tau_0)$ is the ATE, the average effect of treatment among all units. Under unconfoundedness, we also note that the average treatment effect on the treated (ATT) can be written as $\E[Y(1) - Y(0)\mid D=1] = \mu(a,\tau_0)$, a weighted estimand with weights $a(X) = \P(D=1\mid X)/\P(D=1)$. This means that it can be interpreted either as the effect of an intervention where a fraction $\P(D=1\mid X)$ of units with covariate $X$ are treated or as the effect of an intervention where all treated units are treated. In our setting, we can interpret any weighted estimand with nonnegative weights as the effect of treatment for a feasible intervention defined only in terms of $X$, $W_0$, and independent noise $U \sim \text{Unif}(0,1)$ via $W^* = \1(U \leq a(X)/\amax)\cdot W_0$.
\end{remark}

\begin{remark}[Uniqueness of Implicit Subpopulations]
The subpopulation maximizing the level of internal validity is generally not unique. The population $W^* =  \1\left(U \leq a(X)/\amax\right)\cdot W_0$ will generally change if $U$ is replaced by another draw from a uniform distribution. The probability (conditional on $X$) of any unit being part of $W^*$ does not change with the draw of $U$, but whether any given unit is included in subpopulation $\{W^* = 1\}$ cannot be determined.
\end{remark}

\begin{remark}[Meaning of Internal Validity and Representativeness]
Suppose we consider $\E[Y(1) - Y(0) \mid W^* = 1]$ to be the target parameter, where $W^* = \1\left(U \leq a(X)/\amax \right)\cdot W_0$ is the subpopulation for which $\mu(a,\tau_0) = \E[Y(1) - Y(0) \mid W^* = 1]$ uniformly in $\tau_0 \in \mathcal{T}_\text{all}$. For example, in the case of the OLS estimand in Section \ref{sec:example1}, this is a subpopulation with inclusion probability proportional to the conditional variance of treatment. If this subpopulation is the target, it would be reasonable to infer that the MIV of the estimand $\mu(a,\tau_0)$ is 1. Indeed, this is the case because the estimand can be written as $\mu(a,\tau_0) = \E[Y(1) - Y(0) \mid W^* = 1]$, and $\{W^* = 1\}$ is trivially the largest regular subpopulation of itself. This example illustrates how the MIV of $\mu(a,\tau_0)$ depends entirely on the target estimand. This is appropriate because we associate the term ``internal validity'' with whether the probability limit of an estimator equals the parameter of interest. Thus, the MIV of $\mu(a,\tau_0)$ should naturally depend on the parameter choice. Meanwhile, the MR of $\mu(a,\tau_0)$ targets the ATE\@. It will be less than one unless $W^* = 1$ almost surely, so that $\mu(a,\tau_0)$ actually equals the ATE\@. If we also wish to operationalize external validity in our framework, we need to consider whether $\mu(a,\tau_0)$ can be written as the average treatment effect within a subpopulation of another, possibly arbitrary population. We leave this extension for future work.
\end{remark}

\noindent
We now consider a simple example to give further intuition for Theorem \ref{thm:unc_upperbounds1}.

\subsubsection*{Illustrative Example: A Single Binary Covariate}
Consider an estimand $\mu(a,\tau_0)$ where $W_0 = 1$ almost surely, $a(X) \geq 0$, and where $X$ is binary with support $\supp(X) = \{1,2\}$. Let $p_x \coloneqq \Prob(X=x)$ for $x \in \{1,2\}$. Equation \eqref{eq:weighted_est_def2} and the definition of the ATE yield
\begin{align*}
    \mu(a,\tau_0)
        &= a(1)p_1\tau_0(1)+a(2)p_2\tau_0(2)  \qquad \text{ and } \qquad
    \text{ATE}
        &= p_1\tau_0(1)+p_2\tau_0(2).
\end{align*}
If $a(1) = a(2) = 1$, the estimand equals the ATE and thus clearly has the maximum degree of internal validity with respect to the ATE\@. Applying Theorem \ref{thm:unc_upperbounds1}, we can directly see that $\amax = 1$ and thus $\text{MIV}(a,1;\mathcal{T}_\text{all}) = 1$.

However, when $a(1) \neq a(2)$, the weights used by the estimand differ from $(p_1,p_2)$, the population weights for the two covariate cells. For concreteness, let $(p_1,p_2) = (0.2,0.8)$, and let $(a(1),a(2)) = (2,0.75)$, which correspond to the OLS weights of Proposition \ref{prop:angrist1998} when the propensity score is $(p(1),p(2)) = (0.4,0.1)$. In this case
\begin{align*}
	\mu(a,\tau_0) &= 0.4 \cdot \tau_0(1) + 0.6 \cdot \tau_0(2) \qquad \text{ and } \qquad \text{ATE} &= 0.2 \cdot \tau_0(1) + 0.8 \cdot \tau_0(2).
\end{align*}
Relative to the ATE, the weighted estimand overrepresents the subpopulation with $X=1$ and underrepresents the subpopulation with $X=2$\@. The largest subpopulation $\{W^* = 1\}$ that causally represents the estimand can be constructed by combining subsets of the subpopulations defined by $\{X=1\}$ and $\{X=2\}$. Specifically, let 
\begin{align*}
W^* = \1(X=1) + \1\left(U \leq \frac{a(2)}{a(1)},X=2\right) = \1(X=1) + \1\left(U \leq \frac{3}{8},X=2\right),
\end{align*}
where $U \sim \text{Unif}(0,1)$ is independent of $(Y(1),Y(0),X)$. This regular subpopulation contains all units with $X=1$ and a uniformly random three eighths of units with $X=2$. Therefore
\begin{align*}
	\Prob(W^*=1 \mid X = 1) = \underline{w}^*(1) = 1 \qquad \text{ and } \qquad \P(W^* = 1 \mid X=2) = \underline{w}^*(2) = 3/8,
\end{align*}
which yields $\P(W^* = 1) = p_1 \underline{w}^*(1) + p_2 \underline{w}^*(2) = 0.5$. The same quantity can be obtained from Theorem \ref{thm:unc_upperbounds1}, which implies that $\text{MIV}(a,W_0;\mathcal{T}_\text{all}) = 1 / \left( \sup_{x \in \{1,2\}} a(x) \right) = 1/2$. The average effect in this subpopulation is given by
\begingroup
\allowdisplaybreaks
\begin{align*}
	\E[Y(1) - Y(0) \mid W^* = 1] &= \E_{W_0}[ \tau_0(X) \mid W^* = 1]\\
	&= \left(1 \cdot \tau_0(1) \cdot 0.2 + 3/8 \cdot \tau_0(2)\cdot 0.8 \right)/(1/2)\\
	&= 0.4 \cdot \tau_0(1) + 0.6 \cdot \tau_0(2),
\end{align*}
\endgroup
which equals $\mu(a,\tau_0)$ for any $\tau_0$. The relative weights on $\{X=1\}$ and $\{X=2\}$ in subpopulation $\{W^* = 1\}$ are $\frac{\Prob_{W_0}(X=1 \mid W^* = 1)}{\Prob_{W_0}(X=2 \mid W^* = 1)} = 0.4/0.6$, matching the ratio of the weights assigned by the estimand. The subpopulation $\{W^* = 1\}$ cannot expand while preserving this ratio since it already includes all units with $X=1$. Therefore, $W^*$ is the largest subpopulation for which $\mu(a,\tau_0) = \E[Y(1) - Y(0) \mid W^* = 1]$ for any $\tau_0$.

\subsection{Quantifying Internal Validity Given $\tau_0$}

We can also ask how internally valid a weighted estimand can be, given knowledge of the CATE function. In this case, the object of interest is
\begin{align}\label{eq:HC_optimum}
	\text{MIV}(a,W_0;\{\tau_0\}) = \sup_{W^* \in \mathcal{W}(a;W_0,\{\tau_0\})} \Prob_{W_0}(W^* = 1),
\end{align}
where $\tau_0$ is a given CATE function. Since $\tau_0$ is known, the condition $W^* \in \mathcal{W}(a;W_0,\{\tau_0\})$ can be written as $\mu_0 = \mu(\underline{w}^*/\E_{W_0}[\underline{w}^*(X)],\tau_0)$, where we let $\mu_0 \coloneqq \mu(a,\tau_0)$ to simplify the notation. This condition is equivalent to $\E_{W_0}[(\tau_0(X)-\mu_0) \underline{w}^*(X)] = 0$, a linear constraint on $\underline{w}^*$. Additionally, the objective function $\Prob_{W_0}(W^*=1) = \E_{W_0}[\underline{w}^*(X)]$ is linear in $\underline{w}^*$. Thus, the optimization in \eqref{eq:HC_optimum} can be cast as a linear program. To see this, consider as an example the case where $X$ has finite support, i.e., $\supp(X \mid W_0 = 1) = \{x_1,\ldots,x_K\}$. Let $f_k \coloneqq \Prob_{W_0}(W^*=1, X = x_k)$ and note that $f_k \in [0,p_k]$ where $p_k = \Prob_{W_0}(X=x_k)$.

We can write the above optimization problem as
\begin{align*}
	\max_{(f_1,\ldots,f_K) \geq \mathbf{0}_K} \sum_{k=1}^K f_k \; \text{ s.t. } f_k \leq p_k \text{ for } k \in \{1,\ldots,K\} \; \text{ and } \; \sum_{k=1}^K (\tau_0(x_k) - \mu_0)f_k = 0,
\end{align*}
a finite-dimensional linear program. This program has a feasible solution if $\tau_0(x_k) - \mu_0$ is not strictly positive or strictly negative for all $k$, meaning that $\mu(a,\tau_0)$ lies in the convex hull of CATE values, which is precisely the condition stated in Theorem \ref{thm:existence2}. While there exist many methods for solving linear programs, the value function can be obtained through an algorithm that is simple to describe analytically. 

\begin{algorithm}[Internal Validity for Fixed $\tau_0$] Without loss of generality, let $\tau_0(x_1) \leq \ldots \leq \tau_0(x_K)$.

\begin{enumerate}
	\item Set $(f_1,\ldots,f_K) = (p_1,\ldots,p_K)$.
	\item If $\sum_{k=1}^K (\tau_0(x_k) - \mu_0)f_k = 0$, end the algorithm and report $\sum_{k=1}^K f_k$.
	\item If $\sum_{k=1}^K (\tau_0(x_k) - \mu_0)f_k \neq 0$:
	\begin{enumerate}
		\item If $\sum_{k=1}^K (\tau_0(x_k) - \mu_0)f_k > 0$, let $k^* = \max \{k \in \{1,\ldots,K\}: f_k = p_k\}$ and set $f_{k^*} = \max\left\{0,-\sum_{k=1}^{k^*-1} (\tau_0(x_k) - \mu_0)p_k/(\tau_0(x_{k^*}) - \mu_0)\right\}$.
		\item If $\sum_{k=1}^K (\tau_0(x_k) - \mu_0)f_k < 0$, let $k^* = \min \{k \in \{1,\ldots,K\}: f_k = p_k\}$ and set $f_{k^*} = \max\left\{0,-\sum_{k=k^* + 1}^{K} (\tau_0(x_k) - \mu_0)p_k/(\tau_0(x_{k^*}) - \mu_0)\right\}$.
	\end{enumerate}
	\item Go to step 2.
\end{enumerate}
\end{algorithm}

\noindent
When $\mu_0$ exceeds $\E_{W_0}[Y(1) - Y(0)]$, this algorithm reduces the weights associated with smallest CATEs until $\mu_0$ equals $\E[Y(1) - Y(0) \mid W^* = 1]$ for some subpopulation. When $\mu_0 < \E_{W_0}[Y(1) - Y(0)]$, the same procedure is instead applied to the largest CATEs. The support assumption of Theorem \ref{thm:existence2} guarantees that this algorithm ends.

When $X$ is not discretely supported, the problem can still be cast as a linear program, but its dimension is infinite, which leads to implementation difficulties. However, we show this program has an exact, analytical solution even when the components of $X$ are allowed to be discrete, continuous, and mixed, as is often the case in practice.

\begin{theorem}\label{thm:unc_upperbounds2}
	Let $\mu(a,\tau_0)$ be an estimand satisfying equation \eqref{eq:weighted_est_def}. Suppose Assumption~\ref{assn:reg} holds. If $\mu_0 \notin \mathcal{S}(\tau_0;W_0)$, then  $\text{MIV}(a,W_0;\{\tau_0\}) = 0$. If $\mu_0 \in \mathcal{S}(\tau_0;W_0)$, then
\begin{align} \label{eq:HC_Uppbound}
	\text{MIV}(a,W_0;\{\tau_0\}) =& \begin{cases}
			\Prob_{W_0}(T_\mu \leq \alpha^+)  - \frac{\E_{W_0}[T_\mu \1(T_\mu \leq \alpha^+)]}{\alpha^+}\1(\alpha^+ > 0) &\text{ if } \mu_0 < E_0 \\
			\Prob_{W_0}(T_\mu \geq \alpha^-)  - \frac{\E_{W_0}[T_\mu \1(T_\mu \geq \alpha^-)]}{\alpha^-}\1(\alpha^- < 0)
			  &\text{ if }\mu_0 > E_0 \\
			1 &\text{ if } \mu_0 = E_0,
			\end{cases}
\end{align}
where $T_\mu \coloneqq \tau_0(X) - \mu_0$, $E_0 \coloneqq \E_{W_0}[Y(1) - Y(0)]$, $\alpha^+ \coloneqq \inf\{\alpha \geq 0: \E_{W_0}[T_\mu \1(T_\mu \leq \alpha)] \geq 0\}$, and $\alpha^- \coloneqq \sup\{\alpha \leq 0: \E_{W_0}[T_\mu \1(T_\mu \geq \alpha)] \leq 0\}$.  
\end{theorem}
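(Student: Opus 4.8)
The plan is to recognize \eqref{eq:HC_optimum} as an infinite-dimensional linear program and solve it by a threshold (bang-bang) rule whose optimality is certified by a single Lagrange multiplier. First I would reduce the supremum over regular subpopulations to a supremum over conditional-probability functions. By Proposition \ref{prop:subpops_properties}, any $W^* \in \text{SP}(W_0)$ satisfies $\E[Y(1)-Y(0)\mid W^*=1] = \mu(\underline{w}^*,\tau_0)$ with $\underline{w}^*(X) = \Prob(W^*=1\mid X,W_0=1) \in [0,1]$; conversely, any measurable $\underline{w}^*:\supp(X)\to[0,1]$ with $\E[\underline{w}^*\mid W_0=1]>0$ is realized by the regular subpopulation $W^* = \1(U \leq \underline{w}^*(X))\cdot W_0$ with $U\sim\text{Unif}(0,1)$ independent of $(Y(1),Y(0),X,W_0)$. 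Writing $T_\mu = \tau_0(X)-\mu_0$, the membership condition $W^*\in\mathcal{W}(a;W_0,\{\tau_0\})$, i.e.\ $\mu_0 = \mu(\underline{w}^*,\tau_0)$, is equivalent to the single linear moment constraint $\E[T_\mu\,\underline{w}^*\mid W_0=1]=0$, while the objective $\Prob(W^*=1\mid W_0=1)=\E[\underline{w}^*\mid W_0=1]$ is linear in $\underline{w}^*$. Thus \eqref{eq:HC_optimum} is the problem of maximizing $\E[\underline{w}^*\mid W_0=1]$ over $\underline{w}^*:\supp(X)\to[0,1]$ subject to $\E[T_\mu\,\underline{w}^*\mid W_0=1]=0$. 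By Theorem \ref{thm:existence2}, the feasible set is nonempty if and only if $\mu_0\in\mathcal{S}(\tau_0;W_0)$, which disposes of the first case where $\overline{P}(a,W_0;\{\tau_0\})=0$.

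Next I would dispatch the degenerate case and fix the candidate optimizer. When $\mu_0=E_0$ we have $\E[T_\mu\mid W_0=1]=E_0-\mu_0=0$, so $\underline{w}^*\equiv 1$ is feasible and attains the global maximum $1$. For $\mu_0<E_0$ (the case $\mu_0>E_0$ is identical after replacing $T_\mu$ by $-T_\mu$, which swaps the roles of $\alpha^+$ and $\alpha^-$), the constraint value at full weight is $\E[T_\mu\mid W_0=1]>0$, so mass must be shed from the units with the largest values of $T_\mu$. Let $g(\alpha)=\E[T_\mu\1(T_\mu\leq\alpha)\mid W_0=1]$ and let $\alpha^+$ be the level at which $g$ crosses zero from below, which is the content of the definition $\alpha^+=\inf\{\alpha:\E[T_\mu\1(T_\mu\leq\alpha)\mid W_0=1]\geq 0\}$. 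I would take as candidate the threshold rule $\bar{w}(X)=\1(T_\mu<\alpha^+)+\theta\,\1(T_\mu=\alpha^+)$, where $\theta\in[0,1]$ is chosen so that the constraint holds exactly; concretely $\theta\,\alpha^+\Prob(T_\mu=\alpha^+\mid W_0=1)=-\E[T_\mu\1(T_\mu<\alpha^+)\mid W_0=1]$. That such $\theta$ lies in $[0,1]$ follows from the defining crossing property of $\alpha^+$, namely $\E[T_\mu\1(T_\mu<\alpha^+)\mid W_0=1]\leq 0\leq\E[T_\mu\1(T_\mu\leq\alpha^+)\mid W_0=1]$, together with $\alpha^+>0$.

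The crux is optimality, which I would establish by a one-multiplier supporting-hyperplane argument rather than by invoking general infinite-dimensional duality. Let $\underline{w}^*$ be any feasible solution. Since both $\underline{w}^*$ and $\bar{w}$ satisfy $\E[T_\mu\,(\cdot)\mid W_0=1]=0$, for the multiplier $\lambda=1/\alpha^+>0$ I have
\[
	\E[\underline{w}^*-\bar{w}\mid W_0=1] = \E[(1-\lambda T_\mu)(\underline{w}^*-\bar{w})\mid W_0=1].
\]
On $\{T_\mu<\alpha^+\}$ the factor $1-\lambda T_\mu>0$ while $\bar{w}=1\geq\underline{w}^*$; on $\{T_\mu>\alpha^+\}$ the factor $1-\lambda T_\mu<0$ while $\bar{w}=0\leq\underline{w}^*$; and on $\{T_\mu=\alpha^+\}$ the factor vanishes. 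Hence the integrand is pointwise nonpositive, so $\E[\underline{w}^*\mid W_0=1]\leq\E[\bar{w}\mid W_0=1]$ and $\bar{w}$ is optimal. Finally I would read off the value: using the constraint to eliminate $\theta$ gives
\[
	\E[\bar{w}\mid W_0=1] = \Prob(T_\mu\leq\alpha^+\mid W_0=1) - \frac{\E[T_\mu\1(T_\mu\leq\alpha^+)\mid W_0=1]}{\alpha^+},
\]
which is the stated formula (the bookkeeping at the atom $\{T_\mu=\alpha^+\}$ turns the strict inequality into the weak one automatically), and the symmetric computation with $\alpha^-$ handles $\mu_0>E_0$.

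I expect the main obstacle to lie in the careful treatment of the threshold rather than in the optimality certificate, which is clean. Specifically, one must verify that $\alpha^+$ (resp.\ $\alpha^-$) is well defined and selects the zero-crossing on the correct side, so that the multiplier $\lambda=1/\alpha^+$ carries the right sign; one must handle the atom $\{T_\mu=\alpha^+\}$ so that the fractional weight $\theta$ genuinely lies in $[0,1]$; and one must dispose of boundary configurations, for instance $\mu_0$ at the edge of the support of $\tau_0$, where $\alpha^+$ approaches $0$ and the division must be interpreted by continuity. The reduction from subpopulations to $[0,1]$-valued functions also requires checking measurability and that the optimizing function is attainable by a regular subpopulation, but this is supplied directly by the $U\sim\text{Unif}(0,1)$ construction.
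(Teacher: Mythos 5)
Your proposal is correct and follows essentially the same route as the paper's proof: both reduce the problem to maximizing $\E[\underline{w}^*(X)\mid W_0=1]$ over $[0,1]$-valued functions subject to the single linear constraint $\E[T_\mu\,\underline{w}^*(X)\mid W_0=1]=0$, exhibit the identical threshold candidate $\1(T_\mu<\alpha^+)+\theta\,\1(T_\mu=\alpha^+)$ with the fractional weight at the atom, and certify optimality with the same multiplier $1/\alpha^+$ (the paper writes the bound as $\E[\underline{w}^*\mid W_0=1]=\E[\underline{w}^*(\alpha^+-T_\mu)\mid W_0=1]/\alpha^+\leq P^+$, which is your complementary-slackness inequality rearranged). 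The crossing inequalities $\E[T_\mu\1(T_\mu<\alpha^+)\mid W_0=1]\leq 0\leq \E[T_\mu\1(T_\mu\leq\alpha^+)\mid W_0=1]$ that you flag as the delicate point are exactly what the paper establishes via its left/right-continuity lemma (Lemma \ref{lem:RLcont}), along with $0<\alpha^+<\infty$, before running the same attainability and regular-subpopulation checks you describe.
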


\noindent
The computation of these bounds can be done using a linear programming algorithm when $X$ is discrete, or through plug-in estimators of the terms in equation \eqref{eq:HC_Uppbound} regardless of the nature of the support of $X$.

To illustrate this theorem, let $\tau_0(X)$ be continuously distributed with support $[\underline{\tau},\overline{\tau}]$, and suppose $\mu_0 \in (\underline{\tau},\overline{\tau})$. Without loss of generality, assume $E_0 \geq \mu_0$. If $E_0 = \mu_0$, then the estimand is perfectly representative of the population since it equals the average treatment effect over it. In the case where $E_0 > \mu_0$, the estimand is not representative of the entire population. 
We are searching for the largest subpopulation $\{W^* = 1\}$ such that $\E_{W_0}[Y(1) - Y(0)\mid W^*=1] = \mu_0$. Initializing $W^*$ at $W_0$, removing the subpopulation with the largest values of $\tau_0(x)$ yields the steepest decrease in $\E_{W_0}[Y(1) - Y(0)\mid W^* = 1]$. Therefore, $\text{MIV}(a,W_0;\{\tau_0\})$ is obtained by removing a subpopulation of the kind $W^-(\alpha) \coloneqq \1(\tau_0(X) > \alpha) \cdot W_0$ for a given threshold $\alpha$. This threshold is determined by solving the constraint 
\begin{align}\label{eq:example_HC_bound}
	\E_{W_0}[\tau_0(X)\mid \tau_0(X) \leq \alpha] = \mu_0.
\end{align}
Thus, the remaining subpopulation $W^*$ corresponds to $W^* = (1-W^-(\alpha^*)) \cdot W_0$ where $\alpha^*$ is the unique solution to \eqref{eq:example_HC_bound}.  In this setting, the value $\text{MIV}(a,W_0;\{\tau_0\})$ is larger when the truncated subpopulations are smaller. In particular, this is the case when there are a few units with extreme values of $\tau_0$ whose removal has a large impact on the estimand, but a small impact on the share of the population.

\section{Diagnostics and Bounds}\label{sec:diagnostics_bounds}

We now focus on the measure $\amax^{-1}$, the MIV under unrestricted heterogeneity, and use it as a diagnostic. Beyond indicating the size of the implicit subpopulation, it yields simple bounds on target parameters. These bounds remain valid even with negative weights, making the measure informative even when the estimand lacks a causal representation. We next show that the fraction of negative weights is bounded above by $1 - \amax^{-1}$. We also describe how the characteristics of the subpopulation $W^*$ can be recovered using the weight function and suggest reporting them as additional diagnostics in empirical applications.

\subsection{Using the MIV to Bound Average Effects}\label{subsec:bounds}

We first show how to bound average treatment effects using the weighted estimand and our diagnostic $\amax^{-1}$. Consider a scenario where only the weighted estimand $\mu(a,\tau_0)$ and its maximum weight $\amax$ are known. For example, this could be the case if a researcher uses a weighted estimand (e.g., OLS) and reports the measure we propose in Definition \ref{def:internal_validity} to quantify its degree of internal validity. We can decompose the target estimand, $\E_{W_0}[Y(1) - Y(0)]$, as
\begin{align}
	\E_{W_0}[Y(1) - Y(0)] \; &= \; \E_{W_0}[(a(X)\amax^{-1} + (1 - a(X)\amax^{-1}))\tau_0(X)]\notag\\
	&= \; \mu(a,\tau_0) \amax^{-1} \; + \; \E_{W_0}[(1 - a(X)\amax^{-1})\tau_0(X)].\label{eq:bounds_eq1}
\end{align}
The first term in equation \eqref{eq:bounds_eq1} is determined from the estimand and the diagnostic $\amax^{-1}$, while the second term depends on the CATE\@. Now suppose we have knowledge of bounds on the CATE in the sense that $\tau_0(X)$ lies in $[\underline{\tau},\overline{\tau}]$, a known interval, with probability 1. This knowledge could be derived from support restrictions on the outcomes: for example, we could set $[\underline{\tau},\overline{\tau}] = [-1,1]$ when outcomes are binary.  Alternatively, $(\underline{\tau},\overline{\tau})$ could be viewed as sensitivity parameters. 

With this knowledge, we can bound the second term in \eqref{eq:bounds_eq1} and obtain a simple, sharp bound for $\E_{W_0}[Y(1) - Y(0)]$.

\begin{proposition}[Bounds on Target Estimand]\label{prop:ATE_bounds_amax}
Suppose the weights $a(X)$ satisfy Assumption~\ref{assn:reg}. Suppose $\mu(a,\tau_0)$ and $\amax^{-1}$ are known, and that $\tau_0(X) \in [\underline{\tau},\overline{\tau}]$ a.s., where $[\underline{\tau},\overline{\tau}]$ is a known interval containing $\mu(a,\tau_0)$. Then, 
\begin{align}\label{eq:bounds1}
	\E_{W_0}[Y(1) - Y(0)] \; \in \; \left[\mu(a,\tau_0)\amax^{-1} \, + \, \underline{\tau}(1 - \amax^{-1}), \; \mu(a,\tau_0)\amax^{-1} \, + \, \overline{\tau}(1 - \amax^{-1})\right],
\end{align}
and these bounds are sharp.
\end{proposition}

\noindent
The bounds in \eqref{eq:bounds1} collapse to a point whenever $\amax = 1$ or $\underline{\tau} = \overline{\tau}$. If $\amax = 1$, the weights are constant and equal to 1, whereas $\underline{\tau} = \overline{\tau}$ implies that $\tau_0$ is constant. In both cases, the weighted estimand equals the target estimand. The width of the bounds is equal to $(1 - \amax^{-1}) \cdot (\overline{\tau} - \underline{\tau})$, so higher values of $\amax^{-1}$ yield narrower bounds for given values of $(\underline{\tau},\overline{\tau})$. Thus, reporting an estimate of $\amax^{-1}$ allows the researcher to compute simple bounds for the target estimand by combining it with an estimate of $\mu(a,\tau_0)$ and posited bounds on $\tau_0$.

These bounds are sharp in the sense that knowledge of only $(\mu(a,\tau_0),\amax,\underline{\tau},\overline{\tau})$ does not further constrain the target estimand. The bounds could be tightened by assuming knowledge of other aspects of the joint distribution of $(Y(1),Y(0),D,X)$, but we focus on adding a single piece of additional information to the estimand, namely our diagnostic $\amax^{-1}$.

When the weights are nonnegative, the law of iterated expectations helps clarify this result by allowing us to write the target estimand, $\E_{W_0}[Y(1) - Y(0)]$, as
\begingroup
\allowdisplaybreaks
\begin{align}
		&\E_{W_0}[Y(1) - Y(0)\mid W^*=1] \cdot \Prob_{W_0}(W^*=1) \; + \; \E_{W_0}[Y(1) - Y(0)\mid W^* = 0] \cdot  \Prob_{W_0}(W^*=0)\notag\\
		&\qquad = \mu(a,\tau_0) \cdot \text{MIV} \; + \; \E_{W_0}[Y(1) - Y(0)\mid W^* = 0] \cdot (1 - \text{MIV}).\label{eq:bound_terms}
\end{align}
\endgroup
All terms in \eqref{eq:bound_terms} can be estimated, except for the average effect $\E_{W_0}[Y(1) - Y(0)\mid W^* = 0]$. However, it can be bounded from bounds on $\tau_0(X)$  since it also equals $\E_{W_0}[\tau_0(X) \mid W^* = 0]$. Substituting $\underline{\tau}$ and $\overline{\tau}$ as the ``worst-case'' bounds yields the bounds of Proposition \ref{prop:ATE_bounds_amax}.

We obtain a similar result if, instead of bounding the support of $\tau_0(X)$, we bound its variance.

\begin{proposition}[Bounds on Target Estimand]\label{prop:ATE_variance_bound}
Suppose the weights $a(X)$ satisfy Assumption \ref{assn:reg} and $\P_{W_0}(a(X) \geq 0) = 1$. Suppose $\mu(a,\tau_0)$ and $\amax^{-1}$ are known. Suppose that $\var_{W_0}(\tau_0(X)) \leq \overline{\sigma}_\tau^2$ for some known $\overline{\sigma}_\tau \geq 0$. Then,
\begin{align}\label{eq:bounds2}
	\E_{W_0}[Y(1) - Y(0)] \; \in \; \left[\mu(a,\tau_0) - \sqrt{\amax - 1} \cdot \overline{\sigma}_\tau, \; \mu(a,\tau_0) + \sqrt{\amax - 1} \cdot \overline{\sigma}_\tau\right],
\end{align} 
and these bounds are sharp.
\end{proposition}

\noindent
Once again, these bounds collapse to a point when $\amax = 1$, or when we assume the CATEs are constant, i.e., $\overline{\sigma}_\tau = 0$. The width of these bounds increases as $\amax^{-1}$ decreases, showcasing the link between the MIV and the information about the target estimand. One caveat of Proposition \ref{prop:ATE_variance_bound} is that it requires the weights to be nonnegative, whereas Proposition \ref{prop:ATE_bounds_amax} does not impose restrictions on the sign of the weights. These bounds are sharp in the sense that their width cannot be reduced further using only knowledge of $\mu(a,\tau_0)$, $\amax$, and $\overline{\sigma}_\tau$.

\subsection{Fraction of Negative Weights}

Our proposed measure can also be used to conduct further diagnostics on the weights. For example, \cite{ChaisemartinDHaultfoeuille2020} recommend reporting the \textit{fraction} of weights that are negative in TWFE estimation. While this fraction could be estimated directly, one can use our diagnostic $\amax^{-1}$ to obtain a simple sharp bound on this fraction.

\begin{proposition}[Bounds on Fraction of Negative Weights]\label{prop:negative_weights_bound}
Suppose the weights $a(X)$ satisfy Assumption \ref{assn:reg}. Suppose $\amax$ is known. Then,
\begin{align*}
\P_{W_0}(a(X) < 0) \, \leq \, 1 - \amax^{-1},
\end{align*}
and this bound is sharp.
\end{proposition}

\noindent
For example, if $\amax^{-1} = 95\%$, then at most 5\% of weights are negative. This bound is sharp, so we cannot say more about the fraction of negative weights using only knowledge of $\amax$.

\subsection{Covariate Distribution in the Implicit Subpopulation}

A separate diagnostic measure is linked to the implicit subpopulations $W^*$. While we focus on measures of the \textit{size} of these subpopulations, it may also be useful to report their characteristics. Specifically, we recommend reporting features of the covariate distribution within $W^*$, such as mean covariate values. Indeed, this distribution is identified, since we can write, for a generic integrable function $g$,
\begin{align*}
	\E[g(X)\mid W^* = 1] = \frac{\E_{W_0}[\underline{w}^*(X)g(X)]}{\E_{W_0}[\underline{w}^*(X)]} = \mu(\underline{w}^*/\E_{W_0}[\underline{w}^*(X)],g) = \mu(a,g) = \E_{W_0}[a(X)g(X)],
\end{align*}
a simple function of weights $a(\cdot)$ and the marginal distribution of $X$\@. We can recover the average covariate values in $\{W^* = 1\}$ by setting $g(X) = X$, or the entire distribution by setting $g(X) = \1(X \leq x)$ for all $x \in \R^{d_X}$. Reporting the average covariate values of units within and outside of $W^*$ may help assess the representativeness of $\mu(a,\tau_0)$. One may also compare $\E_{W_0}[g(X)]$ to $\E[g(X) \mid W^* = 1]$. Their difference is generally larger whenever $\amax^{-1}$ is smaller, analogously to the results of Propositions \ref{prop:ATE_bounds_amax} and \ref{prop:ATE_variance_bound}.

\section{Applications to Common Estimands}\label{sec:examples2}

Here we consider three identification strategies where commonly used estimands follow the structure of equation \eqref{eq:weighted_est_def}. We show how the results in Sections \ref{sec:causal_represent} and \ref{sec:quantifying_int_valid} apply in each of these cases. For simplicity, we assume that $\amax = \sup_{x \in \supp(X\mid W_0=1)} a(x)$ in this section. This condition is satisfied when $a(\cdot)$ is continuous or when $X$ has finite support, among other cases. We also note that our assumption $\amax < \infty$ holds trivially in every case considered below.

\subsection{Unconfoundedness}\label{subsec:unconf2}

In Section \ref{sec:example1}, we provided the expression for the coefficient on $D$ in a population regression of $Y$ on $(1,D,X)$:
\begin{align*}
	\beta_\text{OLS} &=\E\left[\frac{p(X)(1-p(X))}{\E[p(X)(1-p(X))]} \cdot \tau_0(X)\right].
\end{align*}
Suppose the target estimand is the ATE, i.e., $W_0 = 1$ almost surely. By Theorem \ref{thm:existence1}, there exists a regular subpopulation $W^*$ such that $\beta_\text{OLS}$ equals the average treatment effect over $W^*$ since the weight function $a_\text{OLS}(X) \coloneqq p(X)(1-p(X))/\E[p(X)(1-p(X))]$ is nonnegative. By Theorem \ref{thm:unc_upperbounds1}, the upper bound on the size of the subpopulation $W^*$ is 
\begin{align*}
	\text{MIV}(a_\text{OLS},1;\mathcal{T}_\text{all}) &= \frac{\E[p(X)(1-p(X))]}{\sup_{x \in \supp(X)}p(x)(1-p(x))}.
\end{align*}
A corresponding subpopulation $W^*$ can be written as
\begin{align*}
	W^* &= \1\left(U \leq \frac{p(X)(1-p(X))}{\sup_{x \in \supp(X)}p(x)(1-p(x))}\right),
\end{align*}
where $U \indep (Y(1),Y(0),X)$ and $U \sim \text{Unif}(0,1)$. This subpopulation is more likely to include units with larger treatment variation given covariates. Its size is largest when $\var(D\mid X) = p(X)(1-p(X))$ is constant, so that $\Prob(W^* = 1\mid X) = 1$. This occurs if and only if $p(X)$ has support contained in $\{b,1-b\}$ for some $b \in (0,1)$. Whenever $\var(p(X)(1-p(X))) > 0$, $\{W^* = 1\}$ is a strict subpopulation.

The size of this subpopulation is the expectation of $\var(D\mid X)$ divided by its maximum value. This expression can be further simplified or bounded. Its numerator is bounded above by $\var(D) = \Prob(D=1) \cdot \Prob(D=0)$, which is particularly simple to estimate. The denominator is a nonsmooth functional of $p(\cdot)$. However, if $X$ has continuous components, $p(X)$ may be continuously distributed, making it likely that $1/2 \in \supp(p(X))$. In this case, $\sup_{x \in \supp(X)} p(x)(1-p(x)) = 1/4$. Combining these two approximations yields
\begin{align*}
	\text{MIV}(a_\text{OLS},1;\mathcal{T}_\text{all}) \; &\leq \; 4 \var(D),
\end{align*}
when the support of $p(X)$ includes 1/2. This bound is trivial when $\Prob(D=1) = 1/2$, but informative when the treatment probability is close to 0 or 1. For example, if $\Prob(D=1) = 0.1$, the OLS estimand cannot causally represent more than 36\% of the population. This is consistent with \cite{Sloczynski2022}, who shows that the OLS estimand is more similar to the ATE when $\Prob(D=1)$ is closer to 1/2.

When $1/2 \in \supp(p(X))$, we can also compute bounds on the ATE using the MIV as in Proposition \ref{prop:ATE_bounds_amax}. Assuming that $\tau_0(X) \in [\underline{\tau},\overline{\tau}]$, bounds on the ATE are given by
\begin{align*}
	&\left[(\beta_\text{OLS} - \underline{\tau}) \cdot 4\E[\var(D\mid X)] + \underline{\tau}, \; (\beta_\text{OLS} - \overline{\tau}) \cdot 4\E[\var(D\mid X)] + \overline{\tau}\right].
\end{align*}
Estimating these bounds requires the estimation of one additional quantity beyond the OLS estimand, which is the expectation of $\var(D\mid X)$. The width of these bounds depends crucially on $\overline{\tau} - \underline{\tau}$, the width of the support of $\tau_0$.

To obtain the MIV of $\beta_\text{OLS}$ when the ATT is the target parameter, we use the following representation:
\begin{align*}
	\beta_\text{OLS} \; &= \; \E\left[\frac{1-p(X)}{1 - \E[p(X) \mid D = 1]} \cdot \tau_0(X)\mid D = 1\right],
\end{align*}
with weights $a_\text{OLS,2}(X) \coloneqq (1-p(X))/(1 - \E[p(X) \mid D = 1])$. 
Applying Theorem \ref{thm:unc_upperbounds1} shows that the MIV is given by
\begin{align*}
	\text{MIV}(a_\text{OLS,2},D;\mathcal{T}_\text{all}) &= \frac{\E[1-p(X) \mid D = 1]}{\sup_{x \in \supp(X\mid D=1)} (1 - p(x))} = \frac{\E[p(X)(1-p(X))]}{\Prob(D=1) \cdot  (1 -  \inf_{x \in \supp(X\mid D=1)}p(x))}.
\end{align*}
Once again, this bound depends only on $p(X)$ and the distribution of $X$. The implicit subpopulation satisfies $\Prob(W^*=1\mid D= 1, X) \propto a_\text{OLS,2}(X)$, so units with smaller propensity scores are more likely to be included in $W^*$, given that they are treated. The MIV is maximized at 1 when $p(X)$ is constant, or if $D \indep X$\@. On the other hand, if $p(X)$ takes values close to 0, the bound satisfies
\begin{align*}
	\text{MIV}(a_\text{OLS,2},D;\mathcal{T}_\text{all}) \; &\approx \; \frac{\E[p(X)(1-p(X))]}{\Prob(D=1)} \; \leq \; \frac{\Prob(D=1) \cdot \Prob(D=0)}{\Prob(D=1)} \; = \; \Prob(D=0).
\end{align*}
This suggests that the OLS estimand has greater internal validity with respect to the ATT when the treated share is \emph{smaller}. This again echoes the results in \cite{Sloczynski2022} on the relationship between $\Prob(D=1)$ and the interpretation of the OLS estimand.

Returning to the ATE as the target parameter, we can also assess the internal validity of $\beta_\text{OLS}$ when $\tau_0(X)$ is known. For simplicity, assume that $\tau_0(X)$ is continuously distributed and, without loss of generality, that $\text{ATE} > \beta_\text{OLS}$\@. Then, using Theorem \ref{thm:unc_upperbounds2}, we obtain
\begin{align*}
	\text{MIV}(a_{\text{OLS}},1;\{\tau_0\}) \, &= \, \Prob(\tau_0(X) \leq \alpha^*),
\end{align*}
where $\alpha^*$ satisfies $\E[\tau_0(X)\mid \tau_0(X) \leq \alpha^*] = \beta_\text{OLS}$. The MIV is largest when the least amount of trimming needs to be applied. This is the case when the trimmed values are largest, or when $\E[\tau_0(X)\mid \tau_0(X) \geq \alpha]$ is large for large $\alpha$.

\subsection{Instrumental Variables}\label{subsec:IV}

Now consider a binary treatment $D \in \{0,1\}$ and a binary instrument $Z \in \{0,1\}$. Potential treatments, denoted by $(D(1),D(0))$, are linked to the realized treatment through $Z$, that is, $D = D(Z)$. Potential outcomes, $Y(d,z)$ for $d,z\in \{0,1\}$, may depend on both $d$ and $z$ in the absence of an exclusion restriction. Let $Y = Y(D,Z)$ be the realized outcome. As before, let $X$ denote covariates. We make the following assumptions.

\begin{assumption}[Instrument Validity]\label{assn:IV}
Almost surely, the following hold:
\begin{enumerate}
	\item Exogeneity: $(Y(0,0),Y(1,0),Y(0,1),Y(1,1),D(1),D(0)) \indep Z\mid X$;
	\item Exclusion: $\Prob(Y(d,0) = Y(d,1) \mid X) = 1$ for $d \in \{0,1\}$;
	\item First stage: $\Prob(Z=1\mid X) \in (0,1)$ and $\Prob(D(1)=1\mid X) \neq \Prob(D(0)=1\mid X)$;
	\item Strong monotonicity: $\Prob(D(1) \geq D(0)\mid X) = 1$.
\end{enumerate}
\end{assumption}

\noindent
The first instrumental variables estimand we consider was originally studied by \cite{AngristImbens1995}. In addition to Assumption \ref{assn:IV}, suppose that the model for $X$ is saturated, with $K$ possible combinations of covariate values, i.e., let $\supp(X) = \{x_1,\ldots,x_K\}$. Let $X_S \coloneqq \left( 1, \1(X = x_1), \ldots, \1(X = x_{K-1}) \right)$ and $Z_S \coloneqq \left( Z, Z\cdot\1(X = x_1), \ldots, Z\cdot\1(X = x_{K-1}) \right) = ZX_S$, where $Z_S$ is the constructed instrument vector. The estimand of interest is given by
\begin{align*}
\beta_\text{2SLS} \coloneqq \left[ \left( \E \left[ W_S^{\prime} Q_S \right] \left( \E \left[ Q_S^{\prime} Q_S \right] \right) ^{-1} \E \left[ Q_S^{\prime} W_S \right] \right) ^{-1} \E \left[ W_S^{\prime} Q_S \right] \left( \E \left[ Q_S^{\prime} Q_S \right] \right) ^{-1} \E \left[ Q_S^{\prime} Y \right] \right] _1,
\end{align*}
where $W_S \coloneqq \left( D, X_S \right)$, $Q_S \coloneqq \left( Z_S, X_S \right)$, and $\left[ \cdot \right] _k$ denotes the $k$th element of the corresponding vector. This estimand has been studied by \cite{AngristImbens1995}, \cite{Kolesar2013}, \cite{BlandholBonneyMogstadTorgovitsky2026}, and \cite{Sloczynski2026}.

\begin{proposition}
\label{prop:ai1995}
	Suppose Assumption \ref{assn:IV} holds. Suppose $X$ is discrete with finite support. Then 
	\begin{align*}
		\beta_\text{2SLS} &= \E\left[ \frac{\cov(D,Z\mid X)}{\E[\cov(D,Z\mid X)\mid T_C = 1]} \cdot \tau_0(X) \mid T_C = 1\right]
	\end{align*}
where $T_C \coloneqq \1(D(1) > D(0))$ denotes the subpopulation of compliers and $\tau_0(X) \coloneqq \E[Y(1)-Y(0)\mid T_C = 1, X]$.
\end{proposition}

\noindent
Thus, $\beta_\text{2SLS}$ satisfies the representation in \eqref{eq:weighted_est_def} with $W_0 = T_C$. Note that $\beta_\text{2SLS} = \text{LATE} \coloneqq \E[Y(1) - Y(0) \mid T_C = 1]$ if and only if the weights, denoted by $a_\text{2SLS}(X)$, are uncorrelated with $\tau_0(X)$ among compliers; this holds when either the weights or $\tau_0$ is constant in $X$.

The practical limitation of focusing on $\beta_\text{2SLS}$ is that applied researchers rarely create multiple interacted instruments \citep[cf.][]{BlandholBonneyMogstadTorgovitsky2026}, which is how $Z_S$ is defined and used to obtain $\beta_\text{2SLS}$ above. A more practically relevant estimand is the ``noninteracted'' IV estimand,
\begin{align*}
	\beta_\text{IV} \coloneqq \left[ \left( \E \left[ Q^{\prime} W \right] \right) ^{-1} \E \left[ Q^{\prime} Y \right] \right]_1,
\end{align*}
where $Q \coloneqq \left( Z, X \right)$ and $W \coloneqq \left( D, X \right)$. We also impose the following assumption on the instrument propensity score, implied by the saturated specification in Proposition \ref{prop:ai1995}.
\begin{assumption}[Rich Covariates]
\label{assn:IPS}
	$\Prob(Z=1\mid X)$ is linear in $X$.
\end{assumption}

\noindent
Under Assumptions \ref{assn:IV} and \ref{assn:IPS}, \cite{Sloczynski2026} obtains the following representation of the ``noninteracted'' IV estimand.
\begin{proposition}
\label{prop:justid}
Suppose Assumptions \ref{assn:IV} and \ref{assn:IPS} hold. Then
\begingroup
\allowdisplaybreaks
\begin{align*}
	\beta_\text{IV} &= \E\left[ \frac{\var(Z\mid X)}{\E[\var(Z\mid X)\mid T_C = 1]} \cdot \tau_0(X) \mid T_C = 1 \right].
\end{align*}
\endgroup
\end{proposition}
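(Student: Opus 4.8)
The plan is to reduce $\beta_\text{IV}$ to a ratio of unconditional expectations of conditional covariances, then invoke the conditional LATE theorem cell by cell in $X$, and finally rewrite the result as an average over the complier subpopulation using iterated expectations. Since $\beta_\text{IV}$ is the just-identified IV coefficient on $D$ in a system with scalar instrument $Z$ and common controls $X$ appearing in both $Q = (Z,X)$ and $W = (D,X)$, the partitioned (Frisch--Waugh--Lovell) formula for IV gives
\[
\beta_\text{IV} = \frac{\E[\tilde Z\,Y]}{\E[\tilde Z\,D]}, \qquad \tilde Z \coloneqq Z - L(Z\mid X),
\]
where $L(Z\mid X)$ denotes the linear projection of $Z$ on $X$.

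First I would use Assumption \ref{assn:IPS} to replace the linear projection by the conditional mean: since $\Prob(Z=1\mid X) = \E[Z\mid X]$ is linear in $X$, we have $L(Z\mid X) = \E[Z\mid X]$ and hence $\tilde Z = Z - \E[Z\mid X]$. Conditioning on $X$ and applying the law of iterated expectations then turns the numerator and denominator into $\E[\tilde Z\,Y] = \E[\cov(Z,Y\mid X)]$ and $\E[\tilde Z\,D] = \E[\cov(Z,D\mid X)]$, so that
\[
\beta_\text{IV} = \frac{\E[\cov(Z,Y\mid X)]}{\E[\cov(Z,D\mid X)]}.
\]

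Next I would evaluate each conditional covariance. Because $Z$ is binary, $\cov(Z,Y\mid X) = \var(Z\mid X)\,(\E[Y\mid Z=1,X]-\E[Y\mid Z=0,X])$ and likewise for $D$, so the ratio of the two conditional covariances is exactly the conditional Wald estimand. Under Assumption \ref{assn:IV}, exogeneity and exclusion give $\E[Y\mid Z=z,X]$ its potential-outcome interpretation, and the conditional LATE theorem yields $\cov(Z,Y\mid X) = \cov(Z,D\mid X)\,\tau_0(X)$ with $\tau_0(X) = \E[Y(1)-Y(0)\mid D(1)>D(0),X]$. Combining exogeneity, $D=D(Z)$, and strong monotonicity, the first-stage difference simplifies to $\E[D\mid Z=1,X]-\E[D\mid Z=0,X] = \Prob(D(1)>D(0)\mid X)$, whence $\cov(Z,D\mid X) = \var(Z\mid X)\,\Prob(D(1)>D(0)\mid X)$, which is nonnegative by monotonicity. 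The first-stage part of Assumption \ref{assn:IV} (namely $\Prob(Z=1\mid X)\in(0,1)$ and $\Prob(D(1)=1\mid X)\neq\Prob(D(0)=1\mid X)$) guarantees these quantities are nonzero, so the ratio is well defined.

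Substituting and writing $W_0 = \1(D(1)>D(0))$, both numerator and denominator carry a common factor $\Prob(W_0=1\mid X)$, and iterated expectations gives $\E[g(X)\Prob(W_0=1\mid X)] = \Prob(W_0=1)\,\E[g(X)\mid W_0=1]$ for any function $g$. Applying this with $g(X)=\var(Z\mid X)\tau_0(X)$ in the numerator and $g(X)=\var(Z\mid X)$ in the denominator cancels the factor $\Prob(W_0=1)$ and delivers the claimed representation. The hard part is really the first step: establishing the partitioned IV formula and recognizing that Assumption \ref{assn:IPS} is precisely what upgrades the linear projection $L(Z\mid X)$ to the conditional expectation $\E[Z\mid X]$. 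Without linearity of the instrument propensity score the residualized instrument would not equal $Z-\E[Z\mid X]$, and the numerator and denominator would fail to collapse into the clean conditional covariances above; the remaining steps are standard algebra together with the conditional LATE theorem applied within covariate cells.
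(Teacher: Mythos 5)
Your proposal is correct: the FWL reduction $\beta_\text{IV} = \E[\tilde Z Y]/\E[\tilde Z D]$, the use of Assumption \ref{assn:IPS} to upgrade the linear projection $L(Z\mid X)$ to $\E[Z\mid X]$ so that both moments collapse to $\E[\cov(Z,Y\mid X)]$ and $\E[\cov(Z,D\mid X)]$, the cell-by-cell conditional LATE identities, and the final iterated-expectations step conditioning on $W_0 = \1(D(1)>D(0))$ are all sound, and the first-stage conditions in Assumption \ref{assn:IV} do ensure the denominator is strictly positive. The paper itself does not prove this proposition but imports the representation from \cite{Sloczynski2020}, and your argument is essentially the same derivation as in that source, so there is nothing to flag beyond the routine convention that $X$ (or the projection) includes a constant so that ``linear in $X$'' covers the intercept.
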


\noindent
It follows that $\beta_\text{IV}$ is a weighted estimand satisfying \eqref{eq:weighted_est_def} with the same CATE function as in Proposition \ref{prop:ai1995}, and where the average is again taken over the complier subpopulation. Let $a_\text{IV}(X)$ denote the weights for this estimand.

\subsubsection*{Causal Representation and Internal Validity of IV Estimands}\label{subsubsec:IV2}

We first consider the estimand $\beta_\text{2SLS}$, which can be characterized as $\mu(a_\text{2SLS},\tau_0)$. Since $a_\text{2SLS}(X) \geq 0$, there exists a subpopulation of compliers such that $\beta_\text{2SLS}$ is an average treatment effect over that subpopulation. Its maximum size is given by
\begin{align*}
	\text{MIV}(a_\text{2SLS},T_C;\mathcal{T}_\text{all}) \; &= \; \frac{\E[\cov(D,Z\mid X)\mid T_C = 1]}{\sup_{x \in \supp(X \mid T_C = 1)}\cov(D,Z\mid X = x)}.
\end{align*}
The maximum value of the MIV is obtained when $\cov(D,Z\mid X)$ does not depend on $X$\@. This occurs, for example, when the instrument $Z$ and complier status $T_C$ are independent of $X$\@. In this case, the MIV is 1 and the estimand equals the LATE\@.

Under the representation in Proposition \ref{prop:justid}, the IV estimand has a different weight function, which is proportional to $\var(Z\mid X)$\@. Here, $a_\text{IV}(X) \geq 0$ and
\begin{align*}
	\text{MIV}(a_\text{IV},T_C;\mathcal{T}_\text{all}) &= \frac{\E[\var(Z\mid X)\mid T_C = 1]}{\sup_{x \in \supp(X\mid T_C = 1)} \var(Z\mid X = x)}.
\end{align*}
The MIV of $\beta_\text{IV}$ is maximized when $\var(Z\mid X)$ is constant, which occurs when $Z$ is independent of $X$\@. In this case, $\beta_\text{IV}$ equals the LATE\@. The quantities $\text{MIV}(a_\text{IV},T_C;\mathcal{T}_\text{all})$ and $\text{MIV}(a_\text{2SLS},T_C;\mathcal{T}_\text{all})$ are not ranked uniformly in the distributions of $(D(1),D(0),X,Z)$ as there are data-generating processes that make each of these two quantities larger than the other. For example, if $\var(Z \mid X)$ is constant but $\Prob(D(1) > D(0)\mid X)$ is not, then $\text{MIV}(a_\text{2SLS},T_C;\mathcal{T}_\text{all}) < \text{MIV}(a_\text{IV},T_C;\mathcal{T}_\text{all})$. This is plausible if $Z$ is randomly assigned and $X$ is a vector of pre-assignment characteristics. This inequality is reversed if $\cov(D,Z\mid X)$ is constant but $\Prob(D(1) > D(0)\mid X)$ is not. They have equal MIVs when $\Prob(T_C = 1\mid X)$ is constant. In this case, the estimands are equal, so this is not unexpected.

\subsection{Difference-in-Differences}\label{subsec:DiD}

Now suppose units are observed for $T$ periods and, for period $t \in \{1,\ldots,T\}$, let $D_t \in \{0,1\}$ denote binary treatment, $(Y_t(1),Y_t(0))$ potential outcomes, and $Y_t = Y_t(D_t)$ the realized outcome. Units are untreated prior to period $G \in \{2,3,\ldots,T\} \cup \{+\infty\}$, receive the treatment in period $G$, and remain treated thereafter; $G = +\infty$ denotes units never treated. Thus, $D_t = \1(G \leq t)$, and no units are treated in the first period. We assume the panel is balanced.

The TWFE estimand is often used in this setting. It consists of regressing the outcome on the treatment indicator, group indicators, and period indicators. By partitioned regression results, the coefficient on the treatment indicator is
\begin{align*}
\beta_\text{TWFE} \coloneqq \avgt\E\left[\ddot{D}_t Y_t\right]\big/\avgt \E\left[\ddot{D}_t^2 \right],
\end{align*}
where $\ddot{D}_t \coloneqq D_t - \frac{1}{T}\sum_{s=1}^T D_s - \E[D_t] + \frac{1}{T}\sum_{s=1}^T \E[D_s]$.

We assume a version of parallel trends most similar to the one in \cite{ChaisemartinDHaultfoeuille2020}.
\begin{assumption}[Difference-in-Differences]\label{assn:DiD}
Assume
\begin{enumerate}
	\item $\supp(G) = \{2,3,\ldots,T\} \cup \{+\infty\}$;
	\item For all $t \in \{2,\ldots,T\}$ and $g,g' \in \supp(G)$, we have that  $\E[Y_t(0) - Y_{t-1}(0)\mid G=g] = \E[Y_t(0) - Y_{t-1}(0)\mid G = g']$.
\end{enumerate}
\end{assumption}

\noindent
We use a proposition that is essentially a special case of Theorem 1 in \cite{ChaisemartinDHaultfoeuille2020} to obtain a representation of $\beta_\text{TWFE}$ as a weighted average.

\begin{proposition}
\label{prop:CDH}
	Suppose Assumption \ref{assn:DiD} holds. Then
\begin{equation*}
	\beta_\text{TWFE} = \avgt \E \left[ \frac{a_\text{TWFE}(G,t) \cdot \Prob(D_t=1\mid G)}{\frac{1}{T}\sum_{s = 1}^T \E[a_\text{TWFE}(G,s) \cdot \Prob(D_s=1\mid G)]} \cdot \E[Y_t(1) - Y_t(0)\mid G,D_t=1] \right],
\end{equation*}
where $a_\text{TWFE}(g,t) \coloneqq 1 - \frac{1}{T}\sum_{s=1}^T \E[D_s\mid G = g] - \E[D_t] + \frac{1}{T}\sum_{s=1}^T \E[D_s]$.
\end{proposition}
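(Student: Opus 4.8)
The plan is to exploit that treatment is a deterministic function of the cohort variable, $D_t = \1(G \le t)$, so that the within-unit time average $\frac{1}{T}\sum_{s=1}^T D_s$, the two-way demeaned treatment $\ddot{D}_t$, and the weight $a_\text{TWFE}(G,t)$ are all measurable functions of $(G,t)$. First I would record two algebraic facts. Since $\E[D_s \mid G] = \1(G \le s) = D_s$, we have $\frac{1}{T}\sum_{s=1}^T \E[D_s \mid G] = \frac{1}{T}\sum_{s=1}^T D_s$, so that
\[
a_\text{TWFE}(G,t) = 1 - \tfrac{1}{T}\sum_{s=1}^T D_s - \E[D_t] + \tfrac{1}{T}\sum_{s=1}^T \E[D_s].
\]
Comparing this with $\ddot{D}_t = D_t - \frac{1}{T}\sum_{s=1}^T D_s - \E[D_t] + \frac{1}{T}\sum_{s=1}^T \E[D_s]$ and using $D_t^2 = D_t$ gives the key identity $\ddot{D}_t D_t = a_\text{TWFE}(G,t) D_t$. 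I would also record the two orthogonality properties of two-way demeaning: $\frac{1}{T}\sum_{t=1}^T \ddot{D}_t = 0$ almost surely (averaging over time within a unit) and $\E[\ddot{D}_t] = 0$ for each $t$ (averaging over units within a period).

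Next I would decompose the realized outcome as $Y_t = Y_t(0) + D_t\,(Y_t(1) - Y_t(0))$ and split the numerator $\avgt \E[\ddot{D}_t Y_t]$ accordingly. For the treatment-effect piece, the identity $\ddot{D}_t D_t = a_\text{TWFE}(G,t) D_t$ together with the law of iterated expectations (conditioning on $G$, on which both $a_\text{TWFE}(G,t)$ and $D_t$ are measurable) yields
\[
\avgt \E[\ddot{D}_t D_t (Y_t(1) - Y_t(0))] = \avgt \E\big[a_\text{TWFE}(G,t)\,\Prob(D_t = 1 \mid G)\,\E[Y_t(1) - Y_t(0) \mid G, D_t = 1]\big],
\]
where I use $\Prob(D_t = 1 \mid G) = D_t$ and the fact that conditioning on $D_t = 1$ is vacuous on $\{G \le t\}$ and receives zero weight otherwise. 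This is exactly the numerator claimed in the proposition.

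The crux is then to show that the baseline piece $\avgt \E[\ddot{D}_t Y_t(0)]$ vanishes, and this is where parallel trends enters. I would first note that the parallel-trends part of Assumption~\ref{assn:DiD} telescopes across periods to the additive representation $\E[Y_t(0) \mid G] = \gamma_G + \lambda_t$, with $\gamma_G = \E[Y_1(0)\mid G]$ a function of $G$ and $\lambda_t$ a function of $t$ alone. Then, conditioning on $G$, substituting this representation, and using linearity,
\[
\avgt \E[\ddot{D}_t Y_t(0)] = \avgt \E[\ddot{D}_t(\gamma_G + \lambda_t)] = \E\big[\gamma_G\,\textstyle\avgt \ddot{D}_t\big] + \avgt \lambda_t\,\E[\ddot{D}_t] = 0,
\]
by the two orthogonality properties from the first step. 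I expect this to be the main obstacle, in the sense that it is the only place the identifying assumption does real work: the two-way demeaned treatment is orthogonal to every additively separable group-and-time function, and parallel trends is precisely what forces the untreated mean $\E[Y_t(0)\mid G]$ into that additively separable class.

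Finally I would treat the denominator. Expanding $\ddot{D}_t^2 = \ddot{D}_t\big(D_t - \frac{1}{T}\sum_s D_s - \E[D_t] + \frac{1}{T}\sum_s \E[D_s]\big)$ and discarding the three remaining cross terms by the same orthogonality argument gives $\avgt \E[\ddot{D}_t^2] = \avgt \E[\ddot{D}_t D_t]$; applying $\ddot{D}_t D_t = a_\text{TWFE}(G,t) D_t$ and $D_t = \Prob(D_t = 1 \mid G)$ then yields $\avgt \E[\ddot{D}_t^2] = \avgt \E[a_\text{TWFE}(G,t)\,\Prob(D_t = 1 \mid G)]$, the denominator in the statement. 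Taking the ratio completes the proof. Alternatively, one could simply verify that this is the specialization of Theorem~1 in \cite{ChaisemartinDHaultfoeuille2020} to the present potential-outcome model, but the direct computation above is short and self-contained.
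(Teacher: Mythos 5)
Your proposal is correct and follows essentially the same route as the paper's proof: the same decomposition $Y_t = Y_t(0) + D_t(Y_t(1)-Y_t(0))$, the same key identity $\ddot{D}_t D_t = a_\text{TWFE}(G,t)D_t$ via $D_t^2 = D_t$ (the paper's equation for $\ddot{D}D$), the same use of parallel trends to kill the baseline term, and the same reduction $\avgt\E[\ddot{D}_t^2] = \avgt\E[\ddot{D}_t D_t]$ for the denominator. The only differences are cosmetic: you work directly with period-indexed sums rather than the paper's auxiliary uniform period variable $P$ (which the paper introduces mainly to connect to its general framework), and you phrase the parallel-trends step as additive separability $\E[Y_t(0)\mid G] = \gamma_G + \lambda_t$ plus orthogonality of $\ddot{D}_t$ to such functions, whereas the paper equivalently shows the double-demeaned baseline mean $\ddot{\theta}(g,t)$ vanishes identically.
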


\noindent
We show that this representation satisfies equation \eqref{eq:weighted_est_def} by introducing an auxiliary variable $P$, uniformly distributed on $\{1,\ldots,T\}$ independently of $\{(Y_t(0),Y_t(1),G)\}_{t=1}^T$. This \textit{period} variable denotes the time period and is used to define $(Y(1),Y(0),Y,D) \coloneqq (Y_P(1),Y_P(0),Y_P,D_P)$, the potential outcomes, realized outcome, and treatment at random period $P$, respectively.

Letting $X \coloneqq (G,P)$, this means we can write $\beta_\text{TWFE}$ as
\begin{align*}
	\beta_\text{TWFE} &= \E\left[\frac{a_\text{TWFE}(X)}{\E[a_\text{TWFE}(X) \mid D = 1]} \cdot \tau_0(X) \mid D = 1\right],
\end{align*}
where $\tau_0(X) = \E[Y(1) - Y(0) \mid D=1, G, P]$, and $a_\text{TWFE}(X)$ is not generally nonnegative. A nonnegative weight function can be obtained if $\tau_0(X)$ is assumed constant over time. This property was described by \citet[Appendix 3.1]{ChaisemartinDHaultfoeuille2020} and \citet[Section 3.1.1]{Goodman-Bacon2021}, and the resulting representations of the TWFE estimand are given in their Theorem S2 and equation (16), respectively. The following proposition gives a simple expression for the weights in our setting.

\begin{proposition}
\label{prop:GB}
	Suppose Assumption \ref{assn:DiD} holds and that $\E[Y_t(1) - Y_t(0)\mid D=1, G] = \E[Y_s(1) - Y_s(0)\mid D=1, G]$ for any $s,t \in \{1,\ldots,T\}$. Then
	\begin{align*}
		\beta_\text{TWFE} &= \E\left[ \frac{a_\text{TWFE,H}(G)}{\E[a_\text{TWFE,H}(G) \mid D = 1]} \cdot \E[Y(1) - Y(0)\mid D= 1, G] \mid D = 1\right],
	\end{align*}
	where $a_\text{TWFE,H}(g) \coloneqq \Prob(D=0\mid G=g) \cdot (\Prob(D = 0\mid P \geq g) + \Prob(D=1\mid P < g)) \geq 0$ for $g \in \{2,\ldots,T\}$.
\end{proposition}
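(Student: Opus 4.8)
The plan is to start from the weighted representation in Proposition~\ref{prop:CDH} and exploit the added time-invariance assumption to collapse the average over periods, then verify a single algebraic identity on the weights. Writing $\bar\tau(g) \coloneqq \E[Y(1)-Y(0)\mid D=1, G=g]$, the first step is to observe that for any group $g$ and any period $t \geq g$ we have $D_t = \1(G \leq t) = 1$ almost surely given $G=g$, so $\E[Y_t(1)-Y_t(0)\mid G=g, D_t=1] = \E[Y_t(1)-Y_t(0)\mid G=g]$. Because $P$ is drawn independently of $(Y_t(0),Y_t(1),G)$, conditioning on $\{D=1\}=\{P\geq G\}$ does not alter this conditional mean, so the maintained assumption is equivalent to $\E[Y_t(1)-Y_t(0)\mid G=g] = \bar\tau(g)$ for every $t \geq g$. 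This lets me replace the period-specific group-time effects in the numerator of Proposition~\ref{prop:CDH} by the single quantity $\bar\tau(G)$ and factor it out of the average over $t$.

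Next I would define the reduced weight $\tilde a(g) \coloneqq \frac{1}{T}\sum_{t=g}^T a_\text{TWFE}(g,t)$, obtained by summing the CDH weight over exactly the periods in which group $g$ is treated (the indicator $\Prob(D_t=1\mid G=g)=\1(g\leq t)$ restricts the support). After this reduction the numerator of $\beta_\text{TWFE}$ equals $\E[\bar\tau(G)\tilde a(G)]$ and the denominator equals $\E[\tilde a(G)]$, both now single averages over the distribution of $G$. The crux of the proof is then the identity
\[
\tilde a(g) = \Prob(D=1\mid G=g)\cdot a_\text{TWFE,H}(g),
\]
which, once established, converts the unconditional-over-$G$ averages into averages conditional on $\{D=1\}$ via $\E[h(G)\Prob(D=1\mid G)] = \Prob(D=1)\,\E[h(G)\mid D=1]$; the common factor $\Prob(D=1)$ cancels in the ratio and yields the stated representation.

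The main obstacle is verifying this weight identity, and the key is to exploit the uniform, independent draw of $P$. I would use three facts: (i) $\Prob(P \geq g) = \Prob(D=1\mid G=g)$ and $\Prob(P<g) = \Prob(D=0\mid G=g)$, so that $1 - \frac{1}{T}\sum_s \E[D_s\mid G=g] = \Prob(D=0\mid G=g)$; (ii) $\frac{1}{T}\sum_{t\geq g}\E[D_t] = \Prob(D=1,\,P\geq g) = \Prob(D=1\mid G=g)\cdot\Prob(D=1\mid P\geq g)$; and (iii) the total-probability decomposition $\E[D] = \Prob(D=0\mid G=g)\Prob(D=1\mid P<g) + \Prob(D=1\mid G=g)\Prob(D=1\mid P\geq g)$. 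Writing $a_\text{TWFE}(g,t) = \Prob(D=0\mid G=g) + \E[D] - \E[D_t]$ and summing over $t \in \{g,\dots,T\}$ (a set of size $T\cdot\Prob(D=1\mid G=g)$), substituting (ii)--(iii) and simplifying using $\Prob(D=1\mid P\geq g) = 1 - \Prob(D=0\mid P\geq g)$, the bracketed factor collapses exactly to $\Prob(D=0\mid G=g)\big(\Prob(D=0\mid P\geq g) + \Prob(D=1\mid P<g)\big) = a_\text{TWFE,H}(g)$, giving the identity with proportionality constant one. Nonnegativity of $a_\text{TWFE,H}(g)$ for $g\in\{2,\dots,T\}$ is then immediate, as it is a product of the probability $\Prob(D=0\mid G=g)$ and a sum of two probabilities.
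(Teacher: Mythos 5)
Your proposal is correct and takes essentially the same route as the paper's proof: both start from Proposition \ref{prop:CDH}, use the time-invariance assumption together with iterated expectations to collapse the average over periods, and then verify that the collapsed weight---your $\tilde a(g)$, which is exactly the paper's $\E[(1-\E[D\mid G]-\E[D\mid P]+\E[D])\,D\mid G=g]$---equals $\Prob(D=1\mid G=g)\cdot a_\text{TWFE,H}(g)$, so that the factor $\Prob(D=1\mid G)$ converts the unconditional averages over $G$ into averages conditional on $\{D=1\}$. The only difference is bookkeeping in that weight identity (the paper manipulates terms like $\E[F_G(P)\1(P\geq g)]$, while you sum $a_\text{TWFE}(g,t)$ over $t\in\{g,\ldots,T\}$ and apply the total-probability split of $\E[D]$), and your algebra checks out.
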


\noindent
As in the representation in Proposition \ref{prop:CDH}, the TWFE estimand in Proposition \ref{prop:GB} satisfies the representation in \eqref{eq:weighted_est_def}, with $X = G$, $W_0 = D$, $\tau_0(X) = \E[Y(1) - Y(0)\mid D=1,G]$, and the weight function $a_\text{TWFE,H}(G) \geq 0$. This weight function is derived in Appendix \ref{appsec:proofsforsec5}, with additional comparisons with the weights in \cite{Goodman-Bacon2021} in Appendix \ref{appsec:DiD}.

\subsubsection*{Causal Representation and Internal Validity of the TWFE Estimand}\label{subsubsec:DiD2}

We now consider the weights obtained in Proposition \ref{prop:GB} under its assumptions. These weights are nonnegative and therefore Theorem \ref{thm:existence1} guarantees the existence of a causal representation for $\beta_\text{TWFE}$ uniformly in $\tau_0 \in \mathcal{T}_\text{all}$.\footnote{In the context of Proposition \ref{prop:GB}, $\tau_0$ is a function of $G$ only, thus $\mathcal{T}_\text{all}$ denotes the set of all functions of $G$ with finite second moments. Note that this is a strict subset of all ``time-heterogeneous'' conditional average treatment effects, $\E[Y(1) - Y(0) \mid D = 1,G,P]$.} Using Theorem \ref{thm:unc_upperbounds1}, the MIV of $\beta_\text{TWFE}$ is 
\begingroup
\allowdisplaybreaks
\begin{align*}
	&\text{MIV}(a_\text{TWFE,H},D;\mathcal{T}_\text{all}) = \frac{\E[a_\text{TWFE,H}(G)\mid D =1]}{\sup_{g \in \supp(G\mid D=1)} a_\text{TWFE,H}(g)}\\
	&= \frac{\sum_{g=2}^T \var(D\mid G=g) \cdot (\Prob(D=0\mid P \geq g) + \Prob(D=1\mid P < g)) \cdot \Prob(G=g)}{\Prob(D=1) \cdot \sup_{g \in\{2,\ldots,T\}} \Prob(D=0\mid G=g) \cdot (\Prob(D=0\mid P \geq g) + \Prob(D=1\mid P < g))}.
\end{align*}
\endgroup
Due to the absorbing nature of the treatment in our setting, all expressions involving the distribution of $D$ given $P$ or $G$ can be derived as a function of the marginal distribution of $G$\@. Therefore, the MIV depends only on $\{\Prob(G=g)\}_{g \in \{2,\ldots,T\}}$.

To give some intuition, consider the case where $T = 3$ and therefore $G \in \{2,3,+\infty\}$. In this case, calculations yield that
\begin{align*}
	\text{MIV}(a_\text{TWFE,H},D;\mathcal{T}_\text{all}) &= \min\left\{\frac{2\Prob(G=2) + \omega \Prob(G=3)}{2\Prob(G = 2) + \Prob(G=3)},\frac{\omega^{-1}2\Prob(G=2) + \Prob(G=3)}{2\Prob(G = 2) + \Prob(G=3)}\right\},
\end{align*}
where $\omega \coloneqq \frac{4 - 2\Prob(G=2) - 4\Prob(G=3)}{2 - 2\Prob(G=2) - \Prob(G=3)} = \frac{a_\text{TWFE,H}(3)}{a_\text{TWFE,H}(2)}$. Therefore, $\beta_\text{TWFE}$ has perfect internal validity with respect to the ATT if and only if $\omega = 1$, which occurs if and only if $\P(G = 3) = 2/3$. Thus, the MIV is 1 when the probability of adoption at $t=3$ is $2/3$ and declines as $\P(G = 3)$ moves away from $2/3$. At that value, $a_\text{TWFE,H}(g)$ is constant in $g$, so $\beta_\text{TWFE}$ equals the ATT\@.

\section{Estimation and Inference} \label{sec:estimation}

We now consider estimation and inference for the MIV and MR\@. We focus here on the case where $\mathcal{T} = \mathcal{T}_\text{all}$ and briefly discuss the case where $\mathcal{T} = \{\tau_0\}$ in Appendix \ref{appsec:estimation}\@. We focus on estimating $\amax^{-1}$, which equals the MIV when the weights are nonnegative and remains a useful diagnostic when the MIV is zero, as shown in Section \ref{sec:diagnostics_bounds}.

Suppose we observe a random sample of size $n$, $\{(W_i,X_i)\}_{i=1}^n$, where $W_i$ is a set of variables needed to estimate $a(\cdot)$ and $w_0(\cdot)$. For example, under unconfoundedness we can let $W_i = D_i$ since the distribution of $(D,X)$ is sufficient to identify $a(\cdot)$; the outcome distribution does not affect $\amax^{-1}$. In our instrumental variables examples, we let $W_i = (D_i,Z_i)$.

Due to the normalization in Assumption \ref{assn:reg}, all of the weight functions we study are of the form
\begin{align*}
	a(X) &= \alpha(X)/\E_{W_0}[\alpha(X)]
\end{align*}
for some non-normalized weight function $\alpha(X)$ satisfying $\E_{W_0}[\alpha(X)] > 0$. Thus, we can write the MIV as
\begin{align*}
	\amax^{-1} &= \E_{W_0}[\alpha(X)]/\alpha_{\max},
\end{align*}
where $\alpha_{\max} \coloneqq \sup(\supp(\alpha(X) \mid W_0 = 1))$. We focus on the estimation of $\alpha(\cdot)$ since it allows us to separate the analysis of the expectation and essential supremum, which have different asymptotic properties.

Assuming the existence of estimators for $\alpha(\cdot)$ and $w_0(\cdot)$, we consider the following analog estimator of $\amax^{-1}$: 
\begin{align*}
	\widehat{a}_{\max}^{-1} &\coloneqq \frac{\avg \widehat{\alpha}(X_i)\widehat{w}_0(X_i)}{\avg \widehat{w}_0(X_i) \cdot \max_{i: \widehat{w}_0(X_i) > c_n} \widehat{\alpha}(X_i)}.
\end{align*}
Here $c_n$ is a tuning parameter with $c_n \to 0$ as $n \to \infty$. We first note that $\E[\alpha(X)\mid W_0=1]$ can be estimated by $\frac{\avg \widehat{\alpha}(X_i)\widehat{w}_0(X_i)}{\avg \widehat{w}_0(X_i)}$, consistently under standard conditions on $\widehat{\alpha}(\cdot)$ and $\widehat{w}_0(\cdot)$. Estimating $\alpha_{\max} = \sup(\supp(\alpha(X) \mid W_0 = 1))$ is more delicate. In some examples, this supremum is known or can be bounded above without data. For example, the OLS estimand under unconfoundedness has non-normalized weights $\alpha(X) = \var(D \mid X)$, bounded above by $1/4$. If $X$ is continuously distributed, then 1/2 may lie in the support of $p(X)$, allowing us to avoid estimating $\alpha_{\max}$. The IV estimand of Section \ref{subsec:IV} has non-normalized weights $\alpha_\text{IV}(X) = \var(Z\mid X)$ which are similarly bounded above by $1/4$. Likewise, $\alpha_\text{2SLS}(X) = \cov(D,Z\mid X) \leq 1/4$ by the Cauchy--Schwarz inequality. If knowledge of $\alpha_{\max}$ is not assumed, but $\supp(X\mid W_0=1)$ is known and $\alpha(x)$ is continuous,\footnote{Note that $\alpha(x)$ is trivially continuous on finite support.} then  $\sup_{x \in \supp(X\mid W_0=1)} \widehat{\alpha}(x)$ will be consistent for $\alpha_{\max}$ when $\widehat{\alpha}(x)$ is consistent for $\alpha(x)$ uniformly in $x \in \supp(X\mid W_0=1)$. Many parametric and nonparametric estimators for $\alpha(\cdot)$ satisfy this requirement.

In Appendix \ref{appsec:estimation}, we prove the consistency of $\widehat{a}_{\max}^{-1}$ for $\amax^{-1}$ and derive its limiting distribution. We also provide a step-by-step bootstrap algorithm that can be employed to conduct inference and prove its validity. This bootstrap approach is based on \cite{FangSantos2019} and is nonstandard, but yields valid inferences even when $\alpha_{\max}$ is estimated, as opposed to standard bootstrap approaches, such as the empirical bootstrap.

\section{Empirical Applications}\label{sec:empirical}

\defcitealias{Finkelsteinetal2012}{Finkelstein et al., 2012}

In this section, we use the proposed tools in four empirical applications, which focus on causal effects of cash transfers \citep{AEFLM2016}, well failure \citep{BFS2020}, health insurance \citepalias{Finkelsteinetal2012}, and unilateral divorce laws \citep{SW2006,Goodman-Bacon2021}.

\subsection{Causal Effects of Cash Transfers}

\defcitealias{AEFLM2016}{Aizer et al.~(2016)}
\defcitealias{BlandholBonneyMogstadTorgovitsky2026}{Blandhol et al., 2026}

Our first application revisits a study of the effects of the Mothers' Pension (MP) program on the log age at death of the children of its beneficiaries. In this study, \cite{AEFLM2016} construct a comparison group from the children of mothers who were initially deemed eligible for cash transfers from the MP program but were ultimately rejected. The main results, reported in Panel A of their Table 4, are based on four cross-sectional regression specifications, rely on the unconfoundedness assumption, and support the conclusion that cash transfers substantially increase children's longevity.

\begin{table}[!tb]
	\begin{adjustwidth}{-1in}{-1in}
		\centering
		\begin{threeparttable}
			\caption{Internal Validity of OLS Estimand for Effects of Cash Transfers}\label{tab:aizer}
			\begin{small}
				\begin{tabular}{c >{\centering\arraybackslash}m{1.35cm} >{\centering\arraybackslash}m{1.35cm} >{\centering\arraybackslash}m{0.05cm} >{\centering\arraybackslash}m{1.35cm} >{\centering\arraybackslash}m{1.35cm} >{\centering\arraybackslash}m{0.05cm} >{\centering\arraybackslash}m{1.35cm} >{\centering\arraybackslash}m{1.35cm} >{\centering\arraybackslash}m{0.05cm} >{\centering\arraybackslash}m{1.35cm} >{\centering\arraybackslash}m{1.35cm}}
					\hline\hline
          & \multicolumn{2}{c}{Specification \#1} &       & \multicolumn{2}{c}{Specification \#2} &       & \multicolumn{2}{c}{Specification \#3} &       & \multicolumn{2}{c}{Specification \#4} \\
					\cline{2-3}
					\cline{5-6}
					\cline{8-9}
					\cline{11-12}
          & ATE   & ATT   &       & ATE   & ATT   &       & ATE   & ATT   &       & ATE   & ATT \\
					\hline
          &       &       &       &       &       &       &       &       &       &       &  \\
    \multicolumn{12}{c}{A. Estimates of the effects of cash transfers}                            \\
					\hline
    OLS   & \multicolumn{2}{c}{0.0157} &       & \multicolumn{2}{c}{0.0158} &       & \multicolumn{2}{c}{0.0182} &       & \multicolumn{2}{c}{0.0167} \\
          & \multicolumn{2}{c}{(0.0058)} &       & \multicolumn{2}{c}{(0.0059)} &       & \multicolumn{2}{c}{(0.0062)} &       & \multicolumn{2}{c}{(0.0061)} \\
          &       &       &       &       &       &       &       &       &       &       &  \\
    Causal forest & 0.0092 & 0.0099 &       & 0.0133 & 0.0135 &       & 0.0121 & 0.0130 &       & 0.0121 & 0.0127 \\
          & (0.0057) & (0.0024) &       & (0.0058) & (0.0023) &       & (0.0060) & (0.0024) &       & (0.0059) & (0.0023) \\
					\hline
          &       &       &       &       &       &       &       &       &       &       &  \\
    \multicolumn{12}{c}{B. Internal validity of the OLS estimand}                                 \\
					\hline
    $\widehat{a}_{\max}^{-1}$  & 0.4675 & 0.3467 &       & 0.4126 & 0.2254 &       & 0.3745 & 0.1464 &       & 0.3745 & 0.1463 \\
    $\widehat{\P}_{W_0}(a(X) < 0)$ & 0     & 0     &       & 0.0378 & 0.0420 &       & 0.1120 & 0.1262 &       & 0.1117 & 0.1260 \\
    $\hMIV$ (uniformly in $\tau_0$) & 0.4675 & 0.3467 &       & 0     & 0     &       & 0     & 0     &       & 0     & 0 \\
    $\hMIV$ (given $\tau_0$) & 0.8276 & 0.8153 &       & 0.8889 & 0.8790 &       & 0.7134 & 0.7056 &       & 0.7774 & 0.7658 \\
				    \hline\hline
				\end{tabular}
			\end{small}
			\begin{footnotesize}
				\begin{tablenotes}[flushleft]
					\item \textit{Notes:} The data are a matched sample of 7,860 children of accepted and rejected MP applicants from \citetalias{AEFLM2016}. The outcome is log age at death, as reported in the MP records (specifications \#1 to \#3) or on the death certificate (specification \#4). The treatment is whether a mother was accepted for MP\@. The propensity score is estimated using the linear probability model, so the presence of estimated negative weights is evidence against the ``rich covariates'' assumption \citepalias{BlandholBonneyMogstadTorgovitsky2026}. The estimates of the MIV ``given $\tau_0$'' require an estimate of the CATE function, which we obtain using causal forests.
				\end{tablenotes}
			\end{footnotesize}
		\end{threeparttable}
	\end{adjustwidth}
\end{table}

Panel A of Table \ref{tab:aizer} replicates these results and additionally reports causal forest estimates of the ATE and ATT, which are on average nearly 30\% smaller than the original OLS estimates. How should we understand these estimates and their internal validity?

To answer this question, Panel B of Table \ref{tab:aizer} reports our main diagnostics. Across specifications, $\widehat{a}_{\max}^{-1}$ ranges from 37.45\% to 46.75\% when the ATE is the target parameter, and from 14.63\% to 34.67\% for the ATT\@. At the same time, when we do not place any restrictions on the CATE function, $\widehat{\text{MIV}}$ is zero in three out of four specifications. This is because the causal interpretation of the OLS estimand requires a linear model for the propensity score \citep[cf.][]{BlandholBonneyMogstadTorgovitsky2026}, and when we estimate this regression, the resulting weights are negative for 3.78\% to 12.62\% of observations across these three specifications. In other words, our diagnostics reveal, without using outcome data, that the OLS estimand may differ substantially from the ATE and ATT, and that it need not correspond to the average treatment effect for any implicit subpopulation.

The last row of Table \ref{tab:aizer} reports our estimates of the MIV that use the CATE function obtained using causal forests. This measure is not appropriate for an ``ex ante'' specification check, but instead allows the researcher to ``ex post'' evaluate the implicit subpopulation that is associated with the weighted estimand. Here, the OLS estimand represents between 71.34\% and 88.89\% of the population, or between 70.56\% and 87.90\% of the treated units. While these proportions are not negligible, they are much smaller than 1, which is the MIV of the probability limit of any consistent estimator that explicitly targets the ATE or ATT\@.

\subsection{Causal Effects of Well Failure}

Our second application focuses on causal effects of water loss in rural India. \cite{BFS2020} argue that the timing of borewell failure is as good as randomly assigned conditional on its age and location, which implies an unconfoundedness assumption. The authors estimate the impact of borewell failure on 50 outcomes, including agricultural practices, labor reallocation, child employment and schooling, household income and expenditures, assets and debt, and welfare. Because the authors primarily report OLS estimates, it is important to understand how the corresponding estimand should be interpreted. In particular, it is unclear whether it equals the average treatment effect for any subpopulation, how large that subpopulation could be, and how much the estimand may differ from the ATE and ATT\@.

\begin{table}[!b]
	\begin{adjustwidth}{-1in}{-1in}
		\centering
		\begin{threeparttable}
			\caption{Internal Validity of OLS Estimand for Effects of Well Failure}\label{tab:borewell}
			\begin{small}
				\begin{tabular}{c >{\centering\arraybackslash}m{1.9cm} >{\centering\arraybackslash}m{1.9cm} >{\centering\arraybackslash}m{0.05cm} >{\centering\arraybackslash}m{1.9cm} >{\centering\arraybackslash}m{1.9cm}}
					\hline\hline
          & \multicolumn{2}{c}{Specification \#1} &       & \multicolumn{2}{c}{Specification \#2} \\
					\cline{2-3}
					\cline{5-6}
          & ATE   & ATT   &       & ATE   & ATT \\
					\hline
    $\widehat{a}_{\max}^{-1}$  & 0.7855 & 0.3600 &       & 0.7251 & 0.2628 \\
    $\widehat{\P}_{W_0}(a(X) < 0)$ & 0.0088 & 0.0143 &       & 0.0413 & 0.0533 \\
    $\hMIV$ (uniformly in $\tau_0$) & 0     & 0     &       & 0     & 0 \\
				    \hline\hline
				\end{tabular}
			\end{small}
			\begin{footnotesize}
				\begin{tablenotes}[flushleft]
					\item \textit{Notes:} The data are a sample of 891 households in the Indian state of Karnataka from \cite{BFS2020}. The treatment is whether the household's first borewell had failed by the survey date. Specification \#1 controls for village indicators and additional covariates used by \cite{BFS2020}. Specification~\#2 adds fixed effects for the year in which each household drilled its first borewell. The propensity score is estimated using the linear probability model, so the presence of estimated negative weights is evidence against the ``rich covariates'' assumption \citepalias{BlandholBonneyMogstadTorgovitsky2026}.
				\end{tablenotes}
			\end{footnotesize}
		\end{threeparttable}
	\end{adjustwidth}
\end{table}

Given that \cite{BFS2020} use two identical right-hand-side regression specifications for every outcome, our framework allows us to address these questions with a single set of diagnostics and without using the data on the 50 outcomes one by one. Table \ref{tab:borewell} reports the values of $\widehat{a}_{\max}^{-1}$, $\widehat{\P}_{W_0}(a(X) < 0)$, and $\widehat{\text{MIV}}$, separately for each target parameter, ATE and ATT, and each of the two specifications considered by \cite{BFS2020}. A small fraction of the estimated weights is negative, which implies that $\widehat{\text{MIV}}$ is zero. The more central problem, however, is that $\widehat{a}_{\max}^{-1}$ is either 72.51\% or 78.55\% when the ATE is the target parameter, and either 26.28\% or 36.00\% for the ATT\@. It may not be immediately obvious whether the values of $\widehat{a}_{\max}^{-1}$ for the ATE are small enough to be concerning. However, the width of the resulting bounds on this parameter is either $0.2145 \cdot (\overline{\tau}-\underline{\tau})$ or $0.2749 \cdot (\overline{\tau}-\underline{\tau})$, where $[\underline{\tau},\overline{\tau}]$ is the assumed support of the CATE function. When the outcome is binary (e.g., whether a household has any outstanding debt), the width of these bounds is either 0.4290 or 0.5498, which is arguably too wide to be informative. Moreover, these bounds are nearly three times wider when the ATT is instead the target. As a result, a cautious researcher should estimate their preferred target parameter directly when faced with these values of our diagnostics.

\subsection{Causal Effects of Health Insurance}

\defcitealias{Finkelsteinetal2012}{Finkelstein et al.~(2012)}

Our third application reanalyzes data from the Oregon Health Insurance Experiment, a prominent randomized controlled trial with noncompliance. \citetalias{Finkelsteinetal2012} use the experiment to study the effects of Medicaid enrollment on health and economic outcomes. In this study, eligible participants were randomly selected for the opportunity to apply for public health insurance and, if selected, could then choose whether to enroll. In this case, Medicaid enrollment is the treatment, and the randomized selection indicator serves as an instrument that is valid conditional on household size and survey wave.

As in \cite{BFS2020}, \citetalias{Finkelsteinetal2012} use the same right-hand-side specification to estimate the causal effects of their treatment on a large number of outcomes. Specifically, their publicly available replication package contains data for 30 such outcomes. This again allows us to use our diagnostics as an ``ex ante'' specification check that ignores outcome data yet remains applicable to all 30 outcomes.

\begin{table}[!b]
	\begin{adjustwidth}{-1in}{-1in}
		\centering
		\begin{threeparttable}
			\caption{Internal Validity of IV and 2SLS Estimands for Effects of Health Insurance}\label{tab:ohie}
			\begin{small}
				\begin{tabular}{c >{\centering\arraybackslash}m{1.8cm} >{\centering\arraybackslash}m{1.8cm}}
					\hline\hline
          & $\beta_\text{IV}$   & $\beta_\text{2SLS}$ \\
					\hline
    $\widehat{a}_{\max}^{-1}$ & 0.9413 & 0.7460 \\
    $\widehat{\P}_{W_0}(a(X) < 0)$ & 0 & 0 \\
    $\hMIV$ (uniformly in $\tau_0$) & 0.9413 & 0.7460 \\
    $\hMR$ (uniformly in $\tau_0$) & 0.2720 & 0.2156 \\
				    \hline\hline
				\end{tabular}
			\end{small}
			\begin{footnotesize}
				\begin{tablenotes}[flushleft]
					\item \textit{Notes:} The data are a sample of 23,741 survey respondents from the Oregon Health Insurance Experiment. The treatment is Medicaid enrollment. The instrument is the randomized eligibility to apply for Medicaid.
				\end{tablenotes}
			\end{footnotesize}
		\end{threeparttable}
	\end{adjustwidth}
\end{table}

Table \ref{tab:ohie} reports the values of $\widehat{a}_{\max}^{-1}$, $\widehat{\P}_{W_0}(a(X) < 0)$, $\widehat{\text{MIV}}$, and $\widehat{\text{MR}}$ for the Oregon Health Insurance Experiment. We separately consider the ``noninteracted'' IV estimand, $\beta_\text{IV}$, which was the object of interest in \citetalias{Finkelsteinetal2012}, and the specification with multiple interacted instruments that targets $\beta_\text{2SLS}$. Because the specification is saturated in covariates and there is no evidence of monotonicity violations, all estimated weights are positive and $\widehat{\text{MIV}} = \widehat{a}_{\max}^{-1}$. The value of $\widehat{\text{MIV}}$ for $\beta_\text{IV}$ is 94.13\%, implying that this estimand corresponds to the average treatment effect for that share of compliers. Arguably, this is sufficiently close to 1 that any major differences between $\beta_\text{IV}$ and the LATE are unlikely. At the same time, the value of $\widehat{\text{MIV}}$ for $\beta_\text{2SLS}$ is 74.60\%, implying a more limited degree of internal validity of the interacted specification with respect to the subpopulation of compliers. Finally, the estimated proportion of compliers is 28.90\%, which implies that the value of $\widehat{\text{MR}}$ is $0.2890 \cdot 0.9413 = 27.20\%$ for $\beta_\text{IV}$ and $0.2890 \cdot 0.7460 = 21.56\%$ for $\beta_\text{2SLS}$. Without using outcome data, we can conclude that $\beta_\text{IV}$ should be close to the LATE but not necessarily the ATE, whereas $\beta_\text{2SLS}$ may not be close to either of these target parameters.

\subsection{Causal Effects of Unilateral Divorce Laws}

Our fourth and final application revisits the effects of unilateral divorce laws in the U.S. on female suicide \citep{SW2006,Goodman-Bacon2021}. The data consist of 41 states observed over the 1964--1996 period. The outcome of interest is the state- and year-specific female suicide rate. The treatment is whether the state allowed unilateral divorce in a given year. The sample includes D.C. but excludes the states excluded by \cite{Goodman-Bacon2021} as well as eight additional always-treated states.

Panel A of Table \ref{tab:divorce} reports our baseline estimates of the average effects of unilateral divorce laws on female suicide. After we drop the eight always-treated states, the TWFE estimate, $-0.604$, becomes much smaller in absolute value than the corresponding estimate in \cite{Goodman-Bacon2021}, $-3.080$. Unlike that estimate, ours is also statistically insignificant. The conclusion changes, however, when we explicitly target the average treatment effect on the treated (ATT), that is, the average effect for the largest subpopulation for which such an effect is identified under standard assumptions. Using the approach of \cite{CallawaySantAnna2021}, we obtain a highly significant estimate of $-10.220$. The approach of \cite{Wooldridge2025} produces a less precise estimate of $-5.530$. These estimates are more strongly suggestive of a causal effect of unilateral divorce laws than the TWFE estimate.

\begin{table}[!tb]
	\begin{adjustwidth}{-1in}{-1in}
		\centering
		\begin{threeparttable}
			\caption{Internal Validity of TWFE Estimand for Effects of Unilateral Divorce Laws}\label{tab:divorce}
			\begin{small}
				\begin{tabular}{>{\centering\arraybackslash}m{4.75cm} >{\centering\arraybackslash}m{4.75cm} >{\centering\arraybackslash}m{4.75cm}}
					\hline\hline
				    \multicolumn{3}{c}{A. Estimates of the effects of unilateral divorce laws} \\
				    \hline
				    \multirow{2}[4]{*}{TWFE} & \multicolumn{2}{c}{ATT} \\
				    \cline{2-3}
				          & \citeauthor{CallawaySantAnna2021} & \citeauthor{Wooldridge2025} \\
				    \hline
				    $-0.604$ & $-10.220$ & $-5.530$ \\
				    (2.622) & (3.086) & (3.650) \\
				    \hline
				          &       &  \\
				    \multicolumn{3}{c}{B. Internal validity of the TWFE estimand based on Proposition \ref{prop:CDH}} \\
				    \hline
				    $\widehat{a}_{\max}^{-1}$ & \multicolumn{2}{c}{0.1180} \\
				    $\widehat{\P}_{W_0}(a(X) < 0)$ & \multicolumn{2}{c}{0.2456} \\
				          &       &  \\
				          & uniformly in $\tau_0$ & given $\tau_0$ \\
				    \cline{2-3}
				    $\hMIV$ & 0 & 0.6216 \\
				    $\hMR$ & 0 & 0.3873 \\
				    \hline
				          &       &  \\
				    \multicolumn{3}{c}{C. Internal validity of the TWFE estimand based on Proposition \ref{prop:GB}} \\
				    \hline
				    $\widehat{a}_{\max}^{-1}$ & \multicolumn{2}{c}{0.2246} \\
				    $\widehat{\P}_{W_0}(a(X) < 0)$ & \multicolumn{2}{c}{0} \\
				          &       &  \\
				          & uniformly in $\tau_0$ & given $\tau_0$ \\
				    \cline{2-3}
				    $\hMIV$ & 0.2246 & 0.7707 \\
				    $\hMR$ & 0.1400 & 0.4802 \\
				    \hline\hline
				\end{tabular}
			\end{small}
			\begin{footnotesize}
				\begin{tablenotes}[flushleft]
					\item \textit{Notes:} The data are a panel of U.S. states over 1964--1996. The outcome is the state- and year-specific female suicide rate (per million women). The treatment is whether the state allowed unilateral divorce in a given year. The sample includes D.C. but excludes the states excluded by \cite{Goodman-Bacon2021} as well as eight additional always-treated states. The measures of internal validity ``given $\tau_0$'' require an estimate of the CATE function, which we obtain using the approach of \cite{Wooldridge2025}.
				\end{tablenotes}
			\end{footnotesize}
		\end{threeparttable}
	\end{adjustwidth}
\end{table}

While the TWFE estimate and the two estimates of the ATT are quite different, we focus on another implication of the nonuniformity of the TWFE weight function. We ask: How representative of the underlying population is the TWFE estimand? What is the internal validity of this estimand if we are interested in the treated subpopulation? Panel B of Table \ref{tab:divorce} reports the values of $\widehat{a}_{\max}^{-1}$, $\widehat{\P}_{W_0}(a(X) < 0)$, $\widehat{\text{MIV}}$, and $\widehat{\text{MR}}$, based on the representation of the TWFE estimand in \cite{ChaisemartinDHaultfoeuille2020}. First, because 24.56\% of the estimated weights are negative, $\widehat{\text{MIV}} = 0$ and $\beta_\text{TWFE}$ does not have a causal interpretation uniformly in $\tau_0$. While the existence of an implicit subpopulation is governed by the presence of negative weights, the difference between the TWFE estimand and the ATT is driven by weight nonuniformity. Here, too, our conclusions are pessimistic: the value of $\widehat{a}_{\max}^{-1}$ is only 0.1180, implying that the bounds on the ATT based on the TWFE estimate would have been very wide. Finally, when we estimate the CATE function and use these estimates in computing $\widehat{\text{MIV}}$ and $\widehat{\text{MR}}$, we conclude that the TWFE estimand corresponds to the average treatment effect for 62.16\% of the treated units or 38.73\% of the entire population.

Panel C of Table \ref{tab:divorce} revisits these questions on the basis of the representation of the TWFE estimand in Proposition \ref{prop:GB}. Here, we assume that the CATE function is constant over time, which eliminates the problem of negative weights. Indeed, we now conclude that the TWFE estimand has a causal interpretation uniformly in $\tau_0$, even if it is still not particularly representative of the underlying population and has limited internal validity with respect to the treated subpopulation. The values of $\widehat{\text{MR}}$ and $\widehat{\text{MIV}}$ are 14.00\% and 22.46\%, respectively. When we use the estimated CATE function in computing these measures, these estimates increase to 48.02\% and 77.07\%. This is much more than our initial estimate of 0, but still substantially less than 1, the MIV of the estimands in \cite{CallawaySantAnna2021}, \cite{Wooldridge2025}, and other recent papers, each of which explicitly targets the ATT\@.

\section{Conclusion} \label{sec:conclusion}

In this paper, we studied the representativeness and internal validity of a class of weighted estimands that includes the popular OLS, 2SLS, and TWFE estimands in additive linear models. We examined the conditions under which such estimands can be written as the average treatment effect for a subpopulation. When a given estimand corresponds to the average treatment effect for a large subset of the population of interest, we say its internal validity is high. Our main results derive sharp upper bounds on the size of that subpopulation under different assumptions on treatment effect heterogeneity. These bounds provide a practical tool for quantifying the internal validity of weighted estimands. The resulting diagnostics are implemented in our companion Stata package \texttt{causalrep}.

\small
\setlength\bibsep{0pt}
\bibliographystyle{econometrica}
\bibliography{WCbibfile}

\newpage
\appendix
\normalsize

\section{Proofs of Key Results}\label{sec:main_appendix}

\setcounter{equation}{0}
\renewcommand{\theequation}{\ref{sec:main_appendix}.\arabic{equation}}

This appendix contains the proofs for our key results in Sections \ref{sec:causal_represent} and \ref{sec:quantifying_int_valid}. Proofs for other results, including Theorem \ref{thm:unc_upperbounds2}, can be found in the Supplemental Appendix.

\begin{proof}[Proof of Proposition \ref{prop:subpops_properties}]
To show the first claim, note that the equation $\E_{W_0}[W^*(Y(1) - Y(0))\mid X] = \E_{W_0}[Y(1) - Y(0)\mid X] \cdot \underline{w}^*(X)$ holds since $W^*$ is a regular subpopulation of $W_0$. Since $\underline{w}^*(X) > 0$, we can obtain
\begingroup
\allowdisplaybreaks
\begin{align}
	\E_{W_0}[Y(1) - Y(0)\mid X] &= \E_{W_0}[W^*(Y(1) - Y(0))\mid X]/\underline{w}^*(X)\label{eq:subpopproperties_proof_eq1}\\
	&= \E_{W_0}[Y(1) - Y(0)\mid W^* = 1,X],\notag\\
	&= \E[Y(1) - Y(0)\mid W^* = 1,X],\notag
\end{align}
\endgroup
where the third equality holds from $W^*$ being a subpopulation of $W_0$.

The proposition's second claim is established below:
\begingroup
\allowdisplaybreaks
\begin{align*}
	\E[Y(1) - Y(0) \mid W^* = 1] &= \E_{W_0}[Y(1) - Y(0)\mid W^* = 1]\\
	&= \frac{\E_{W_0}[\E_{W_0}[W^*(Y(1) - Y(0))\mid X]]}{\E_{W_0}[\Prob_{W_0}(W^*=1\mid X)]}\\
	&= \frac{\E_{W_0}[\underline{w}^*(X) \cdot \E_{W_0}[Y(1) - Y(0)\mid X]]}{\E_{W_0}[\underline{w}^*(X)]}\\
	&= \mu(\underline{w}^*/\E_{W_0}[\underline{w}^*(X)],\tau_0).
\end{align*}
\endgroup
The first equality follows from $W^*$ being a subpopulation of $W_0$, the second from the law of iterated expectations and $W^*\leq W_0$, and the third from equation \eqref{eq:subpopproperties_proof_eq1}.
\end{proof}

\begin{proof}[Proof of Theorem \ref{thm:existence1}]

$(\Longrightarrow)$ First, suppose there exists $W^* \in \mathcal{W}(a;W_0,\mathcal{T}_\text{all})$. By Proposition~\ref{prop:subpops_properties}, we have that $0 = \mu(a,\tau) - \mu(\underline{w}^*/\E_{W_0}[\underline{w}^*(X)],\tau) = \mu\left(a - \frac{\underline{w}^*}{\E_{W_0}[\underline{w}^*(X)]},\tau\right)$ for all $\tau \in \mathcal{T}_\text{all}$. Let $\tau^* = a - \frac{\underline{w}^*}{\E_{W_0}[\underline{w}^*(X)]}$. By Assumption \ref{assn:reg}, $\E[\tau^*(X)^2] \leq 2\E[a(X)^2] + 2\E[\frac{\underline{w}^*(X)^2}{\E_{W_0}[\underline{w}^*(X)]^2}] < \infty$, which implies that $\tau^* \in \mathcal{T}_\text{all}$.

Thus, we must have $0 = \mu(a - \underline{w}^*/\E_{W_0}[\underline{w}^*(X)],\tau^*) = \mu(\tau^*,\tau^*)$. Expanding this equality yields $0 = \E_{W_0}[\tau^*(X)^2]$, which implies that $\P_{W_0}(\tau^*(X) = 0) = 1$ or that
\begin{align}
	1 &= \Prob_{W_0}\left(a(X) = \frac{\underline{w}^*(X)}{\E_{W_0}[\underline{w}^*(X)]}\right).\label{eq:prop_weights_estimands}
\end{align}
Since $\underline{w}^*(X) \geq 0$, this implies $\P_{W_0}(a(X) \geq 0) = 1$.

$(\Longleftarrow)$  Second, suppose that $\Prob_{W_0}(a(X) \geq 0) = 1$. Let $U \sim \text{Unif}(0,1)$ where $U \indep (Y(1),Y(0),X,W_0)$, and define 
$W^* = \1\left(U \leq a(X)/\amax\right)\cdot W_0$. We verify $W^*$ is a regular subpopulation of $W_0$. First, we see that $\Prob(W^* = 1) > 0$ because
\begingroup
\allowdisplaybreaks
\begin{align*}
	\Prob(W^*=1) &= \E_{W_0}\left[\Prob_{W_0}\left(U \leq a(X)/\amax \mid X\right)\right] \cdot \Prob(W_0 = 1)\\
	&= \E_{W_0}\left[a(X)/\amax\right] \cdot \Prob(W_0 = 1)\\
	&= \Prob(W_0 = 1)/\amax>0.
\end{align*}
\endgroup
The first equality follows from $\Prob(W_0 = 1\mid W^* = 1) = 1$ and $U \indep (X,W_0)$, the second from $a(X)/\amax \in [0,1]$ almost surely given $W_0=1$, and the third from Assumption~\ref{assn:reg}. The inequality follows from $\Prob(W_0 = 1) > 0$ and $\amax < \infty$. The two conditions in Definition \ref{def:subpop} hold immediately, so $W^*$ is a regular subpopulation of $W_0$.

Finally, let $\underline{w}^*(X) = \Prob_{W_0}(W^* = 1\mid X) = a(X)/\amax$. Using Proposition \ref{prop:subpops_properties}, we can see that, for a given $\tau \in \mathcal{T}_\text{all}$,
\begin{align*}
	\mu(\underline{w}^*/\E_{W_0}[\underline{w}^*(X)],\tau) &= \mu(a,\tau).
\end{align*}
Since $\tau \in \mathcal{T}_\text{all}$ was arbitrary, we have that $W^* \in \mathcal{W}(a;W_0,\mathcal{T}_\text{all})$.
\end{proof}

\begin{proof}[Proof of Theorem \ref{thm:existence2}]
To simplify the notation in the proof, we let $\mu_0 \coloneqq \mu(a,\tau_0)$.

$(\Longrightarrow)$ First, suppose $\mu_0 \notin \mathcal{S}(\tau_0;W_0)$ and suppose there exists $W^* \in \mathcal{W}(a;W_0,\{\tau_0\})$. Since $\mu_0 \notin \mathcal{S}(\tau_0;W_0)$, we can without loss of generality suppose that  $\Prob_{W_0}(\tau_0(X) > \mu_0) = 1$. Since $W^* \in \mathcal{W}(a;W_0,\{\tau_0\})$, we can write by Proposition \ref{prop:subpops_properties}
\begin{align}
	\mu_0 &= \mu(\underline{w}^*/\E_{W_0}[\underline{w}^*(X)],\tau_0) \geq \mu(\underline{w}^*/\E_{W_0}[\underline{w}^*(X)],\mu_0) = \mu_0.\label{eq:existence2proofineq}
\end{align}
The inequality is strict unless $\E_{W_0}[(\underbrace{\tau_0(X)-\mu_0}_{>0 \text{ w.p.1}})\underbrace{\underline{w}^*(X)}_{\in [0,1]}]= 0$ holds. This holds if $\Prob_{W_0}((\tau_0(X)-\mu_0)\underline{w}^*(X) = 0) = 1$, which in turn occurs if and only if $\Prob_{W_0}(\underline{w}^*(X) = 0) = 1$. This implies $\Prob_{W_0}(W^* = 1) = 0$, a contradiction of $W^* \in \mathcal{W}(a,W_0,\{\tau_0\})$. Therefore, the inequality in \eqref{eq:existence2proofineq} is strict and yields $\mu_0 > \mu_0$, a contradiction. Thus, $\mathcal{W}(a;W_0,\{\tau_0\}) = \emptyset$ when $\mu_0 \notin \mathcal{S}(\tau_0;W_0)$.

$(\Longleftarrow)$ Second, suppose $\mu_0 \in \mathcal{S}(\tau_0;W_0)$. Let 
$\mathcal{X}^- \coloneqq \{x \in \supp(X): \tau_0(x) \leq \mu_0\}$  and $\mathcal{X}^+ \coloneqq \{x \in \supp(X): \tau_0(x) \geq \mu_0\}$.
By $\mu_0 \in \mathcal{S}(\tau_0;W_0)$, $\Prob_{W_0}(X \in \mathcal{X}^-) > 0$ and $\Prob_{W_0}(X \in \mathcal{X}^+) > 0$. 

Let $U \sim \text{Unif}(0,1)$ where $U \indep (Y(1),Y(0),X,W_0)$. For $u\in[0,1]$, let
\begin{align*}
	W^*(u) &\coloneqq (\1(U > u, X \in \mathcal{X}^-) + \1(U \leq u,X \in \mathcal{X}^+))\cdot W_0.
\end{align*}
We show $W^*(u)$ is a regular subpopulation of $W_0$ for all $u \in [0,1]$. We can see that $W^*(u) \in \{0,1\}$, that $W^*(u) \indep (Y(1),Y(0))\mid X,W_0 =1$, and that $\Prob(W_0 = 1 \mid W^*(u) = 1) = 1$. To show that $W^*(u)$ characterizes a regular subpopulation of $W_0$, we also show that it is nonzero with positive probability:
\begin{align*}
	\Prob_{W_0}(W^*(u) = 1) &= (1-u) \Prob_{W_0}(X \in \mathcal{X}^-) + u\Prob_{W_0}(X \in \mathcal{X}^+) > 0
\end{align*}
for all $u \in [0,1]$, which implies $\Prob(W^*(u) = 1) > 0$ by $\Prob(W_0 = 1) > 0$. Hence, $W^*(u) \in \text{SP}(W_0)$ for all $u \in[0,1]$.
For $u \in [0,1]$, we have that $\underline{w}^*(X;u) \coloneqq \Prob_{W_0}(W^*(u) = 1\mid X) = (1-u) \1(X \in \mathcal{X}^-) + u\1(X \in \mathcal{X}^+)$. Therefore, using Proposition \ref{prop:subpops_properties},
\begingroup
\allowdisplaybreaks
\begin{align*}
	\E[Y(1) - &Y(0)\mid  W^*(u) = 1] = \mu(\underline{w}^*(\cdot;u)/\E_{W_0}[\underline{w}^*(X;u)],\tau_0)\\
	&= \frac{(1-u) \E_{W_0}[\1(X \in \mathcal{X}^-)\tau_0(X)] + u\E_{W_0}[\1(X \in \mathcal{X}^+)\tau_0(X)]}{(1-u) \Prob_{W_0}(X \in \mathcal{X}^-) + u\Prob_{W_0}(X \in \mathcal{X}^+)}.
\end{align*}
\endgroup
By construction, $\tau_0(X)\1(X \in \mathcal{X}^-) \leq \mu_0 \1(X \in \mathcal{X}^-)$ and $\tau_0(X)\1(X \in \mathcal{X}^+) \geq \mu_0 \1(X \in \mathcal{X}^+)$ almost surely. Therefore, $\E[Y(1) - Y(0)\mid W^*(0) = 1] \leq \mu_0 \leq \E[Y(1) - Y(0)\mid W^*(1) = 1]$.
By the continuity of $\E[Y(1)-Y(0)\mid W^*(u)=1]$ in $u$ and the intermediate value theorem, there exists $u^* \in [0,1]$ such that $\mu_0 = \E[Y(1) -Y(0)\mid W^*(u^*) = 1]$ and $W^*(u^*) \in \mathcal{W}(a;W_0,\{\tau_0\})$.
\end{proof}

\begin{proof}[Proof of Proposition \ref{prop:linear_CATEs}]	
	$(\Longrightarrow)$ First, we suppose there exists $W^* \in \mathcal{W}(a;W_0,\mathcal{T}_\text{lin})$. By Proposition \ref{prop:subpops_properties}, we have that $\mu(a - \underline{w}^*/\E_{W_0}[\underline{w}^*(X)],\tau) = 0$ for all $\tau \in \mathcal{T}_\text{lin}$. Therefore,
\begingroup
\allowdisplaybreaks
\begin{align*}
		0 &= \mu(a - \underline{w}^*/\E_{W_0}[\underline{w}^*(X)],\tau)\\
		&= \E_{W_0}[a(X)(c + d'X)] - \frac{\E_{W_0}[\underline{w}^*(X)(c + d'X)]}{\E_{W_0}[\underline{w}^*(X)]}\\
		&= d'\left(\E_{W_0}[a(X)X] - \frac{\E_{W_0}[\underline{w}^*(X)X]}{\E_{W_0}[\underline{w}^*(X)]}\right)
\end{align*}
\endgroup
for all $d \in \R^{d_X}$, which implies $\E_{W_0}[a(X)X] = \frac{\E_{W_0}[\underline{w}^*(X)X]}{\E_{W_0}[\underline{w}^*(X)]}$. Let $u(x) = \underline{w}^*(x) \P_{W_0}(X = x)/\E_{W_0}[\underline{w}^*(X)]$. We have that $\frac{\E_{W_0}[\underline{w}^*(X)X]}{\E_{W_0}[\underline{w}^*(X)]} = \sum_{x \in \supp(X \mid W_0 = 1)} xu(x)$, a convex combination of values in $\supp(X\mid W_0=1)$ because $u(\cdot) \geq 0$ and $\sum_{x \in \supp(X \mid W_0 = 1)} u(x) = 1$. Thus, $\E_{W_0}[a(X)X] \in \text{conv}(\supp(X\mid W_0=1))$.

	$(\Longleftarrow)$ Second, suppose that $\E_{W_0}[a(X)X] \in \text{conv}(\supp(X\mid W_0=1))$. By convexity, we can write $\E_{W_0}[a(X)X]$ as $\sum_{x \in \supp(X \mid W_0 = 1)} x u(x) \P_{W_0}(X = x) = \E_{W_0}[X u(X)]$ for some $u(\cdot) \geq 0$ satisfying $\E_{W_0}[u(X)] = 1$. Let $W^* = \1\left( U \leq u(X)/\max(\supp(u(X)\mid W_0=1))\right) \cdot W_0$, where $U \sim \text{Unif}(0,1) \indep (X,Y(1),Y(0),W_0)$. Then $W^*$ is a regular subpopulation of $W_0$ and $\underline{w}^*(X) = \frac{u(X)}{\max(\supp(u(X)\mid W_0=1))}$ since $\frac{u(X)}{\max(\supp(u(X)\mid W_0=1))} \in [0,1]$ with probability 1 given $W_0=1$. Therefore,  for all $\tau(x) = c + d'x \in \mathcal{T}_\text{lin}$, we have that
\begingroup
\allowdisplaybreaks
\begin{align*}
	\mu(\underline{w}^*/\E_{W_0}[\underline{w}^*(X)],\tau) &= \mu(u/\E_{W_0}[u(X)],\tau)\\
	&= \mu(u,\tau)\\
	&= \E_{W_0}\left[u(X)(c + d'X)\right]\\
	&= c + d'\E_{W_0}\left[a(X)X\right]\\
	&= \E_{W_0}\left[a(X)(c + d'X)\right]\\
	&= \mu(a,\tau).
\end{align*}
\endgroup
Therefore, by Proposition \ref{prop:subpops_properties} we have that $W^* \in \mathcal{W}(a;W_0,\mathcal{T}_\text{lin})$.
\end{proof}

\begin{proof}[Proof of Proposition \ref{prop:BV_CATEs}]	
We consider the $\Delta>0$ case first and the $\Delta=0$ case second.
	
\noindent\textbf{Case 1:} $\Delta > 0$.

\noindent
$(\Longrightarrow)$ First, let $\Prob_{W_0}(a(X) \geq 0) < 1$. We will show that $\mathcal{W}(a;W_0,\mathcal{T}_\text{BD}(\Delta)) = \emptyset$ by way of contradiction. 

Suppose there is a $W^* \in \mathcal{W}(a;W_0,\mathcal{T}_\text{BD}(\Delta))$ and let $\underline{w}^*(X) = \Prob_{W_0}(W^*=1\mid X) \in [0,1]$. Let $\tau^*(x) = \Delta \cdot \1(a(x) < 0)$, which implies $\tau^* \in \mathcal{T}_\text{BD}(\Delta)$. We have that $\mu(a,\tau^*) = \Delta \cdot \E_{W_0}[a(X) \1(a(X) < 0) ]$. $\E_{W_0}[a(X) \1(a(X) < 0)]$ is strictly negative because it is weakly negative and because $\E_{W_0}[a(X) \1(a(X) < 0)] = 0$ implies $\P_{W_0}(a(X) < 0) = 0$, a contradiction of $\Prob_{W_0}(a(X) \geq 0) < 1$.  Since $\Delta > 0$, we conclude that $\mu(a,\tau^*) < 0$. However, $\mu(\underline{w}^*/\E_{W_0}[\underline{w}^*(X)],\tau^*) \geq 0$ since $\tau^* \geq 0$. By Proposition \ref{prop:subpops_properties}, $W^* \in \mathcal{W}(a;W_0,\mathcal{T}_\text{BD}(\Delta))$ implies $0 > \mu(a,\tau^*) = \mu(\underline{w}^*/\E_{W_0}[\underline{w}^*(X)],\tau^*) \geq 0$, a contradiction. Therefore $\mathcal{W}(a;W_0,\mathcal{T}_\text{BD}(\Delta)) = \emptyset$.

$(\Longleftarrow)$ Second, suppose $\Prob_{W_0}(a(X) \geq 0) = 1$. By Theorem \ref{thm:existence1}, $\mathcal{W}(a;W_0,\mathcal{T}_\text{all}) \neq \emptyset$. Since $\mathcal{T}_\text{BD}(\Delta) \subseteq \mathcal{T}_\text{all}$, $\mathcal{W}(a;W_0,\mathcal{T}_\text{all}) \subseteq \mathcal{W}(a;W_0,\mathcal{T}_\text{BD}(\Delta))$ holds. Therefore, $\mathcal{W}(a;W_0,\mathcal{T}_\text{BD}(\Delta)) \neq \emptyset$, which means that $\mu(a,\tau)$ has a causal representation uniformly in $\tau \in \mathcal{T}_\text{BD}(\Delta)$.

\noindent \textbf{Case 2:} $\Delta=0$.

\noindent
When $\Delta=0$, the function class $\mathcal{T}_\text{BD}(\Delta)$ is the set of all constant functions. In this case, $\tau(X) = t_0$, where $t_0 \in \R$ denotes a constant. Thus $\mathcal{W}(a;W_0,\mathcal{T}_\text{BD}(0)) \neq \emptyset$ for all weight functions $a(\cdot)$ since $W_0 \in \text{SP}(W_0)$ and because $\mu(a,\tau) = \E_{W_0}[a(X)t_0] = t_0 = \E_{W_0}[Y(1) - Y(0)]$ for any $a(\cdot)$.
\end{proof}

\begin{proof}[Proof of Theorem \ref{thm:unc_upperbounds1}]
First, suppose $\Prob_{W_0}(a(X) \geq 0) = 1$. From Theorem \ref{thm:existence1}, there exists $W^* \in \mathcal{W}(a;W_0,\mathcal{T}_\text{all})$.  From derivations in the proof of Theorem \ref{thm:existence1} (see equation \eqref{eq:prop_weights_estimands}) we have that $\Prob_{W_0}(C \cdot a(X) = \underline{w}^*(X)) = 1$ for some positive constant $C > 0$. Since $\underline{w}^*(X) \leq 1$, we must have $C \cdot a(X) \leq 1$ almost surely given $W_0=1$. Thus $C$ is bounded above by $\inf(\supp(1/a(X)\mid W_0=1)) = 1/\amax$, which is strictly positive by assumption. Therefore,
\begingroup
\allowdisplaybreaks
\begin{align*}
	\Prob_{W_0}(W^*=1) &=\E_{W_0}[\underline{w}^*(X)] = \E_{W_0}[ C \cdot a(X)] \leq \amax^{-1}.
\end{align*}
\endgroup
This upper bound is sharp because it is attained by setting $W^* = \1\left(U \leq a(X)/\amax\right)\cdot W_0$ and noting that $W^* \in \mathcal{W}(a;W_0,\mathcal{T}_\text{all})$ from the proof of Theorem \ref{thm:existence1}. 

Second, suppose $\Prob_{W_0}(a(X) \geq 0) < 1$. By Theorem \ref{thm:existence1}, $\mathcal{W}(a;W_0,\mathcal{T}_\text{all}) = \emptyset$ and therefore $\text{MIV}(a,W_0;\mathcal{T}_\text{all}) = 0$.
\end{proof}

\begin{proof}[Proof of Proposition \ref{prop:ATE_bounds_amax}]
We first show the validity of the bounds. We obtain the upper bound as follows:
\begin{align*}
	\E_{W_0}[\tau_0(X)] &= \E_{W_0}[(a(X)\amax^{-1} + (1 - a(X)\amax^{-1}))\tau_0(X)]\\
	&= \mu(a,\tau_0) \amax^{-1} + \E_{W_0}[(1 - a(X)\amax^{-1})\tau_0(X)]\\
	&\leq \mu(a,\tau_0) \amax^{-1} + \E_{W_0}[(1 - a(X)\amax^{-1})\overline{\tau}]\\
	&= \mu(a,\tau_0) \amax^{-1} + \overline{\tau}(1 - \amax^{-1}),
\end{align*}
where the inequality holds because $\P_{W_0}(a(X) \leq \amax) = 1$, which implies $\P_{W_0}(1 - a(X)\amax^{-1} \geq 0) = 1$. The lower bound is similarly obtained. 
		
	For sharpness, let $a(X)$ equal $\amax$ with probability $\amax^{-1}$ and 0 with probability $1 - \amax^{-1}$, and let $\tau_0(X) = \mu(a,\tau_0) \cdot\1(a(X) = \amax) + \overline{\tau} \cdot\1(a(X) = 0)$. Then $\tau_0(X) \in [\underline{\tau},\overline{\tau}]$, $\E_{W_0}[a(X)\tau_0(X)] = \mu(a,\tau_0)$, and $\E_{W_0}[Y(1) - Y(0)] = \E_{W_0}[\tau_0(X)] = \mu(a,\tau_0) \P_{W_0}(a(X) = \amax) + \overline{\tau} \P_{W_0}(a(X) = 0) = \mu(a,\tau_0)\amax^{-1} + \overline{\tau}(1 - \amax^{-1})$. Therefore, the upper bound is sharp. We can show the lower bound's sharpness by setting $\tau_0(X) = \mu(a,\tau_0) \cdot\1(a(X) = \amax) + \underline{\tau} \cdot\1(a(X) = 0)$ instead.
\end{proof}

\begin{proof}[Proof of Proposition \ref{prop:ATE_variance_bound}]
We first show that $\var_{W_0}(a(X)) \leq \amax - 1$. This follows from
\begin{align*}
	\var_{W_0}(a(X)) &= \E_{W_0}[a(X)^2] - \E_{W_0}[a(X)]^2 = \E_{W_0}[a(X)^2] - 1 \leq \amax \E_{W_0}[a(X)] - 1 = \amax - 1.
\end{align*}
The inequality follows from $a(X)^2 \leq a(X) \cdot \amax$ with probability 1 given $W_0 = 1$, which holds by the nonnegativity of the weights. By Cauchy--Schwarz we have that
\begin{align*}
	|\E_{W_0}[\tau_0(X)] - \mu(a,\tau_0)| &= |\E_{W_0}[(1-a(X))\tau_0(X)]|\\
	&= |\E_{W_0}[(1-a(X))(\tau_0(X) - \E_{W_0}[\tau_0(X)])]|\\
	&\leq \text{SD}_{W_0}(1 - a(X))\text{SD}_{W_0}(\tau_0(X))\\
	&\leq \sqrt{\amax - 1}\cdot\overline{\sigma}_\tau.
\end{align*}
A rearrangement yields the bounds in \eqref{eq:bounds2}.

To show they are sharp, let $a(X) = \amax$ with probability $\amax^{-1}$, and  $a(X) = 0$ with probability $1 - \amax^{-1}$. Sharpness is trivial when $\amax = 1$, so consider the case where $\amax > 1$. Let $\tau_0(X) = \frac{\overline{\sigma}_\tau}{\sqrt{\amax - 1}}(1 - a(X)) + \mu(a,\tau_0) + \sqrt{\amax - 1}\cdot\overline{\sigma}_\tau$. We have that
\begin{align*}
	\E_{W_0}[\tau_0(X)] &= \mu(a,\tau_0) + \sqrt{\amax - 1}\cdot\overline{\sigma}_\tau\\
	\E_{W_0}[a(X)\tau_0(X)] &= \frac{\overline{\sigma}_\tau}{\sqrt{\amax - 1}}(1 - \amax) + \mu(a,\tau_0) + \sqrt{\amax - 1}\cdot\overline{\sigma}_\tau = \mu(a,\tau_0)\\
	\var_{W_0}(\tau_0(X)) &= \frac{\overline{\sigma}_\tau^2}{\amax - 1}\var_{W_0}(a(X))= \overline{\sigma}_\tau^2 .
\end{align*}
Thus the upper bound is attained. The lower bound is attained by setting $\tau_0(X) = \frac{\overline{\sigma}_\tau}{\sqrt{\amax - 1}}(a(X) - 1) + \mu(a,\tau_0) - \sqrt{\amax - 1} \cdot \overline{\sigma}_\tau$. 
\end{proof}

\begin{proof}[Proof of Proposition \ref{prop:negative_weights_bound}]
	We can split $\E_{W_0}[a(X)] = 1$ as follows:
	\begin{align*}
		1 &= \E_{W_0}[a(X)] = \E_{W_0}[a(X) \1(a(X) \geq 0)] + \E_{W_0}[a(X) \1(a(X) < 0)] \leq \amax \P_{W_0}(a(X) \geq 0) + 0.
	\end{align*}
	Rearranging this inequality yields $\P_{W_0}(a(X) \geq 0) \geq \amax^{-1}$, or equivalently, $\P_{W_0}(a(X) < 0) \leq 1 - \amax^{-1}$.
	
	Sharpness of this bound is trivial when $\amax^{-1} = 1$, so consider the case where $\amax^{-1} < 1$. For $\varepsilon \in (0,1-\amax^{-1})$, let $\P_{W_0}(a(X) = \amax) = \amax^{-1} + \varepsilon$ and $\P_{W_0}(a(X) = -\varepsilon \amax/(1 - \amax^{-1} - \varepsilon))= 1 - \amax^{-1} - \varepsilon$. This distribution yields $\E_{W_0}[a(X)] = 1$, $\P_{W_0}(a(X) \leq \amax) = 1$, and $\P_{W_0}(a(X) < 0) = 1 - \amax^{-1} - \varepsilon$. This can be made arbitrarily close to $1 - \amax^{-1}$ by letting $\varepsilon \searrow 0$, so the bound is sharp.
\end{proof}

\newpage

\section*{Supplemental Appendix}

\setcounter{section}{1}
\renewcommand{\thesection}{\Alph{section}}
\renewcommand{\thesubsection}{\Alph{section}.\arabic{subsection}}

\noindent
This document contains appendices that supplement the main text. Appendix \ref{appsec:weaklycausal} formalizes the connection between uniform causal representations and weakly causal estimands, defined in \cite{BlandholBonneyMogstadTorgovitsky2026}. \mbox{Appendix \ref{appsec:estimation}} supplements the estimation and inference section (Section \ref{sec:estimation}) and provides an algorithm for constructing confidence intervals. Appendices \ref{appsec:fixed_t0_intvalidity}--\ref{appsec:proofsforsec6} provide proofs for results in the main text and for results in Appendices \ref{appsec:weaklycausal} and \ref{appsec:estimation}\@. Appendix \ref{appsec:DiD} contains additional calculations related to the weights for the TWFE estimand.

\section{Weakly Causal Estimands and Uniform Causal Representations in $\mathcal{T}_\text{all}$}\label{appsec:weaklycausal}

We now establish equivalence between \textit{weakly causal} estimands as defined in \cite{BlandholBonneyMogstadTorgovitsky2026} (henceforth, BBMT) and estimands that have uniform causal representations as in Theorem \ref{thm:existence1}. As in BBMT, consider the case where $X$ has finite support and, as in this paper, assume the treatment is binary. We also abstract from choice groups denoted by $G$ in BBMT.

Since $X$ has finite support, let $\supp(X\mid W_0=1) = \{x_1,\ldots,x_K\}$ and let $\tau \coloneqq (\tau(x_1),\ldots,\tau(x_K)) \in \R^K$ be the collection of CATEs. For $d \in \{0,1\}$ let $\nu_d(x) \coloneqq \E_{W_0}[Y(d)\mid X = x]$ denote the \textit{average structural function} (ASF) which also conditions on $W_0 = 1$, let $\nu_d \coloneqq (\nu_d(x_1),\ldots,\nu_d(x_K)) \in \R^K$, and let $\mathcal{M} \subseteq \R^{2K}$ be a set of possible ASFs such that $(\nu_0,\nu_1) \in \mathcal{M}$. We now state the definition of weakly causal estimands from BBMT (i.e., their Definition WC) in our setting which features binary treatments.

\begin{definition}
The estimand $\beta$ is \textit{weakly causal} if the following statements are true for all $(\nu_0,\nu_1) \in \mathcal{M}$:
\begin{enumerate}
	\item If $\nu_1 - \nu_0 \geq \mathbf{0}_K$,\footnote{Vector inequalities hold if they hold component-wise.} then $\beta \geq 0$.
	\item If $\nu_1 - \nu_0 \leq \mathbf{0}_K$, then $\beta \leq 0$.
\end{enumerate}
\end{definition}

\noindent
Thus, an estimand is weakly causal if, whenever all CATEs have the same sign, the estimand also has that sign. Whether an estimand satisfies this condition also depends on $\mathcal{M}$, the set of allowed ASFs. To compare weak causality to our result on uniform causal representations, we consider $\mathcal{M}_\text{all} \coloneqq \R^{2K}$, the unrestricted set of ASFs. The corresponding unrestricted set of CATE functions, which we denoted by $\mathcal{T}_{\text{all}}$, allows $\tau$ to be any vector in $\R^K$. We consider estimands characterized by our equation \eqref{eq:weighted_est_def}. We note that these estimands rule out ``level dependence,'' i.e., that the estimand changes if potential outcomes $(Y(0),Y(1))$ are translated to $(Y(0) + c,Y(1) +c)$ for some constant $c \in \R$\@. For example, the IV estimand is generally level dependent when the propensity score $\P(Z = 1 \mid X)$ is nonlinear in $X$\@.\footnote{See p.~17 in BBMT\@.} With these choices, we can show the two definitions are equivalent.

\begin{proposition}\label{prop:weakly_causal_relationship}
	Let $\mu(a,\tau_0)$ be an estimand satisfying equation \eqref{eq:weighted_est_def}. Suppose Assumption \ref{assn:reg} holds and that $a_{\max} < \infty$. Then $\mu(a,\tau_0)$ is weakly causal with  $\mathcal{M} = \mathcal{M}_{\text{all}}$ if and only if it has a causal representation uniformly in $\mathcal{T}_{\text{all}}$.
\end{proposition}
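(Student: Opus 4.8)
The plan is to reduce the claim to Theorem~\ref{thm:existence1}, which already characterizes uniform causal representation in $\mathcal{T}_\text{all}$ by the nonnegativity condition $\Prob(a(X) \geq 0 \mid W_0 = 1) = 1$. It therefore suffices to show that $\mu(a,\tau_0)$ is weakly causal with $\mathcal{M} = \mathcal{M}_\text{all}$ if and only if this same nonnegativity condition holds. The conceptual bridge between the two frameworks is that, by the representation in \eqref{eq:weighted_est_def}, $\mu(a,\tau_0)$ is a linear functional of the CATE vector $\tau_0 = \nu_1 - \nu_0$ alone and does not depend on the levels $(\nu_0,\nu_1)$ separately. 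Hence, as $(\nu_0,\nu_1)$ ranges over $\mathcal{M}_\text{all} = \R^{2K}$, the estimand depends only on $\tau_0 = \nu_1 - \nu_0$, which ranges over all of $\R^K$; and since $X$ has finite support, this is exactly the set $\mathcal{T}_\text{all}$. I would state this correspondence explicitly at the outset, since verifying it is the one genuinely framework-specific step and the rest is routine.

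For the direction assuming nonnegative weights, writing $p_k = \Prob(X = x_k \mid W_0 = 1)$ and $\omega_k = a(x_k) p_k / \sum_l a(x_l) p_l$, the sign normalization in Assumption~\ref{assn:reg} gives a positive denominator, so $a(x_k) \geq 0$ for all $k$ forces $\omega_k \geq 0$ with $\sum_k \omega_k = 1$. Then $\mu(a,\tau_0) = \sum_k \omega_k \tau_0(x_k)$ is a convex combination, so $\tau_0 = \nu_1 - \nu_0 \geq \textbf{0}_K$ implies $\mu(a,\tau_0) \geq 0$ and $\tau_0 \leq \textbf{0}_K$ implies $\mu(a,\tau_0) \leq 0$. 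This verifies both conditions in the definition of weak causality.

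For the converse I would argue by contraposition, reusing the adversarial construction from the discussion following Theorem~\ref{thm:existence1}. Suppose $\Prob(a(X) \geq 0 \mid W_0 = 1) < 1$, so that $a(x_{k_0}) < 0$ for some $k_0$ with $p_{k_0} > 0$. Take the CATE function $\tau^-(x) = \1(a(x) < 0)$, which lies in $\mathcal{T}_\text{all}$ (equivalently, arises from some $(\nu_0,\nu_1) \in \mathcal{M}_\text{all}$) and satisfies $\tau^- \geq \textbf{0}_K$. Its numerator $\E[a(X)\1(a(X) < 0) \mid W_0 = 1] = \sum_{k : a(x_k) < 0} a(x_k) p_k$ is strictly negative because the $k_0$ term is, while the denominator $\E[a(X) \mid W_0 = 1]$ is positive; hence $\mu(a,\tau^-) < 0$ even though $\tau^- \geq \textbf{0}_K$. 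This violates the first condition of weak causality, so weak causality forces $\Prob(a(X) \geq 0 \mid W_0 = 1) = 1$.

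Combining the two directions gives the equivalence between weak causality and the nonnegativity condition, and then Theorem~\ref{thm:existence1} converts the nonnegativity condition into existence of a uniform causal representation in $\mathcal{T}_\text{all}$, completing the proof. The only delicate point is the first paragraph's correspondence between $\mathcal{M}_\text{all}$ and $\mathcal{T}_\text{all}$: one must invoke level-independence of the estimand (guaranteed by \eqref{eq:weighted_est_def}) together with finiteness of the support to ensure that quantifying over all ASFs is genuinely the same as quantifying over all finite-second-moment CATE functions.
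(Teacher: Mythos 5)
Your proposal is correct and follows essentially the same route as the paper's proof: reduce to Theorem \ref{thm:existence1} so the claim becomes equivalence of weak causality with nonnegative weights, use the adversarial CATE $\tau^-(x) = \1(a(x) < 0)$ (the paper phrases your contrapositive step directly, setting $\nu_1 = (\1(a(x_1)<0),\ldots,\1(a(x_K)<0))$, $\nu_0 = \textbf{0}_K$, and deducing $a(x_k) \geq 0$ from $\mu(a,\tau^-) \geq 0$), and verify the converse by the same nonnegative-weighted-sum computation. Your explicit remark on the level-independence correspondence between $\mathcal{M}_\text{all}$ and $\mathcal{T}_\text{all}$ is a point the paper leaves implicit, but it does not change the argument.
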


\noindent
The proof of this proposition hinges on the equivalence, under level independence, of weakly causal estimands and estimands with nonnegative weights, as in \mbox{Proposition 4} of BBMT\@. Also, as shown in Theorem \ref{thm:existence1}, estimands with nonnegative weights have a uniform causal representation in $\mathcal{T}_{\text{all}}$. Therefore, a weighted estimand has nonnegative weights if and only if it is weakly causal and if and only if it has a causal representation uniformly in $\mathcal{T}_\text{all}$. Thus, a weakly causal estimand admits a regular subpopulation $W^*$ such that the estimand measures the average effect of treatment over that subpopulation.

\section{Details on Estimation and Inference}\label{appsec:estimation}

This appendix complements Section \ref{sec:estimation} in the main text. In it, we compute the limiting distribution of our estimated measure of internal validity and prove the validity of a nonstandard bootstrap algorithm for constructing confidence intervals around it. This is done for the case where $\mathcal{T} = \mathcal{T}_\text{all}$. Estimation and inference for $\text{MIV}(a,W_0;\{\tau_0\})$ are related to their counterparts in linear programs with estimated constraints. See \cite{AndrewsRothPakes2023}, \cite{CoxShi2023}, \cite{FangSantosShaikhTorgovitsky2023}, and \cite{ChoRussell2024} for recent advances on this topic.

As in Section \ref{sec:estimation}, we consider the case where $X$ is discrete. We assume the existence of estimators for $\alpha(\cdot)$ and $w_0(\cdot)$, but not knowledge of $\supp(X\mid W_0=1)$.  In this case, $\alpha(x)$ and $w_0(x)$ are usually estimated ``cell-by-cell'' and their estimators are $\sqrt{n}$-consistent with limiting Gaussian distributions. We will see that inference on $\amax^{-1}$ is generally nonstandard and, as a result, most common bootstrap procedures fail.

We consider the following simple plug-in estimator:
\begin{align*}
	\widehat{a}_{\max}^{-1} &\coloneqq \frac{\avg \widehat{\alpha}(X_i)\widehat{w}_0(X_i)}{\avg \widehat{w}_0(X_i) \cdot \max_{i: \widehat{w}_0(X_i) > c_n} \widehat{\alpha}(X_i)},
\end{align*}
where $c_n$ is a tuning parameter that converges to 0 as $n\rightarrow \infty$. Note that this tuning parameter can be removed when $w_0$ is known, for example when $W_0 = 1$ almost surely. This estimator does not assume knowledge of the support of $X$ given $W_0=1$, but it can also be implemented by taking the maximum over $\supp(X\mid W_0=1)$ when it is known. 

Let $\supp(X) = \{x_1,\ldots,x_K\}$, denote by $p_j \coloneqq \P(X=x_j)$ the frequency of cell $j$, and let $\widehat{p}_j \coloneqq \avg \1(X_i = x_j)$ denote its sample frequency. Let $\widehat{\theta} \coloneqq (\widehat{\alpha},\widehat{w}_0,\widehat{p})$ where $\widehat{\alpha} \coloneqq (\widehat{\alpha}(x_1),\ldots,\widehat{\alpha}(x_K))$, $\widehat{w}_0 \coloneqq (\widehat{w}_0(x_1),\ldots,\widehat{w}_0(x_K))$, and $\widehat{p} \coloneqq (\widehat{p}_1,\ldots,\widehat{p}_K)$. Let $\theta \coloneqq (\alpha,w_0,p)$ denote their population counterparts.
 
We make the following assumption on the behavior of the first-step estimators.

\begin{assumption}[Preliminary Estimators]\label{assn:prelim_est}
Let
	\begin{align*}
		\sqrt{n}(\widehat{\theta} - \theta) \dconv \mathbb{Z}
	\end{align*}
as $n\rightarrow \infty$, where $\mathbb{Z} \coloneqq (\mathbb{Z}_\alpha,\mathbb{Z}_{w_0},\mathbb{Z}_p) \in \R^{K} \times \R^K \times \R^K$ has a Gaussian distribution. 	
\end{assumption}

\noindent
The above assumption is often satisfied when $X$ has finite support since estimators for $\alpha(x_j)$ and $w_0(x_j)$ can be obtained using only the observations for which $X_i = x_j$. Note that the  distribution of $\mathbb{Z}_{w_0}$ may be degenerate. For example, if $W_0 = 1$ almost surely, then $\widehat{w}_0(x) = w_0(x) = 1$ and thus $\mathbb{Z}_{w_0} = \mathbf{0}_K$, a $K$-vector of zeros.

The next theorem shows the consistency of $\widehat{a}_{\max}^{-1}$ and establishes the limiting distribution of this estimator.

\begin{theorem}[Consistency and Asymptotic Distribution]\label{thm:cons_asynorm}
	Suppose Assumption \ref{assn:prelim_est} holds. Suppose $c_n \rightarrow 0$ and $c_n \sqrt{n} \rightarrow \infty$ as $n \rightarrow \infty$.   Then, $\widehat{a}_{\max}^{-1}$ is consistent for $\amax^{-1}$ and
	\begin{align*}
		\sqrt{n}(\widehat{a}_{\max}^{-1} - \amax^{-1}) &\dconv \psi(\mathbb{Z}),
	\end{align*}
	where $\psi$ is a continuous mapping defined by
	\begin{align}
	\psi(\mathbb{Z}) \; &= \; \sum_{j=1}^K \frac{w_0(x_j)p_j}{\P(W_0=1) \alphamax} \mathbb{Z}_{\alpha}(j) \; - \; \frac{\E_{W_0}[\alpha(X)]}{\alpha_{\max}^2}\max_{j \in \Psi_{\mathcal{X}^+}} \mathbb{Z}_{\alpha}(j)\notag\\
		&\quad + \; \sum_{j=1}^K \frac{(\alpha(x_j) - \E_{W_0}[\alpha(X)])p_j}{\P(W_0=1) \alphamax} \mathbb{Z}_{w_0}(j) \notag\\
		&\quad + \; \sum_{j=1}^K \frac{(\alpha(x_j) - \E_{W_0}[\alpha(X)])w_0(x_j)}{\P(W_0 = 1)\alphamax}\mathbb{Z}_p(j),\label{eq:psi_mapping}
\end{align}
as $n\rightarrow \infty$, where $\Psi_{\mathcal{X}^+} \coloneqq \{j: x_j \in \mathcal{X}^+, \alpha(x_j) = \alphamax\}$ and $\mathcal{X}^+ \coloneqq \supp(X\mid W_0=1)$.
\end{theorem}

\noindent
The mapping $\psi$ is linear if and only if $\alpha(x)$ is maximized at a unique value in $\mathcal{X}^+$, and nonlinear if multiple values maximize $\alpha(x)$. The linearity of this mapping crucially affects the choice of the inference procedure. When $\psi$ is linear, the limiting distribution of $\widehat{a}_{\max}^{-1}$ is Gaussian and common bootstrap procedures, such as the empirical bootstrap, are valid whenever they are valid for $\widehat{\theta}$. 

When $\alpha(x)$ is maximized at more than one value, the limiting distribution of $\widehat{a}_{\max}^{-1}$ is nonlinear in $\mathbb{Z}$ and therefore non-Gaussian. In this case, standard bootstrap approaches are invalid \citep[see Theorem 3.1 of][]{FangSantos2019}. Because $\amax^{-1}$ can be written as a Hadamard directionally differentiable mapping of $\theta$, alternative bootstrap procedures, such as those proposed by \cite{HongLi2018} and \cite{FangSantos2019}, can be used to conduct valid inference on $\amax^{-1}$.

We propose a bootstrap procedure that can be applied regardless of the linearity of $\psi$. In order to show its validity, we assume that the limiting distribution $\mathbb{Z}$ can be approximated via a bootstrap procedure.

\begin{assumption}[Bootstrap for First-Step Estimators]\label{assn:boot_prelim}
	Let $\mathbb{Z}^* \coloneqq (\mathbb{Z}^*_\alpha,\mathbb{Z}^*_{w_0},\mathbb{Z}^*_p)  \in \R^{K} \times \R^{K} \times \R^{K}$ be a random vector such that $\mathbb{Z}^* \overset{p}{\rightsquigarrow} \mathbb{Z}$, where $\overset{p}{\rightsquigarrow}$ denotes convergence in probability conditional on the data used to compute $\widehat{\theta}$ as $n\rightarrow \infty$.
\end{assumption}

\noindent
This assumption is easily satisfied when $\widehat{p}$ are sample frequencies, $(\widehat{\alpha},\widehat{w}_0)$ are cell-by-cell estimators that are asymptotically linear and Gaussian, and when $\mathbb{Z}^*$ follows the distribution of these estimators under a standard bootstrap approach. For example, for the empirical bootstrap we can let $\mathbb{Z}_{p}^* \coloneqq \sqrt{n}(\widehat{p}^* - \widehat{p})$ where $\widehat{p}_j^* = \frac{1}{n}\sum_{i=1}^n \1(X_i^* = x_j)$, and where $(X_1^*,\ldots,X_n^*)$ are drawn from the empirical distribution of $(X_1,\ldots,X_n)$. 

\begin{theorem}[Bootstrap Validity]\label{thm:bootstrap}
		Suppose the assumptions of Theorem \ref{thm:cons_asynorm} hold and that Assumption~\ref{assn:boot_prelim} holds. Then,
		\begin{align*}
			\widehat{\psi}(\mathbb{Z}^*) \overset{p}{\rightsquigarrow} \psi(\mathbb{Z})
		\end{align*}
		as $n\rightarrow \infty$, where $\widehat{\psi}$ is defined by
		\begin{align}\label{eq:psihat_mapping}
	\widehat{\psi}(\mathbb{Z}^*) \; &\coloneqq \; \sum_{j=1}^K \frac{\widehat{w}_0(x_j)\widehat{p}_j}{\widehat{\P}(W_0=1) \widehat{\alpha}_{\max}} \mathbb{Z}^*_{\alpha}(j) \; - \; \frac{\widehat{\E}_{W_0}[\alpha(X)]}{\widehat{\alpha}_{\max}^2} \max_{j \in \widehat{\Psi}_{\widehat{\mathcal{X}}^+}} \mathbb{Z}^*_\alpha(j)\notag\\
		 &\quad + \; \sum_{j=1}^K \frac{(\widehat{\alpha}(x_j) - \widehat{\E}_{W_0}[\alpha(X)])\widehat{p}_j}{\widehat{\P}(W_0=1) \widehat{\alpha}_{\max}} \mathbb{Z}^*_{w_0}(j)\notag\\
		 & \quad + \; \sum_{j=1}^K \frac{(\widehat{\alpha}(x_j) - \widehat{\E}_{W_0}[\alpha(X)])\widehat{w}_0(x_j)}{\widehat{\P}(W_0 = 1)\widehat{\alpha}_{\max}}\mathbb{Z}^*_p(j),
\end{align}
where $\widehat{\E}_{W_0}[\alpha(X)] \coloneqq \frac{\avg \widehat{\alpha}(X_i)\widehat{w}_0(X_i)}{\avg \widehat{w}_0(X_i)}$, $\widehat{\P}(W_0=1) \coloneqq \avg \widehat{w}_0(X_i)$, $ \widehat{\alpha}_{\max} \coloneqq \max_{i: \widehat{w}_0(X_i) > c_n} \widehat{\alpha}(X_i)$, $\widehat{\Psi}_{\widehat{\mathcal{X}}^+} \coloneqq \left\{k: x_k \in \widehat{\mathcal{X}}^+, \widehat{\alpha}(x_k) \geq \max_{x_j \in \widehat{\mathcal{X}}^+} \widehat{\alpha}(x_j) - \xi_n\right\}$, where $\widehat{\mathcal{X}}^+ \coloneqq \{x_j : \widehat{p}_j > 0, \widehat{w}_0(x_j) > c_n\}$ and $\xi_n$ is a positive sequence satisfying $\xi_n \rightarrow 0$ and $\xi_n \sqrt{n} \rightarrow \infty$ as $n\rightarrow \infty$. 
\end{theorem}

\noindent
The proof of Theorem \ref{thm:bootstrap} shows that $\widehat{\Psi}_{\widehat{\mathcal{X}}^+}$ consistently estimates $\Psi_{\mathcal{X}^+}$ and thus satisfies the conditions of Theorem 3.2 in \cite{FangSantos2019}. The bootstrap procedure is valid whether the limiting distribution is Gaussian or not. If we assume $\alpha(x)$ is maximized at a single value, standard bootstrap procedures can also be used to approximate the limiting distribution of $\widehat{a}_{\max}^{-1}$.

We propose the following bootstrap procedure to compute a one-sided $(1-\beta)$ confidence interval for $\amax^{-1}$.

\begin{algorithm}[One-Sided Confidence Interval for $\amax^{-1}$] We compute the confidence interval in three steps:
\begin{enumerate}
	\item Compute $\widehat{\theta}$ and $\widehat{a}_{\max}^{-1}$ using the random sample $\{(W_i,X_i)\}_{i=1}^n$;
	\item For bootstrap samples $b = 1,\ldots,B$, compute $\widehat{\theta}^{\ast,b} \coloneqq (\widehat{\alpha}^{\ast,b},\widehat{w}_0^{\ast,b},\widehat{p}^{\ast,b})$ and $\mathbb{Z}^{\ast,b} \coloneqq \sqrt{n}(\widehat{\theta}^{\ast,b} - \widehat{\theta})$;
	\item Compute $\widehat{q}_{\beta}$, the $\beta$ quantile of $\widehat{\psi}(\mathbb{Z}^{\ast,b})$, and report the interval $\left[0,\widehat{a}_{\max}^{-1} -\widehat{q}_{\beta}/\sqrt{n}\right]$.
\end{enumerate}
\end{algorithm}

We could also view these inferential problems through the lens of intersection or union bounds. For example, we can write
\begin{align*}
	\amax^{-1} &= \inf_{x \in \supp(X\mid W_0=1)}\frac{\E_{W_0}[\alpha(X)]}{\alpha(x)} =: \inf_{x \in \supp(X\mid W_0=1)} \amax(x)^{-1}.
\end{align*}
Computing a one-sided confidence interval for $\amax^{-1}$ of the kind $[0,(\widehat{a}_{\max}^{-1})^+]$ can be cast as doing inference on intersection bounds. \cite{ChernozhukovLeeRosen2013} offer methods for such problems. Equivalently, the computation of a one-sided confidence interval $[(\widehat{a}_{\max}^{-1})^-,1]$ is related to inferential questions in union bounds: see \cite{Bei2024}. We leave all details for future work.

\section{Proof of Theorem \ref{thm:unc_upperbounds2}}\label{appsec:fixed_t0_intvalidity}

We begin with a technical lemma used in the proof of Theorem \ref{thm:unc_upperbounds2}. For brevity, throughout this appendix we omit the subscript $W_0$ from expectations and probabilities when doing so causes no confusion.

\begin{lemma}\label{lem:RLcont}
	Suppose Assumption \ref{assn:reg} holds. Then, 
\begin{enumerate}
	\item The functions $\alpha \mapsto \E_{W_0}[T_\mu \1(T_\mu < \alpha)]$ and $\alpha \mapsto \E_{W_0}[T_\mu \1(T_\mu \geq \alpha)]$ are left-continuous.
	\item The functions $\alpha \mapsto \E_{W_0}[T_\mu \1(T_\mu > \alpha)]$ and $\alpha \mapsto \E_{W_0}[T_\mu \1(T_\mu \leq \alpha)]$ are right-continuous.
\end{enumerate}
\end{lemma}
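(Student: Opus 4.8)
The plan is to derive all four statements from the dominated convergence theorem (DCT), using $|T_\mu|$ as the common dominating function and exploiting the complementarity identities $\1(T_\mu \geq \alpha) = 1 - \1(T_\mu < \alpha)$ and $\1(T_\mu > \alpha) = 1 - \1(T_\mu \leq \alpha)$ to halve the work. First I would record that $|T_\mu|$ is integrable conditional on $W_0 = 1$: Assumption \ref{assn:reg} gives $\E[\tau_0(X)^2] < \infty$, and since $\mu_0 = \mu(a,\tau_0)$ is a finite constant under the same assumption, $\E[T_\mu^2] < \infty$; because $\P(W_0 = 1) > 0$, this yields $\E[T_\mu^2 \mid W_0 = 1] < \infty$, and hence $\E[|T_\mu| \mid W_0 = 1] < \infty$ by the Cauchy--Schwarz inequality. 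This integrable bound dominates every integrand $T_\mu \1(T_\mu \in A)$ uniformly in the threshold, which is all DCT requires.

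Next, for $g(\alpha) \coloneqq \E[T_\mu \1(T_\mu < \alpha) \mid W_0 = 1]$, I would fix an arbitrary sequence $\alpha_n \uparrow \alpha$ and verify the pointwise convergence $\1(T_\mu < \alpha_n) \to \1(T_\mu < \alpha)$. The only subtlety lives at the boundary: on $\{T_\mu < \alpha\}$ we have $\alpha_n > T_\mu$ for $n$ large, so the indicators are eventually $1$, while on $\{T_\mu \geq \alpha\}$ (including the critical event $\{T_\mu = \alpha\}$) we have $\alpha_n < \alpha \leq T_\mu$, so the indicators equal $0$ for every $n$ and match the limit. Multiplying by $T_\mu$ and invoking DCT gives $g(\alpha_n) \to g(\alpha)$; since the sequence was arbitrary, $g$ is left-continuous. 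The left-continuity of $\alpha \mapsto \E[T_\mu \1(T_\mu \geq \alpha) \mid W_0 = 1]$ then follows at once from $\E[T_\mu \1(T_\mu \geq \alpha) \mid W_0 = 1] = \E[T_\mu \mid W_0 = 1] - g(\alpha)$.

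The right-continuity statements are symmetric. For $\E[T_\mu \1(T_\mu \leq \alpha) \mid W_0 = 1]$ I would take $\alpha_n \downarrow \alpha$; now on $\{T_\mu \leq \alpha\}$ (including $\{T_\mu = \alpha\}$) we have $T_\mu \leq \alpha < \alpha_n$, so the indicators stay at $1$, while on $\{T_\mu > \alpha\}$ they drop to $0$ once $\alpha_n < T_\mu$, again matching the limit, so DCT applies verbatim. Right-continuity of $\alpha \mapsto \E[T_\mu \1(T_\mu > \alpha) \mid W_0 = 1]$ follows from the complementary identity in the same way. There is no genuinely hard step here; the only thing that needs care is tracking which endpoint each indicator includes, since that is precisely what decides whether the limiting behavior is left- or right-continuous at atoms of the conditional distribution of $T_\mu$.
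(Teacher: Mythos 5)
Your proof is correct and follows essentially the same route as the paper's: dominated convergence with $|T_\mu|$ as the dominating function (integrable conditional on $W_0=1$ by Assumption \ref{assn:reg} and $\P(W_0=1)>0$), a careful pointwise check of the indicators along monotone sequences, and the complementarity identity $\E[T_\mu \1(T_\mu \geq \alpha)\mid W_0=1] = \E[T_\mu\mid W_0=1] - \E[T_\mu \1(T_\mu < \alpha)\mid W_0=1]$ to transfer continuity to the second function in each pair. Your explicit justification of the conditional integrability step is slightly more detailed than the paper's, but the argument is the same.
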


\begin{proof}[Proof of Lemma \ref{lem:RLcont}]
	 The function $\alpha \mapsto \E[T_\mu \1(T_\mu < \alpha)]$ is left-continuous if for any strictly increasing sequence $\alpha_n \nearrow \alpha$ we have that $\E[T_\mu \1(T_\mu < \alpha_n)] \rightarrow \E[T_\mu \1(T_\mu < \alpha)]$.  To see this holds, note that $f_n(t) \coloneqq t \1(t < \alpha_n) \rightarrow t \1(t<\alpha)$ since $t \1(t<\alpha_n) = 0$ for all $t \geq \alpha$, and $t \1(t< \alpha_n) = t$ whenever $t< \alpha$ for sufficiently large $n$. The random variable $| T_\mu \1(T_\mu < \alpha_n)| $ is dominated by $| T_\mu| $ and $\E[| T_\mu| ]<\infty$ by Assumption \ref{assn:reg}. Therefore, by dominated convergence, $\E[T_\mu \1(T_\mu < \alpha_n)] \rightarrow \E[T_\mu \1(T_\mu < \alpha)]$ hence $\E[T_\mu \1(T_\mu < \alpha)]$ is left-continuous. The function $\alpha \mapsto \E[T_\mu \1(T_\mu \geq \alpha)]$ is also left-continuous because $\E[T_\mu \1(T_\mu \geq \alpha)] = \E[T_\mu] - \E[T_\mu \1(T_\mu < \alpha)]$. The lemma's second claim can be similarly shown by considering a sequence $\alpha_n \searrow \alpha$.
\end{proof}

\begin{proof}[Proof of Theorem \ref{thm:unc_upperbounds2}]
We break down this proof into four cases.

\smallskip
\noindent\textbf{Case 1: $\mu_0 \in \mathcal{S}(\tau_0;W_0)$ and $\mu_0 < E_0$}

\noindent
We want to maximize $\Prob_{W_0}(W^* = 1)$ over the subpopulations $W^*$ in $\mathcal{W}(a;W_0,\{\tau_0\})$. Recall that $W^* \in \mathcal{W}(a;W_0,\{\tau_0\})$ if $\mu_0 = \mu(\underline{w}^*/\E_{W_0}[\underline{w}^*(X)],\tau_0)$ and $W^* \in \text{SP}(W_0)$ hold, where $\underline{w}^*(X) = \Prob_{W_0}(W^*=1\mid X)$. Therefore,
\begingroup
\allowdisplaybreaks
\begin{align*}
	\text{MIV}(a,W_0;\{\tau_0\}) &= \sup_{W^* \in \mathcal{W}(a;W_0,\{\tau_0\})} \Prob_{W_0}(W^* = 1)\\
	&= \sup_{W^* \in \{0,1\}: \mu_0 = \mu(\underline{w}^*/\E_{W_0}[\underline{w}^*(X)],\tau_0), W^* \in \text{SP}(W_0)} \E_{W_0}[\underline{w}^*(X) ]\\
	&\leq \sup_{W^* \in \{0,1\}:\mu_0 = \mu(\underline{w}^*/\E_{W_0}[\underline{w}^*(X)],\tau_0)} \E_{W_0}[\underline{w}^*(X)]\\
	&= \sup_{\underline{w}^*: \mu_0 = \mu(\underline{w}^*/\E_{W_0}[\underline{w}^*(X)],\tau_0), \underline{w}^*(X) \in [0,1]} \E_{W_0}[\underline{w}^*(X)].
\end{align*}
\endgroup
We will first derive a closed-form expression for an upper bound on $\text{MIV}(a,W_0;\{\tau_0\})$. We will then show that it is attained by a corresponding $W^+ \in \mathcal{W}(a;W_0,\{\tau_0\})$ and therefore equals $\text{MIV}(a,W_0;\{\tau_0\})$. To simplify notation, we omit the $W_0$ subscript from expectations and probabilities below.

Before proceeding, let $\alpha^+ \coloneqq \inf\{\alpha \geq 0: R(\alpha) \geq 0\}$ where $R(\alpha) \coloneqq \E[T_\mu \1(T_\mu \leq \alpha)]$. Since
$\E[T_\mu]=E_0-\mu_0>0$, we have that $\alpha^+<\infty$. By construction, we also have that $\alpha^+ \geq 0$.

If $\alpha^+=0$, then $R(0)=0$. Since $T_\mu \1(T_\mu\leq 0)\leq 0$, this implies $\Prob(T_\mu<0)=0$. For any feasible $\underline{w}^*$, the constraint $\E[\underline{w}^*(X)T_\mu]=0$, together with $\underline{w}^*(X)\geq 0$ and $T_\mu\geq 0$ almost surely, implies $\underline{w}^*(X)T_\mu = 0$ almost surely. This implies that $\underline{w}^*(X) = \underline{w}^*(X)\1(T_\mu = 0)$. Hence $\E[\underline{w}^*(X)] = \E[\underline{w}^*(X)\1(T_\mu = 0)] \leq \Prob(T_\mu=0)$. This upper bound is attained by setting $W^+\coloneqq \1(T_\mu=0)W_0$, so in this boundary case $\text{MIV}(a,W_0;\{\tau_0\})=\Prob(T_\mu=0)$, which coincides with \eqref{eq:HC_Uppbound} when $\alpha^+  = 0$.

We therefore assume for the remainder of Case 1 that $\alpha^+>0$. On $[0,\infty)$, $R$ is nondecreasing and right-continuous, so
$R(\alpha^+ )\geq0$. Moreover, by the definition of $\alpha^+$, $R(\alpha^+ -\varepsilon)<0$ for all $\varepsilon\in(0,\alpha^+)$.

Second, we show an upper bound for $\text{MIV}(a,W_0;\{\tau_0\})$. For all $\underline{w}^*$ such that $\mu_0 = \mu(\underline{w}^*/\E_{W_0}[\underline{w}^*(X)],\tau_0)$ and $\underline{w}^*(X) \in [0,1]$, we have that
\begingroup
\allowdisplaybreaks
\begin{align*}
	\E[\underline{w}^*(X)] &= \frac{\E[\underline{w}^*(X)(\alpha^+ - T_\mu)]}{\alpha^+} + \frac{\E[\underline{w}^*(X)T_\mu]}{\alpha^+}\\
	&= \frac{\E[\underline{w}^*(X)(\alpha^+ - T_\mu)]}{\alpha^+}\\
	&= \frac{\E[\underline{w}^*(X)(\alpha^+ - T_\mu)\1(T_\mu \leq \alpha^+)]}{\alpha^+} + \frac{\E[\underline{w}^*(X)(\alpha^+ - T_\mu)\1(T_\mu > \alpha^+)]}{\alpha^+}\\
	&\leq \frac{\E[1\cdot (\alpha^+ - T_\mu)\1(T_\mu \leq \alpha^+)]}{\alpha^+} + \frac{\E[0\cdot (\alpha^+ - T_\mu)\1(T_\mu > \alpha^+)]}{\alpha^+}\\
	&= \E[\1(T_\mu \leq \alpha^+)] - \frac{\E[ T_\mu \1(T_\mu \leq \alpha^+)]}{\alpha^+}\\
	&=: P^+.
\end{align*}
\endgroup
The second equality follows from $\mu_0 = \mu(\underline{w}^*/\E[\underline{w}^*(X)],\tau_0)$. The inequality follows from $\{0,1\}$ being lower/upper bounds for $\underline{w}^*(X)$. Therefore, $\text{MIV}(a,W_0;\{\tau_0\}) \leq P^+$. 

Third, and finally, we show this inequality is binding by finding $W^+ \in \mathcal{W}(a;W_0,\{\tau_0\})$ such that $\Prob(W^+ = 1 \mid W_0 = 1) = P^+$.

We start by defining
\begin{align*}
	\underline{w}^+(X) &\coloneqq \1(T_\mu < \alpha^+) + \left(1 - \frac{R(\alpha^+)\1(\Prob(T_\mu = \alpha^+) \neq 0)}{\alpha^+ \Prob(T_\mu = \alpha^+)}\right)\1(T_\mu = \alpha^+).
\end{align*}
This function is bounded above by 1 because $R(\alpha^+) \geq 0$ and $\alpha^+ > 0$. To show $\underline{w}^+$ is bounded below by 0, consider cases where $\Prob(T_\mu = \alpha^+)$ or $R(\alpha^+)$ equal and differ from 0. If $\Prob(T_\mu = \alpha^+) = 0$ or $R(\alpha^+) = 0$, then $\underline{w}^+(X) \in \{0,1\}$ and it is therefore bounded below by 0. If $\Prob(T_\mu = \alpha^+)>0$ and $R(\alpha^+) > 0$, then
\begingroup
\allowdisplaybreaks
\begin{align*}
	1 - \frac{R(\alpha^+)}{\alpha^+ \Prob(T_\mu = \alpha^+)} &=  \frac{\alpha^+ \Prob(T_\mu = \alpha^+)  - \E[T_\mu \1(T_\mu \leq \alpha^+)]}{\alpha^+ \Prob(T_\mu = \alpha^+)}\\
	&= \frac{\E[T_\mu \1(T_\mu = \alpha^+)]  - \E[T_\mu \1(T_\mu \leq \alpha^+)]}{\alpha^+ \Prob(T_\mu = \alpha^+)}\\
	&= \frac{ - \E[T_\mu \1(T_\mu < \alpha^+)]}{\alpha^+ \Prob(T_\mu = \alpha^+)}.
\end{align*}
\endgroup
By the definition of $\alpha^+$ as an infimum, we must have that $R(\alpha^+ - \varepsilon) < 0$ for all $\varepsilon > 0$ such that $\alpha^+ - \varepsilon \geq 0$. By Lemma \ref{lem:RLcont}, $\alpha \mapsto \E[T_\mu \1(T_\mu < \alpha)]$ is left-continuous and satisfies $\E[T_\mu \1(T_\mu < \alpha)] \leq R(\alpha)$. Therefore, since $R(\alpha^+ - \varepsilon) < 0$ for all $\varepsilon \in (0,\alpha^+)$, we must have that $\E[T_\mu \1(T_\mu < \alpha^+ - \varepsilon)] < 0$ for all $\varepsilon \in (0,\alpha^+)$. Letting $\varepsilon \searrow 0$ yields that $\E[T_\mu \1(T_\mu < \alpha^+)] \leq 0$. Therefore $- \E[T_\mu \1(T_\mu < \alpha^+)]/(\alpha^+ \Prob(T_\mu = \alpha^+)) \geq 0$ and $\underline{w}^+(X) \geq 0$.

Next, we compute
\vspace{-0.25cm}
\begingroup
\allowdisplaybreaks
\begin{align*}
	\E[\underline{w}^+(X)] &= \P(T_\mu < \alpha^+) + \left(1 - \frac{R(\alpha^+)\1(\Prob(T_\mu = \alpha^+) \neq 0)}{\alpha^+ \Prob(T_\mu = \alpha^+)}\right) \P(T_\mu = \alpha^+)\\
	&= 	\P(T_\mu \leq \alpha^+) - \frac{\E[T_\mu \1(T_\mu \leq \alpha^+)]}{\alpha^+}\1(\Prob(T_\mu = \alpha^+) \neq 0)\\
	&= \Prob(T_\mu \leq \alpha^+) - \frac{\E[T_\mu \1(T_\mu \leq \alpha^+)]}{\alpha^+}\\
	&= P^+.
\end{align*}
\endgroup
The indicator function disappears in the third equality because $\Prob(T_\mu = \alpha^+) = 0$ implies $\E[T_\mu \1(T_\mu \leq \alpha^+)] = 0$ as shown above.

We next verify $\mu(\underline{w}^+/\E[\underline{w}^+(X)],\tau_0) = \mu_0$. This condition is equivalent to $\E[\underline{w}^+(X)T_\mu] = 0$, which we verify here:
\begingroup
\allowdisplaybreaks
\begin{align*}
	\E[\underline{w}^+(X)T_\mu] 
	&=  \E[T_\mu \1(T_\mu \leq \alpha^+)]\\
	&\quad - \frac{R(\alpha^+)\1(\Prob(T_\mu = \alpha^+) \neq 0)}{\alpha^+ \Prob(T_\mu = \alpha^+)} \alpha^+ \Prob(T_\mu = \alpha^+)\\
	&= R(\alpha^+)- R(\alpha^+)\1(\Prob(T_\mu = \alpha^+) \neq 0)\\
	&= R(\alpha^+)\1(\Prob(T_\mu = \alpha^+) = 0).
\end{align*}
\endgroup
Therefore, $\E[\underline{w}^+(X)T_\mu]$ equals 0 when $R(\alpha^+)  = 0$. When $R(\alpha^+) > 0$, we have that $\Prob(T_\mu = \alpha^+) > 0$ as shown earlier. So $\E[\underline{w}^+(X)T_\mu]$ is also equal to 0 in this case.

We conclude by showing that $\underline{w}^+(X)$ corresponds to $\Prob(W^+ = 1\mid X)$ for some $W^+ \in \text{SP}(W_0)$. Let $U \sim \text{Unif}(0,1)$ satisfy $U \indep (Y(1),Y(0),X,W_0)$ and define
\begin{align*}
	W^+ &\coloneqq \left(\1(T_\mu < \alpha^+) + \1\left(T_\mu = \alpha^+, U \leq 1 - \frac{R(\alpha^+)\1(\Prob(T_\mu = \alpha^+) \neq 0)}{\alpha^+ \Prob(T_\mu = \alpha^+)}\right)\right) \cdot W_0.
\end{align*}
By construction, $W^+ \in \{0,1\}$, $\Prob(W^+ = \; 1\mid X, W_0 = 1) \; = \; \underline{w}^+(X)$, $\Prob(W_0 \; = \; 1 \mid W^+ = \; 1) \; = \; 1$, and $W^+ \indep (Y(1),Y(0))\mid X,W_0 = 1$. Also, since $\mu_0 \in \mathcal{S}(\tau_0;W_0)$, $\Prob(T_\mu \leq 0) = \Prob(\tau_0(X) \leq \mu_0) > 0$. Since $\alpha^+ > 0$ we have that $\Prob(W^+ = 1) \geq \Prob(T_\mu < \alpha^+) \geq \Prob(T_\mu \leq 0) > 0$. Therefore $W^+$ is a regular subpopulation of $W_0$ for which $\Prob(W^+ = 1 \mid W_0 = 1) = P^+$, hence $P^+$ is the maximum, which coincides with \eqref{eq:HC_Uppbound} when $\alpha^+  > 0$.

\medskip
\noindent\textbf{Case 2: $\mu_0 \in \mathcal{S}(\tau_0;W_0)$ and $\mu_0 > E_0$}

\noindent
As in Case 1, $\text{MIV}(a,W_0;\{\tau_0\}) \leq \sup_{\underline{w}^*: \mu(\underline{w}^*/\E_{W_0}[\underline{w}^*(X)],\tau_0) = \mu_0, \underline{w}^*(X) \in [0,1]} \E_{W_0}[\underline{w}^*(X)]$.

Let $\alpha^- \coloneqq \sup\{\alpha \leq 0: L(\alpha) \leq 0\}$ where $L(\alpha) \coloneqq \E[T_\mu \1(T_\mu \geq \alpha)]$.  By construction, $\alpha^- \in (-\infty,0]$. Similarly to \mbox{Case 1}, when $\alpha^- = 0$, we obtain as a special case that $\text{MIV}(a,W_0;\{\tau_0\}) = \P(T_\mu = 0)$ by setting $W^-\coloneqq \1(T_\mu=0)W_0$. 

The rest of this case assumes that $\alpha^- < 0$. By Lemma \ref{lem:RLcont}, $L$ is a left-continuous function, and therefore $L(\alpha^-) = \lim_{\alpha \nearrow \alpha^-} L(\alpha) \leq 0$. 

We now show an upper bound for $\text{MIV}(a,W_0;\{\tau_0\})$. For all $\underline{w}^*$ such that $\mu(\underline{w}^*/\E[\underline{w}^*(X)],\tau_0) = \mu_0$ and $\underline{w}^*(X) \in [0,1]$, we have that
\begingroup
\allowdisplaybreaks
\begin{align*}
	&\E[\underline{w}^*(X)]\\
	&\leq \frac{\E[1\cdot (\alpha^- - T_\mu)\1(T_\mu \geq \alpha^-)]}{\alpha^-} + \frac{\E[0\cdot (\alpha^- - T_\mu)\1(T_\mu < \alpha^-)]}{\alpha^-}\\
	&= \E[\1(T_\mu \geq \alpha^-)] - \frac{\E[ T_\mu \1(T_\mu \geq \alpha^-)]}{\alpha^-}\\
	&=: P^-,
\end{align*}
\endgroup
which follows from a similar argument as above. This implies $\text{MIV}(a,W_0;\{\tau_0\}) \leq P^- $. We now show that this inequality is an equality by finding $W^- \in \mathcal{W}(a;W_0,\{\tau_0\})$ such that $\Prob(W^- = 1 ) = P^-$. Let
\begin{align*}
	\underline{w}^-(X) &\coloneqq \1(T_\mu > \alpha^-) + \left(1 - \frac{L(\alpha^-)\1(\Prob(T_\mu = \alpha^-) \neq 0)}{\alpha^- \Prob(T_\mu = \alpha^-)}\right)\1(T_\mu = \alpha^-).
\end{align*}
The rest of the proof follows symmetrically from the previous case.

\medskip
\noindent \textbf{Case 3: $\mu_0 = E_0 \in \mathcal{S}(\tau_0;W_0)$}

\noindent
Let $W^* = W_0$. Then $W^* \in \text{SP}(W_0)$, we have that $\mu_0 = \E_{W_0}[Y(1) - Y(0)]$ and thus $\P_{W_0}(W^* = 1)$ is maximized at 1.

\medskip
\noindent \textbf{Case 4:} $\mu_0 \notin \mathcal{S}(\tau_0;W_0)$

\noindent
By Theorem \ref{thm:existence2}, there does not exist a regular subpopulation $W^*$ satisfying $\mu_0 = \E[Y(1) - Y(0)\mid W^* = 1]$ and therefore the supremum equals 0 by its definition.
\end{proof}

\section{Proofs for Section \ref{sec:examples2}}\label{appsec:proofsforsec5}

\begin{proof}[Proof of Proposition \ref{prop:CDH}]
We begin by noting that
\begin{align*}
	\beta_\text{TWFE} &= \frac{\avgt\E[\ddot{D}_t Y_t]}{\avgt\E[\ddot{D}_t^2]} = \frac{\sum_{t=1}^T \E[\ddot{D}_t Y_t\mid P=t]\Prob(P = t)}{\sum_{t=1}^T \E[\ddot{D}_t^2\mid P=t] \Prob(P=t)} = \frac{\E[\ddot{D} Y]}{\E[\ddot{D}^2]},
\end{align*} 
where the second equality follows from the uniform distribution of $P$ which is independent from $(\ddot{D}_t,Y_t)$ for all $t \in \{1,\ldots,T\}$. The third equality follows from defining $\ddot{D} \coloneqq \ddot{D}_P$. We also note that
\begingroup
\allowdisplaybreaks
\begin{align*}
	\ddot{D} &=  D_P - \frac{1}{T}\sum_{s=1}^T D_s - \sum_{t=1}^T\E[D_t]\1(P=t) + \sum_{s=1}^T \E[D_s]\E[\1(P=s)]\\
	&= D - \frac{1}{T}\sum_{s=1}^T \1(G \leq s) - \E[D\mid P] + \E[D]\\
	&= D - \E[D\mid G] - \E[D\mid P] + \E[D].
\end{align*}
\endgroup
The third equality follows from $\E[D\mid G] = \E[\1(G \leq P)\mid G] = \frac{1}{T}\sum_{s=1}^T \1(G \leq s) = \frac{1}{T}\sum_{s=1}^T D_s$. We break down the rest of this proof into four steps.

\bigskip
\noindent\textbf{Step 1: Splitting the Numerator in Two}

\noindent
We have that
\begin{align*}
	\E[\ddot{D}Y] &= \E[\ddot{D}(Y(0) + D (Y(1) - Y(0)))]\\
	&= \E[\ddot{D}\E[Y(0)\mid G,P]] + \E[\ddot{D}D \E[Y(1) - Y(0)\mid G,P]].	
\end{align*}
The first equality follows from $Y = Y(0) + D(Y(1) - Y(0))$ and the second from iterated expectations and $\E[D\mid G,P] = D$.

\bigskip
\noindent \textbf{Step 2: First Numerator Term}

\noindent
We have that
\begin{align*}
	\E[\ddot{D}\E[Y(0)\mid G,P]] &= \E[\ddot{D} \theta(G,P)] = \E[\ddot{D} \ddot{\theta}(G,P)],
\end{align*}
where $\theta(G,P) \coloneqq \E[Y(0)\mid G,P]$. The second equality follows by properties of projections and from defining $\ddot\theta(G,P)$ as follows:
\begin{align*}
	\ddot{\theta}(G,P) &\coloneqq \theta(G,P) - \E[\theta(G,P)\mid G] - \E[\theta(G,P)\mid P] + \E[\theta(G,P)]\\
	&= \E[Y(0)\mid G,P] - \E[Y(0)\mid G] - \E[Y(0)\mid P] + \E[Y(0)].
\end{align*}
Then, we note that
\begingroup
\allowdisplaybreaks
\begin{align*}
	\ddot{\theta}(g',t') &= \E[Y(0)\mid G =g',P = t'] - \E[Y(0)\mid G = g'] - \E[Y(0)\mid P = t'] + \E[Y(0)]\\
	&= \E[Y_{t'}(0)\mid G =g'] - \avgt \E[Y_t(0)\mid G=g']\\
	&\quad - \sum_{g \in \supp(G)} \left( \E[Y_{t'}(0)\mid G=g] - \avgt  \E[Y_t(0)\mid G=g]\right) \Prob(G=g)\\
	&= \avgt \E[Y_{t'}(0) - Y_t(0)\mid G=g'] - \E\left[\avgt \E[Y_{t'}(0) - Y_t(0)\mid G]\right].
\end{align*}
\endgroup
Assumption \ref{assn:DiD}.2 implies that for any pair $t,t' \in \{1,\ldots,T\}$ and any $g' \in \supp(G)$
\begin{align*}
	\E[Y_{t'}(0) - Y_t(0)\mid G=g'] = \E[Y_{t'}(0) - Y_t(0)].
\end{align*}
This can be shown for $t' > t$ by writing $\E[Y_{t'}(0) - Y_t(0)\mid G=g'] = \sum_{s=t+1}^{t'} \E[Y_{s}(0) - Y_{s-1}(0)\mid G=g'] = \sum_{s=t+1}^{t'} \E[Y_{s}(0) - Y_{s-1}(0)] = \E[Y_{t'}(0) - Y_t(0)]$. Similar derivations show this holds for $t' < t$. The case where $t' = t$ is trivial. Therefore,
\begin{align*}
	\ddot{\theta}(g',t') &= \avgt \E[Y_{t'}(0) - Y_t(0)\mid G=g'] - \E\left[\avgt \E[Y_{t'}(0) - Y_t(0)\mid G]\right]\\
	&= \avgt \E[Y_{t'}(0) - Y_t(0)] - \E\left[\avgt \E[Y_{t'}(0) - Y_t(0)]\right]\\
	&= 0
\end{align*}
for all $(g',t') \in \supp(G) \times \{1,\ldots,T\}$, which implies $\E[\ddot{D} \E[Y(0)\mid G,P]] = 0$.

\bigskip
\noindent \textbf{Step 3: Second Numerator Term}

\noindent
We can write
\begin{align}
	\ddot{D} D &= (D - \E[D\mid G] - \E[D\mid P] + \E[D])D\notag\\
	&= (1 - \E[D\mid G] - \E[D\mid P] + \E[D])\Prob(D=1\mid G,P)\label{eq:ddot_d_equation}
\end{align}
by $D^2 = D = \E[D\mid G,P]$. Thus,
\begin{align*}
	&\E[\ddot{D} D \E[Y(1) - Y(0)\mid G,P]]\\
	&= \E[(1 - \E[D\mid G] - \E[D\mid P] + \E[D]) \E[Y(1) - Y(0)\mid G,P, D = 1]\Prob(D=1\mid G,P)].
\end{align*}

\noindent\textbf{Step 4: Denominator}

\noindent
In this step, we show that
\begin{align*}
	\E[\ddot{D}^2] &= \E[\ddot{D} D] = \E[(1 - \E[D\mid G] - \E[D\mid P] + \E[D])\Prob(D=1\mid G,P)].
\end{align*}
The first equality is obtained from properties of linear projections and the second follows from equation \eqref{eq:ddot_d_equation}.

\medskip
\noindent
We conclude the proof by noting the equivalence of integrating over the distribution of $P$ and averages over time periods, which shows the equivalence between $\beta_\text{TWFE}$, the expression in Proposition \ref{prop:CDH}, and
\begin{align*}
	 \frac{\E[(1 - \E[D\mid G] - \E[D\mid P] + \E[D]) \cdot \E[Y(1) - Y(0)\mid G,P, D = 1] \cdot \Prob(D=1\mid G,P)]}{\E[(1 - \E[D\mid G] - \E[D\mid P] + \E[D]) \cdot \Prob(D=1\mid G,P)]}.
\end{align*}
\end{proof}

\begin{proof}[Proof of Proposition \ref{prop:GB}]
Proposition \ref{prop:CDH} and $\Prob(D=1\mid G,P) = D$ yield
\begin{align*}
	\beta_\text{TWFE} &= \frac{\E[(1 - \E[D\mid G] - \E[D\mid P] + \E[D]) \cdot D \cdot \E[Y(1) - Y(0)\mid G,P,D=1]	]}{\E[(1 - \E[D\mid G] - \E[D\mid P] + \E[D]) \cdot D]}.
\end{align*}
Since $\E[Y(1) - Y(0)\mid G,P,D=1] = \E[Y(1) - Y(0)\mid G,D=1]$ by assumption, we can use the law of iterated expectations to obtain
\begin{align*}
	\beta_\text{TWFE} &= \frac{\E[\E[(1 - \E[D\mid G] - \E[D\mid P] + \E[D]) \cdot D\mid G] \cdot \E[Y(1) - Y(0)\mid G,D=1]	]}{\E[\E[(1 - \E[D\mid G] - \E[D\mid P] + \E[D]) \cdot D\mid G]]}.
\end{align*}
We now calculate the conditional expectation $\E[(1 - \E[D\mid G] - \E[D\mid P] + \E[D]) \cdot D\mid G = g]$ for $g \in \supp(G)$.  If $g = +\infty$, then this conditional expectation is 0 by construction, so we focus on the case where $g \in \{2,\ldots,T\}$. For these derivations, we let $F_G(p) \coloneqq \P(G \leq p)$ denote the cdf of $G$ at $p$. We have that:
\begingroup
\allowdisplaybreaks
\begin{align*}
	&\E[(1 - \E[D\mid G] - \E[D\mid P] + \E[D])D\mid G = g]\\
	&= \E[D\mid G = g]\left(1 - \E[D\mid G = g] - \frac{\E[\E[D\mid P]D\mid G=g]}{\E[D\mid G=g]} + \E[D]\right)\\
	&= \E[D\mid G=g]\left(1 - \E[\1(G \leq P)\mid G=g] - \frac{\E[F_G(P)\1(G \leq P)\mid G = g]}{\E[\1(G \leq P)\mid G=g]}\right.\\
	&\quad \left. + \; \E[\E[\1(G \leq P)\mid P]]\right)\\
	&= \E[D\mid G=g]\left(1 - \E[\1(g \leq P)] - \frac{\E[F_G(P)\1(g \leq P)]}{\E[\1(g \leq P)]} + \E[F_G(P)]\right)\\
	&= \E[D\mid G=g]\left(1 - \E[\1(g \leq P)] - \E[F_G(P)\mid g \leq P]\right.\\
	&\quad \left. + \; \E[F_G(P)\mid g \leq P]\E[\1(g \leq P)] + \E[F_G(P)\mid g > P]\E[\1(g > P)]\right)\\
	&= \E[D\mid G=g]\E[\1(g > P)](1 + \E[F_G(P)\mid g > P] - \E[F_G(P)\mid g \leq P])\\
	&= \E[D\mid G=g](1 - \E[D\mid G=g])(1 + \E[D\mid g > P] - \E[D\mid g \leq P])\\
	&= \Prob(D=1\mid G=g) \cdot \Prob(D=0\mid G=g) \cdot (\Prob(D=1\mid P < g) + \Prob(D=0\mid P \geq g)).
\end{align*}
\endgroup
The first equality follows from $\E[D\mid G=g] > 0$ for $g \in \{2,\ldots,T\}$, the second from $D = \1(G \leq P)$ and the law of iterated expectations, the third from $G \indep P$, the fourth from the law of iterated expectations, the fifth from combining terms, the sixth from the law of iterated expectations again, and the last line is obtained by the fact that $D \in \{0,1\}$. The representation in Proposition \ref{prop:GB} follows.
\end{proof}

\section{Proofs for Appendix \ref{appsec:weaklycausal}}
\label{appsec:proofsforappa}

\begin{proof}[Proof of Proposition \ref{prop:weakly_causal_relationship}]
By Theorem \ref{thm:existence1}, $\mu(a,\tau_0)$ has a uniform causal representation in $\mathcal{T}_\text{all}$ if and only if $a(x_k) \geq 0$ for $k \in \{1,\ldots,K\}$. Therefore, it is sufficient to show the equivalence between weakly causal estimands and estimands with nonnegative weights. A similar result was shown in Proposition 4 of BBMT, but we nevertheless provide a proof here to account for slight differences in notation.

	Suppose $\mu(a,\tau_0)$ is weakly causal. Let $\nu_1 = (\1(a(x_1) < 0),\ldots,\1(a(x_K) < 0))$ and $\nu_0 = \mathbf{0}_K$. Trivially, $(\nu_0,\nu_1) \in \mathcal{M}_{\text{all}}$ and $\tau^- \coloneqq \nu_1 - \nu_0 \in \mathcal{T}_{\text{all}}$, where $\tau^- \geq \mathbf{0}_K$. Since $\mu(a,\tau_0)$ is weakly causal, $\mu(a,\tau^-) \geq 0$ where
	\begin{align*}
		\mu(a,\tau^-) &= \E_{W_0}[a(X)\tau^-(X)] = \sum_{k=1}^K a(x_k) \1(a(x_k) < 0) \P_{W_0}(X = x_k) \geq 0.
	\end{align*}
	This implies $a(x_k) \geq 0$ for all $k \in \{1,\ldots,K\}$. Thus, $\mu(a,\tau_0)$ has a uniform causal representation in $\mathcal{T}_\text{all}$.
	
	Now suppose $\mu(a,\tau_0)$ has a uniform causal representation in $\mathcal{T}_\text{all}$, or that $a(x_k) \geq 0$ for $k \in \{1,\ldots,K\}$. Then, for any $(\nu_0,\nu_1) \in \mathcal{M}_{\text{all}}$ such that $\tau \coloneqq \nu_1 - \nu_0 \geq \mathbf{0}_K$, 
	\begin{align*}
		\mu(a,\tau) &= \sum_{k=1}^K a(x_k) \tau(x_k) \P_{W_0}(X = x_k) \geq 0.
	\end{align*}
	The inequality holds because $a(x_k)$ and $\tau(x_k)$ are nonnegative for all $k \in \{1,\ldots,K\}$. This last inequality is reversed if we instead assume that $\tau \leq \mathbf{0}_K$. Thus, $\mu(a,\tau_0)$ is weakly causal.
\end{proof}

\section{Proofs for Appendix \ref{appsec:estimation}}\label{appsec:proofsforsec6}

We use the following lemma in the proof of Theorem \ref{thm:cons_asynorm}.

\begin{lemma}\label{lemma:HDD_of_max}
	Let $\theta = (\theta(1),\ldots,\theta(K)) \in \R^K$ and let the mapping $\phi:\R^K \to \R$ be defined by $\phi(\theta) \coloneqq \max_{j \in \{1,\ldots,K\}} \theta(j)$. Then, $\phi$ is Hadamard directionally differentiable for all $\theta \in \R^K$ tangentially to $\R^K$ with directional derivative at $\theta$ in direction $h \in \R^K$ equal to
\begin{align*}
	\phi_\theta'(h) &= \max_{j \in \argmax_{k \in \{1,\ldots,K\}} \theta(k)}h(j).
\end{align*}	
\end{lemma}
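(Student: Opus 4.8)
The plan is to verify Hadamard directional differentiability directly from the definition, exploiting the fact that the max function is convex and piecewise linear. Recall that $\phi$ is Hadamard directionally differentiable at $\theta$ tangentially to $\R^K$ with derivative $\phi_\theta'$ if for every $h \in \R^K$, every sequence $h_n \to h$, and every sequence of scalars $t_n \searrow 0$, we have
\begin{align*}
	\frac{\phi(\theta + t_n h_n) - \phi(\theta)}{t_n} \longrightarrow \phi_\theta'(h).
\end{align*}
First I would fix $\theta \in \R^K$ and denote by $A(\theta) = \argmax_{k \in \{1,\ldots,K\}}\theta(k)$ the (nonempty, finite) set of maximizing indices, and set $M = \phi(\theta) = \max_k \theta(k)$. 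For indices $j \notin A(\theta)$ we have $\theta(j) < M$, so there is a strict gap $\delta = \min_{j \notin A(\theta)}(M - \theta(j)) > 0$ (with $\delta = +\infty$ if $A(\theta) = \{1,\ldots,K\}$).

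The key step is to show that for $t_n$ small and $h_n$ close to $h$, the maximizer of $\theta + t_n h_n$ is attained within $A(\theta)$. Concretely, for $j \notin A(\theta)$,
\begin{align*}
	(\theta + t_n h_n)(j) = \theta(j) + t_n h_n(j) \leq M - \delta + t_n h_n(j),
\end{align*}
while for any fixed $j_0 \in A(\theta)$,
\begin{align*}
	(\theta + t_n h_n)(j_0) = M + t_n h_n(j_0).
\end{align*}
Since $h_n \to h$ is bounded and $t_n \searrow 0$, the term $t_n(h_n(j) - h_n(j_0))$ is eventually smaller than $\delta$ in absolute value for all $j \notin A(\theta)$, so the non-maximizing coordinates cannot overtake the maximizing ones for $n$ large. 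Hence, for all sufficiently large $n$,
\begin{align*}
	\phi(\theta + t_n h_n) = \max_{j \in A(\theta)}\bigl(M + t_n h_n(j)\bigr) = M + t_n \max_{j \in A(\theta)} h_n(j).
\end{align*}
Subtracting $\phi(\theta) = M$ and dividing by $t_n$ gives $\max_{j \in A(\theta)} h_n(j)$, which converges to $\max_{j \in A(\theta)} h(j)$ by continuity of the max over a finite set and $h_n \to h$. This establishes the claimed formula $\phi_\theta'(h) = \max_{j \in A(\theta)} h(j)$.

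The main obstacle is conceptually minor but worth care: the directional derivative formula is generally nonlinear in $h$ (it is linear exactly when $A(\theta)$ is a singleton, i.e.\ the maximizer is unique), so I cannot appeal to ordinary (Fr\'echet or G\^ateaux) differentiability and must instead confirm that the limit exists along arbitrary approaching sequences $h_n \to h$ rather than just along the ray $\theta + t h$. The gap argument above handles this by making the index-selection stable under small perturbations; the only subtlety is ensuring the threshold $n$ past which the maximizer lands in $A(\theta)$ can be chosen uniformly, which follows from the uniform boundedness of the convergent sequence $h_n$ together with $t_n \searrow 0$. Once that stabilization is in place, the remaining computation is the routine algebra displayed above.
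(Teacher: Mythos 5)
Your proof is correct and takes essentially the same approach as the paper's: both verify Hadamard directional differentiability directly from the definition by exploiting the strict gap $\delta > 0$ between maximizing and non-maximizing coordinates, together with the boundedness of the convergent sequence $h_n$. The only (cosmetic) difference is that you stabilize the argmax of $\theta + t_n h_n$ within $A(\theta)$ before dividing by $t_n$, whereas the paper divides first and observes that the terms $(\theta(j) - \theta(j_{\max}))/t_n + h_n(j)$ with $j \notin A(\theta)$ diverge to $-\infty$.
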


\begin{proof}[Proof of Lemma \ref{lemma:HDD_of_max}]
	Let $h_n \rightarrow h \in \R^K$ and $t_n \searrow 0$ as $n\rightarrow \infty$. Then,
	\begin{align*}
		t_n^{-1}(\phi(\theta + t_n h_n) - \phi(\theta)) &= t_n^{-1}\left(\max_{k \in \{1,\ldots,K\}}(\theta(k) + t_n h_n(k)) - \max_{k \in \{1,\ldots,K\}} \theta(k)\right).
	\end{align*}
Let $\Theta_{\max} \coloneqq \{j \in \{1,\ldots,K\}: \theta(j) = \max_{k \in \{1,\ldots,K\}} \theta(k)\}$ and let $j_{\max}$ be an element of $\Theta_{\max}$. Then, $\max_{k \in \{1,\ldots,K\}} \theta(k) = \theta(j_{\max})$ and thus
\begin{align*}
	\frac{\phi(\theta + t_n h_n) - \phi(\theta)}{t_n} &= \max\left\{\frac{\theta(1) - \theta(j_{\max})}{t_n} + h_n(1), \ldots , \frac{\theta(K) - \theta(j_{\max})}{t_n} + h_n(K)\right\}.
\end{align*}
For each $j \in \Theta_{\max}$, $(\theta(j) - \theta(j_{\max}))/t_n + h_n(j) = h_n(j) \rightarrow h(j)$. For each $j \notin \Theta_{\max}$, $(\theta(j) - \theta(j_{\max}))/t_n \rightarrow -\infty$ since $\theta(j) - \theta(j_{\max}) < 0$ and $t_n \searrow 0$. Therefore, by continuity of the maximum operator in its arguments, $t_n^{-1}(\phi(\theta + t_n h_n) - \phi(\theta)) \rightarrow \max_{j \in \Theta_{\max}} h(j)$.
\end{proof}

\begin{proof}[Proof of Theorem \ref{thm:cons_asynorm}]
We begin by showing the consistency of $\widehat{a}_{\max}^{-1}$ for $\amax^{-1}$.

\smallskip
\noindent\textbf{Part 1: Consistency}

\noindent
The estimator $\avg \widehat{\alpha}(X_i)\widehat{w}_0(X_i)$ is consistent for $\E[\alpha(X)w_0(X)]$ since its components are assumed consistent by Assumption \ref{assn:prelim_est}, and by the continuous mapping theorem. The consistency of $\avg \widehat{w}_0(X_i)$ for $\E[w_0(X)]$ is similarly established.

We now consider the maximum term in the denominator. We can write
\begin{align*}
	\max_{i: \widehat{w}_0(X_i) > c_n} \widehat{\alpha}(X_i) &= \max_{x \in \widehat{\mathcal{X}}^+} \widehat{\alpha}(x).
\end{align*}
Recall that $\mathcal{X}^+ = \{x_j : p_j > 0, w_0(x_j) > 0\}$. We first show that $\P(\widehat{\mathcal{X}}^+ = \mathcal{X}^+) \rightarrow 1$ as $n \rightarrow \infty$. To see this, first suppose that $x_j \in \mathcal{X}^+$. Then, 
\begingroup
\allowdisplaybreaks
\begin{align*}
	\P(x_j \in \widehat{\mathcal{X}}^+) &= \P\left(\left\{\widehat{p}_j > 0\right\} \cap \left\{\widehat{w}_0(x_j) > c_n\right\}\right)\\
	&\geq \P\left(\widehat{p}_j > 0\right) + \P\left(\widehat{w}_0(x_j) > c_n\right) - 1,
\end{align*}
\endgroup
following an application of Bonferroni's inequality.

We have that $\P(\widehat{p}_j > 0) \rightarrow 1$ since $\widehat{p}_j \pconv p_j > 0$. We also have that $\P(\widehat{w}_0(x_j) > c_n) = \P(\widehat{w}_0(x_j) - c_n > 0) \rightarrow 1$ because $\widehat{w}_0(x_j) - c_n \pconv w_0(x_j) > 0$ by $c_n = o(1)$ and $w_0(x_j) > 0$, which follows from $x_j \in \mathcal{X}^+$. Therefore, $\P(x_j \notin \widehat{\mathcal{X}}^+) = 1 - \P(x_j \in \widehat{\mathcal{X}}^+) \leq 1 - (\P(\avg \1(X_i = x_j) > 0) + \P(\widehat{w}_0(x_j) > c_n) -1 )  \rightarrow 0$ as $n\rightarrow\infty$.

Now suppose that $x_j \notin \mathcal{X}^+$. Then
\begin{align*}
	\P(x_j \notin \widehat{\mathcal{X}}^+) &= \P\left(\left\{\widehat{p}_j = 0\right\} \cup \left\{\widehat{w}_0(x_j) \leq c_n\right\}\right)\\
	&\geq \P(\widehat{w}_0(x_j) \leq c_n) = \P(\sqrt{n}\widehat{w}_0(x_j) \leq \sqrt{n}c_n).
\end{align*}
By Assumption \ref{assn:prelim_est}, $\sqrt{n}\widehat{w}_0(x_j) = \sqrt{n}(\widehat{w}_0(x_j) - w_0(x_j)) \dconv \mathbb{Z}_{w_0}(j)$, since $w_0(x_j) = 0$ for $x_j \notin \mathcal{X}^+$. Therefore, $\sqrt{n}\widehat{w}_0(x_j) = O_p(1)$. Also $\sqrt{n}c_n \rightarrow +\infty$ by the theorem's assumption. Therefore, $\P(\sqrt{n}\widehat{w}_0(x_j) \leq \sqrt{n}c_n) \rightarrow 1$ and $\P(x_j \in \widehat{\mathcal{X}}^+) \rightarrow 0$ as $n\rightarrow \infty$. Because of this,
\begingroup
\allowdisplaybreaks
\begin{align*}
	\P(\widehat{\mathcal{X}}^+ = \mathcal{X}^+) &= \P\left(\bigcap_{x_j \in \mathcal{X}^+} \{x_j \in \widehat{\mathcal{X}}^+\} \cap \bigcap_{x_j \notin \mathcal{X}^+} \{x_j \notin \widehat{\mathcal{X}}^+\}\right)\\
	&= 1 - \P\left(\bigcup_{x_j \in \mathcal{X}^+} \{x_j \notin \widehat{\mathcal{X}}^+\} \cup \bigcup_{x_j \notin \mathcal{X}^+} \{x_j \in \widehat{\mathcal{X}}^+\}\right)\\
	&\geq 1 - \left(\sum_{j: x_j \in \mathcal{X}^+} \P(x_j \notin \widehat{\mathcal{X}}^+) + \sum_{j: x_j \notin \mathcal{X}^+} \P(x_j \in \widehat{\mathcal{X}}^+)\right)\\
	&\rightarrow 1 - 0 = 1.
\end{align*}
\endgroup
Thus, we obtain $\P\left(\max_{x \in \widehat{\mathcal{X}}^+} \widehat{\alpha}(x) = \max_{x \in \mathcal{X}^+} \widehat{\alpha}(x)\right) \geq \P( \widehat{\mathcal{X}}^+ = \mathcal{X}^+) \rightarrow 1$, which yields
\begin{align*}
	\max_{i: \widehat{w}_0(X_i) > c_n} \widehat{\alpha}(X_i) &= \max_{x \in \widehat{\mathcal{X}}^+} \widehat{\alpha}(x) = \max_{x \in \mathcal{X}^+} \widehat{\alpha}(x) + o_p(1).
\end{align*}
By the consistency of $\widehat{\alpha}$ for $\alpha$, the continuity of the maximum operator, and the continuous mapping theorem, $\max_{x \in \mathcal{X}^+} \widehat{\alpha}(x) \pconv \max_{x \in \mathcal{X}^+} \alpha(x)$. Because $\mathcal{X}^+ = \supp(X\mid W_0=1)$ is a finite set, we also have that  $\max_{x \in \mathcal{X}^+} \alpha(x) = \sup(\supp(\alpha(X)\mid W_0=1)) =\alphamax$. Another application of the continuous mapping theorem suffices to show that $\widehat{a}_{\max}^{-1}$ is consistent for $\amax^{-1}$.

\bigskip
\noindent \textbf{Part 2: Asymptotic Distribution}

\noindent
We first establish the joint limiting distribution of terms (i) $\sqrt{n}(\avg \widehat{\alpha}(X_i)\widehat{w}_0(X_i) - \E[\alpha(X)w_0(X)])$, (ii) $\sqrt{n}(\avg \widehat{w}_0(X_i) - \E[w_0(X_i)])$, and (iii) $\sqrt{n}(\max_{i: \widehat{w}_0(X_i) > c_n} \widehat{\alpha}(X_i) - \max_{x \in \mathcal{X}^+} \alpha(x))$. The terms (i) and (ii) can be expanded as follows:
\begingroup
\allowdisplaybreaks
\begin{align}
	\sqrt{n}&\left(\avg \widehat{\alpha}(X_i)\widehat{w}_0(X_i) - \E[\alpha(X)w_0(X)]\right)\notag\\
	 &= \sqrt{n}\left(\sum_{j=1}^K \widehat{\alpha}(x_j)\widehat{w}_0(x_j) \widehat{p}_j - \sum_{j=1}^K \alpha(x_j)w_0(x_j)p_j\right)\notag\\
		&= \sum_{j=1}^K w_0(x_j)p_j \sqrt{n}(\widehat{\alpha}(x_j) - \alpha(x_j)) + \sum_{j=1}^K \alpha(x_j)p_j \sqrt{n}(\widehat{w}_0(x_j) - w_0(x_j))\notag\\
		&\quad + \sum_{j=1}^K \alpha(x_j)w_0(x_j)\sqrt{n}(\widehat{p}_j - p_j) + o_p(1)\label{eq:numerator1}
\end{align}
\endgroup
and
\begingroup
\allowdisplaybreaks
\begin{align}
	\sqrt{n}&\left(\avg \widehat{w}_0(X_i) - \E[w_0(X)]\right) = \sqrt{n}\left(\sum_{j=1}^K \widehat{w}_0(x_j) \widehat{p}_j - \sum_{j=1}^K w_0(x_j)p_j\right)\notag\\
	&= \sum_{j=1}^K \left( p_j \sqrt{n}(\widehat{w}_0(x_j) - w_0(x_j)) + w_0(x_j)\sqrt{n}(\widehat{p}_j - p_j)\right) + o_p(1).\label{eq:denominator1}
\end{align}
\endgroup
For term (iii), we use the expansion
\begingroup
\allowdisplaybreaks
\begin{align}
	\sqrt{n}\left(\max_{i: \widehat{w}_0(X_i) > c_n} \widehat{\alpha}(X_i) - \max_{x \in \mathcal{X}^+} \alpha(x)\right) &= \sqrt{n}\left(\max_{i: \widehat{w}_0(X_i) > c_n} \widehat{\alpha}(X_i) - \max_{x \in \mathcal{X}^+} \widehat{\alpha}(x)\right)\label{eq:term1}\\
	&+ \sqrt{n}\left(\max_{x \in \mathcal{X}^+} \widehat{\alpha}(x) - \max_{x \in \mathcal{X}^+}\alpha(x)\right).\label{eq:term2}
\end{align}
\endgroup
The term in \eqref{eq:term1} is of order $o_p(1)$ because
\begingroup
\allowdisplaybreaks
\begin{align*}
	\P\left(\sqrt{n}\left(\max_{i: \widehat{w}_0(X_i) > c_n} \widehat{\alpha}(X_i) - \max_{x \in \mathcal{X}^+} \widehat{\alpha}(x)\right) = 0\right) &= \P\left(\max_{x \in \widehat{\mathcal{X}}^+} \widehat{\alpha}(x) = \max_{x \in \mathcal{X}^+} \widehat{\alpha}(x)\right)\\
	&\geq \P( \widehat{\mathcal{X}}^+ = \mathcal{X}^+) \rightarrow 1,
\end{align*}
\endgroup
as shown earlier.

The term in \eqref{eq:term2} can be analyzed using Theorem 2.1 in \cite{FangSantos2019}, which generalizes the delta method to the class of Hadamard directionally differentiable functions. Using Lemma \ref{lemma:HDD_of_max}, we have that
\begingroup
\allowdisplaybreaks
\begin{align}
	\sqrt{n}\left(\max_{x \in \mathcal{X}^+} \widehat{\alpha}(x) - \max_{x \in \mathcal{X}^+}\alpha(x)\right) &= \max_{x_j \in \argmax_{x \in \mathcal{X}^+} \alpha(x)}\sqrt{n}\left(\widehat{\alpha}(x_j) - \alpha(x_j)\right) + o_p(1)\notag\\
	&=: \max_{j \in \Psi_{\mathcal{X}^+}}\sqrt{n}\left(\widehat{\alpha}(x_j) - \alpha(x_j)\right) + o_p(1). \label{eq:denominator2}
\end{align}
\endgroup
Combining the expressions in \eqref{eq:numerator1}, \eqref{eq:denominator1}, and \eqref{eq:denominator2} with the delta method yields
\begingroup
\allowdisplaybreaks
\begin{align*}
	\sqrt{n}(\widehat{a}_{\max}^{-1} - \amax^{-1}) &= \frac{1}{\P(W_0=1) \max_{x \in \mathcal{X}^+} \alpha(x)}\sum_{j=1}^K \left(w_0(x_j)p_j \sqrt{n}(\widehat{\alpha}(x_j) - \alpha(x_j)) \right.\\
	&\qquad \left.  + \alpha(x_j)p_j \sqrt{n}(\widehat{w}_0(x_j) - w_0(x_j)) + \alpha(x_j)w_0(x_j)\sqrt{n}(\widehat{p}_j - p_j) \right)\\
	&\quad - \frac{\E_{W_0}[\alpha(X)]}{\P(W_0=1)\max_{x \in \mathcal{X}^+} \alpha(x)}\sum_{j=1}^K \left( p_j \sqrt{n}(\widehat{w}_0(x_j) - w_0(x_j)) + w_0(x_j)\sqrt{n}(\widehat{p}_j - p_j)\right)\\
	&\quad - \frac{\E_{W_0}[\alpha(X)]}{\max_{x \in \mathcal{X}^+} \alpha(x)^2}\max_{j \in \Psi_{\mathcal{X}^+}}\sqrt{n}\left(\widehat{\alpha}(x_j) - \alpha(x_j)\right) + o_p(1)\\
	&=\psi(\sqrt{n}(\widehat{\alpha} - \alpha),\sqrt{n}(\widehat{w}_0 - w_0),\sqrt{n}(\widehat{p} - p)) + o_p(1)\\
	&\dconv \psi(\mathbb{Z})
\end{align*}
\endgroup
by the continuity of $\psi$ and Assumption \ref{assn:prelim_est}.
\end{proof}

\begin{proof}[Proof of Theorem \ref{thm:bootstrap}]

We verify the validity of the bootstrap by appealing to Theorem 3.2 in \cite{FangSantos2019}. By their Remark 3.4, it is sufficient to show that the mapping $\widehat{\psi}$ satisfies the Lipschitz condition $|\widehat{\psi}(h') - \widehat{\psi}(h)| \leq C_n \|h' - h\|$ for any $h', h \in \R^{3K}$ and for $C_n = O_p(1)$, and that $\widehat{\psi}(h) \pconv \psi(h)$ for all $h \in \R^{3K}$.
	
We first verify the Lipschitz condition. Let $h \coloneqq (h_1,h_2,h_3)$ and $h' \coloneqq (h_1',h_2',h_3')$.
\begingroup
\allowdisplaybreaks
\begin{align}
		|\widehat{\psi}(h') - \widehat{\psi}(h)| &\leq \left|\sum_{j=1}^K \frac{\widehat{w}_0(x_j)\widehat{p}_j}{\widehat{\P}(W_0=1) \widehat{\alpha}_{\max}} (h_1'(j) - h_1(j))\right|\notag\\
		& \quad + \left|\frac{\widehat{\E}_{W_0}[\alpha(X)]}{\widehat{\alpha}_{\max}^2} \left(\max_{j \in \widehat{\Psi}_{\widehat{\mathcal{X}}^+}} h_1'(j) - \max_{j \in \widehat{\Psi}_{\widehat{\mathcal{X}}^+}} h_1(j)\right)\right|\notag\\
		&\quad + \left|\sum_{j=1}^K \frac{(\widehat{\alpha}(x_j) - \widehat{\E}_{W_0}[\alpha(X)])\widehat{p}_j}{\widehat{\P}(W_0=1) \widehat{\alpha}_{\max}} (h_2'(j) - h_2(j))\right|\notag\\
		&\quad + \left|\sum_{j=1}^K \frac{(\widehat{\alpha}(x_j) - \widehat{\E}_{W_0}[\alpha(X)])\widehat{w}_0(x_j)}{\widehat{\P}(W_0 = 1)\widehat{\alpha}_{\max}}(h_3'(j) - h_3(j))\right|\notag\\
		&\leq  \left(\sum_{j=1}^K \frac{\widehat{w}_0(x_j)^2\widehat{p}_j^2}{\widehat{\P}(W_0=1)^2 \widehat{\alpha}_{\max}^2}\right)^{1/2} \|h_1' - h_1\|\label{eq:lips_term1}\\
		&\quad + \frac{|\widehat{\E}_{W_0}[\alpha(X)]|}{\widehat{\alpha}_{\max}^2} \left|\max_{j \in \widehat{\Psi}_{\widehat{\mathcal{X}}^+}} h_1'(j) - \max_{j \in \widehat{\Psi}_{\widehat{\mathcal{X}}^+}} h_1(j)\right|\\
		&\quad + \left(\sum_{j=1}^K \frac{(\widehat{\alpha}(x_j) - \widehat{\E}_{W_0}[\alpha(X)])^2\widehat{p}_j^2}{\widehat{\P}(W_0=1)^2 \widehat{\alpha}_{\max}^2}\right)^{1/2} \|h_2' - h_2\|\\
		&\quad + \left(\sum_{j=1}^K \frac{(\widehat{\alpha}(x_j) - \widehat{\E}_{W_0}[\alpha(X)])^2\widehat{w}_0(x_j)^2}{\widehat{\P}(W_0 = 1)^2\widehat{\alpha}_{\max}^2}\right)^{1/2}\|h_3' - h_3\|, \label{eq:lips_term2}
\end{align}
\endgroup
where we applied the Cauchy--Schwarz inequality several times. Note that the maximum function is Lipschitz with Lipschitz constant one and therefore
\begin{align*}
	 \left|\max_{j \in \widehat{\Psi}_{\widehat{\mathcal{X}}^+}} h_1'(j) - \max_{j \in \widehat{\Psi}_{\widehat{\mathcal{X}}^+}} h_1(j)\right| &\leq \sum_{j \in \widehat{\Psi}_{\widehat{\mathcal{X}}^+}} |h_1'(j) - h_1(j)|\notag\\
	 &\leq \sum_{j=1}^K |h_1'(j) - h_1(j)|\notag\\
	 &\leq \sqrt{K}\|h_1' - h_1\|. 
\end{align*}
Combining equations \eqref{eq:lips_term1}--\eqref{eq:lips_term2} with the consistency of $(\widehat{\alpha},\widehat{w}_0,\widehat{p})$ established in Theorem \ref{thm:cons_asynorm} shows that $|\widehat{\psi}(h') - \widehat{\psi}(h)| \leq C_n \|h' - h\|$ for any $h', h \in \R^{3K}$ and for $C_n = O_p(1)$. 

We finish this proof by considering the consistency of the different components of $\widehat{\psi}(h)$. Applying Theorem~\ref{thm:cons_asynorm} and the continuous mapping theorem, the following terms, all present in $\widehat{\psi}(h)$, are consistent for their counterparts in $\psi(h)$:
\begin{align*}
	&\left(\sum_{j=1}^K \frac{\widehat{w}_0(x_j)\widehat{p}_j}{\widehat{\P}(W_0=1) \widehat{\alpha}_{\max}}h_1(j), \frac{\widehat{\E}_{W_0}[\alpha(X)]}{\widehat{\alpha}_{\max}^2},\right.\\
	 &\quad\left. \sum_{j=1}^K \frac{(\widehat{\alpha}(x_j) - \widehat{\E}_{W_0}[\alpha(X)])\widehat{p}_j}{\widehat{\P}(W_0=1) \widehat{\alpha}_{\max}}h_2(j),\sum_{j=1}^K \frac{(\widehat{\alpha}(x_j) - \widehat{\E}_{W_0}[\alpha(X)])\widehat{w}_0(x_j)}{\widehat{\P}(W_0 = 1)\widehat{\alpha}_{\max}}h_3(j)\right).
\end{align*}
To finish proving that $\widehat{\psi}(h) \pconv \psi(h)$, we show that $\max_{j \in \widehat{\Psi}_{\widehat{\mathcal{X}}^+}} h_1(j) \pconv \max_{j \in \Psi_{\mathcal{X}^+}} h_1(j)$. This holds if the set $\widehat{\Psi}_{\widehat{\mathcal{X}}^+}$ is consistent for $\Psi_{\mathcal{X}^+}$, which we establish here. Suppose that $k \in \Psi_{\mathcal{X}^+}$. Then,
\begin{align*}
	\P(k \in \widehat{\Psi}_{\widehat{\mathcal{X}}^+}) &= \P\left(k \in \widehat{\mathcal{X}}^+, \widehat{\alpha}(x_k) \geq \max_{j \in \widehat{\mathcal{X}}^+} \widehat{\alpha}(x_j) - \xi_n \right)\\
	&= \P\left(k \in \widehat{\mathcal{X}}^+, \sqrt{n}(\widehat{\alpha}(x_k) - \max_{j \in \widehat{\mathcal{X}}^+} \widehat{\alpha}(x_j)) \geq - \sqrt{n}\xi_n\right)\\
	&= \P\left(k \in \widehat{\mathcal{X}}^+, \sqrt{n}\left( \widehat\alpha(x_k) - \alpha(x_k)\right) - \sqrt{n}(\max_{j \in \widehat{\mathcal{X}}^+} \widehat{\alpha}(x_j) - \alpha(x_k)) \geq - \sqrt{n}\xi_n\right).
\end{align*}
To evaluate the limit of this last expression, first note that $\sqrt{n}\left( \widehat\alpha(x_k) - \alpha(x_k)\right) = O_p(1)$ by Assumption~\ref{assn:prelim_est}. Moreover, by the proof of Theorem \ref{thm:cons_asynorm}, $ \sqrt{n}\left(\max_{j \in \widehat{\mathcal{X}}^+} \widehat{\alpha}(x_j) - \alpha(x_k)\right) = O_p(1)$, because $k \in \Psi_{\mathcal{X}^+}$. Moreover, $k \in  \Psi_{\mathcal{X}^+}$ implies $k \in \mathcal{X}^+$, so $\P(k \in \widehat{\mathcal{X}}^+) \to \P(k \in \mathcal{X}^+) = 1$. Since $-\sqrt{n}\xi_n \rightarrow -\infty$, $\P(k \in \widehat{\Psi}_{\widehat{\mathcal{X}}^+}) \to 1$ when $k \in \Psi_{\mathcal{X}^+}$. 

Now suppose that $k \notin \Psi_{\mathcal{X}^+}$. Then, $\P(k \in \widehat{\Psi}_{\widehat{\mathcal{X}}^+}) = \P(k \in \widehat{\mathcal{X}}^+,\widehat{\alpha}(x_k) \geq \max_{j \in \widehat{\mathcal{X}}^+} \widehat{\alpha}(x_j) - \xi_n )$. If $k \notin \mathcal{X}^+$, then $\P(k \in \widehat{\Psi}_{\widehat{\mathcal{X}}^+}) \leq \P(k \in \widehat{\mathcal{X}}^+) \rightarrow 0$ as $n\rightarrow \infty$, so $\P(k \in \widehat{\Psi}_{\widehat{\mathcal{X}}^+}) \to 0$. 

If $k \in \mathcal{X}^+$, then
\begingroup
\allowdisplaybreaks
\begin{align*}
	\P(k \in \widehat{\Psi}_{\widehat{\mathcal{X}}^+}) &\leq \P(\widehat{\alpha}(x_k) \geq \max_{j \in \widehat{\mathcal{X}}^+} \widehat{\alpha}(x_j) - \xi_n )\rightarrow \P(\alpha(x_k) \geq \max_{j \in \mathcal{X}^+} \alpha(x_j) - 0) = 0,
\end{align*}
\endgroup
where the last equality holds from $k \notin \Psi_{\mathcal{X}^+}$. Therefore, $\P(\widehat{\Psi}_{\widehat{\mathcal{X}}^+} = \Psi_{\mathcal{X}^+}) \rightarrow 1$ as $n\rightarrow\infty$. This implies $\widehat{\psi}(h) \pconv \psi(h)$, which concludes the proof.
\end{proof}

\section{Additional Derivations for Difference-in-Differences}\label{appsec:DiD}

\cite{Goodman-Bacon2021} provides the following representation of the two-way fixed effects estimand under the assumption that group-time average treatment effects are constant over time:
\begin{align*}
	\beta_\text{TWFE} &= \sum_{k: \; \var(D\mid G=k)>0} \Bigg[ \sum_{j=1}^{k-1} \sigma_{jk}^{k} + \sum_{j=k+1}^{K} \sigma_{kj}^{k} \Bigg] \cdot \E[Y(1) - Y(0)\mid G=k,D=1],
\end{align*}
where
\begin{displaymath}
\sigma_{jk}^{k}  \coloneqq  \frac{\Prob(G=j) \cdot \Prob(G=k) \cdot \Prob(D=1 \mid  G=k) \cdot \big[ \Prob(D=1 \mid  G=j) - \Prob(D=1 \mid  G=k) \big]}{\var(D^{\perp \left( G_{t_1}, \ldots, G_{t_{K-1}}, P_1, \ldots, P_T \right)})}
\end{displaymath}
and
\begin{displaymath}
\sigma_{kj}^{k}  \coloneqq  \frac{\Prob(G=j) \cdot \Prob(G=k) \cdot \big[ 1 - \Prob(D=1 \mid  G=k) \big] \big[ \Prob(D=1 \mid  G=k) - \Prob(D=1 \mid  G=j) \big]}{\var(D^{\perp \left( G_{t_1}, \ldots, G_{t_{K-1}}, P_1, \ldots, P_T \right)})}.
\end{displaymath}
Here $A^{\perp B}$ is used to denote the residual in the linear projection of $A$ on $(1,B)$. It is also the case that $\sum_{k: \; \var(D\mid G=k)>0} \sum_{l>k} \big( \sigma_{kl}^{k} + \sigma_{kl}^{l} \big) = 1$.\footnote{The result in \cite{Goodman-Bacon2021} technically also includes a weight $\sigma_{kU}$ attached to the contrast between group $k$ and the never-treated group. We subsume this weight under $\sigma_{kj}^{k}$, and likewise subsume the weight on the contrast with the always-treated group under $\sigma_{jk}^{k}$.} When we compare this representation with Proposition \ref{prop:GB}, that is,
\begingroup
\allowdisplaybreaks
\begin{eqnarray*}
		\beta_\text{TWFE} & = & \frac{\E [ a_\text{TWFE,H}(G) \cdot \Prob(D=1\mid G) \cdot \tau_0(G) ]}{\E [ a_\text{TWFE,H}(G) \cdot \Prob(D=1\mid G) ]} \\
		& = & \frac{\sum_{k: \; \var(D\mid G=k)>0} \Prob(G=k) \cdot a_\text{TWFE,H}(k) \cdot \Prob(D=1\mid G=k) \cdot \tau_0(k)}{\sum_{k: \; \var(D\mid G=k)>0} \Prob(G=k) \cdot a_\text{TWFE,H}(k) \cdot \Prob(D=1\mid G=k)},
\end{eqnarray*}
\endgroup
where $\tau_0(k) = \E[Y(1) - Y(0) \mid G = k, D=1]$, it becomes clear that, for each group $k$ other than the always treated and the never treated,
\begin{align*}
&a_\text{TWFE,H}(k) \cdot \Prob(D=1\mid G=k)\\
&=  \sum_{j=1}^{k-1} \Prob(G=j) \cdot \Prob(D=1 \mid  G=k) \cdot \big[ \Prob(D=1 \mid  G=j) - \Prob(D=1 \mid  G=k) \big] \\
&\quad + \sum_{j=k+1}^{K} \Prob(G=j) \cdot \big[ 1 - \Prob(D=1 \mid  G=k) \big] \big[ \Prob(D=1 \mid  G=k) - \Prob(D=1 \mid  G=j) \big],
\end{align*}
and this, in turn, implies that
\begin{align}
&a_\text{TWFE,H}(k)\notag\\
&= \sum_{j=1}^{k-1} \Prob(G=j) \cdot \big[ \Prob(D=1 \mid  G=j) - \Prob(D=1 \mid  G=k) \big] \nonumber\\
&\quad + \sum_{j=k+1}^{K} \Prob(G=j) \cdot \big[ \Prob(D=1 \mid  G=k) - \Prob(D=1 \mid  G=j) \big] \cdot \frac{1 - \Prob(D=1 \mid  G=k)}{\Prob(D=1 \mid  G=k)}.\label{eq:DiD_weights_GB}
\end{align}

\subsection*{Equivalence of Weight Functions}

We now show that the weights obtained in equation \eqref{eq:DiD_weights_GB} are equivalent to those in Proposition \ref{prop:GB}. First, we rewrite the weights in \eqref{eq:DiD_weights_GB} as follows:
\begingroup
\allowdisplaybreaks
\begin{align*}
	&a_\text{TWFE,H}(k)\\
	&= \sum_{j=1}^{k-1} \Prob(G=j) \cdot \big[ \Prob(D=1 \mid  G=j) - \Prob(D=1 \mid  G=k) \big] \\
	&\quad + \; \sum_{j=k+1}^{K} \Prob(G=j) \cdot \big[ \Prob(D=1 \mid  G=k) - \Prob(D=1 \mid  G=j) \big] \cdot \frac{1 - \Prob(D=1 \mid  G=k)}{\Prob(D=1 \mid  G=k)} \\
	&= \Prob(D=1,G < k) - \Prob(G < k)\E[D\mid G=k] + (1 - \E[D\mid G=k])\Prob(G > k)\\
	&\quad  - \frac{1 - \E[D\mid G=k]}{\E[D\mid G=k]}\Prob(D=1,G>k)\\
	&= \Prob(D=1,G < k) - \Prob(G < k)\E[D\mid G=k] + \Prob(G>k) - \E[D\mid G=k]\Prob(G > k)\\
	&\quad - \frac{1}{\E[D\mid G=k]}\Prob(D=1,G>k) + \Prob(D=1,G>k)\\
	&= \Prob(D=1,G \neq k) - \E[D\mid G=k]\Prob(G \neq k) + \Prob(G > k)\left(1 - \frac{\E[D\mid G>k]}{\E[D\mid G=k]}\right)\\
	&= (\E[D] - \E[D\mid G=k]\Prob(G=k))  - \E[D\mid G=k]\Prob(G\neq k)\\
	&\quad + \Prob(G > k)\left(1 - \frac{\E[D\mid G>k]}{\E[D\mid G=k]}\right)\\
	&= \E[D] - \E[D\mid G=k] + \Prob(G > k)\left(1 - \frac{\E[D\mid G>k]}{\E[D\mid G=k]}\right).
\end{align*}
\endgroup
For $k \in \{2,\ldots,T\}$, the weights in Proposition \ref{prop:GB} are equal to
\begingroup
\allowdisplaybreaks
\begin{align}
  \E[1-\E[D\mid G]-\E[D\mid P]+\E[D]\mid G=k,D=1]
  &= \E[1-\E[D\mid G]-\E[D\mid P]+\E[D]\mid G=k,P\geq k]\notag\\
  &= 1-\E[D\mid G=k]-\E[D\mid P\geq k]+\E[D],
  \label{eq:DID_weights_GB_2}
\end{align}
\endgroup
because they are the average of the weights in Proposition \ref{prop:CDH} conditional on $G=k$ and $D=1$. The proof of Proposition \ref{prop:GB} explicitly shows that 
\begin{align*}
	&\E[1-\E[D\mid G]-\E[D\mid P]+\E[D]\mid G=k,D=1]\\
	&= \Prob(D=0\mid G=k)\cdot(\Prob(D=0\mid P \geq k) + \Prob(D=1\mid P < k)). 
\end{align*}
Let us look at the difference between the weights in \eqref{eq:DiD_weights_GB} and \eqref{eq:DID_weights_GB_2}. Fix $k \in \{2,\ldots,T\}$ and write:
\begingroup
\allowdisplaybreaks
\begin{align*}
	& \left(1 - \E[D\mid G=k] - \E[D\mid P \geq k] + \E[D]\right)\\
	&\quad  - \left(\E[D] - \E[D\mid G=k] + \Prob(G > k)\left(1 - \frac{\E[D\mid G>k]}{\E[D\mid G=k]}\right)\right)\\
	&= 1 - \E[D\mid P \geq k] - \Prob(G > k) + \frac{\E[D \1(G > k)]}{\E[D\mid G=k]}\\
	&= \E[\1(G \leq k)] - \frac{\E[D \1(P \geq k)]}{\E[\1(P \geq k)]} + \frac{\E[D \1(G > k)]}{\E[\1(k \leq P)]}\\
	&= \frac{1}{\E[\1(k \leq P)]}\left(F_G(k) \E[\1(k \leq P)] + \E[D \1(G > k)]  - \E[D\1(P \geq k)]\right)\\
	&= \frac{1}{\E[\1(k \leq P)]}\left(F_G(k) \E[\1(k \leq P)] + \E[\1(k < G \leq P)]  - \E[\E[D\mid P]\1(P \geq k)]\right)\\
	&= \frac{1}{\E[\1(k \leq P)]}\left(F_G(k) \E[\1(k \leq P)] + \E[\E[\1(k < G \leq P)\mid P]]  - \E[F_G(P)\1(P \geq k)]\right)\\
	&= \frac{1}{\E[\1(k \leq P)]}\left(F_G(k) \E[\1(k \leq P)] + \E[(F_G(P) - F_G(k))\1(P \geq k)]  - \E[F_G(P)\1(P \geq k)]\right)\\
	&= \frac{1}{\E[\1(k \leq P)]}\left(F_G(k) \E[\1(k \leq P)] + \E[F_G(P)\1(P \geq k)] - F_G(k)\E[\1(P \geq k)]\right.\\
	&\left. \quad  - \E[F_G(P)\1(P \geq k)]\right)\\
	&= 0.
\end{align*}
\endgroup
Therefore, the weights in Proposition \ref{prop:GB} and equations \eqref{eq:DiD_weights_GB} and \eqref{eq:DID_weights_GB_2} are all equal to one another.

\end{document}